\newtheorem{theorem}{Theorem}[section]
\newtheorem{definition}{Definition}[section]
\newtheorem{lemma}{Lemma}[section]
\newenvironment{proof}{\paragraph{{\bf Proof:}}}{\hfill$\square$}
\newcommand{\binom}[2]{\left(\!\!\begin{array}{c}{#1}\\{#2}\end{array}\!\!\right)}
\newcommand{\abs}[1]{\left|{#1}\right|}
\newcommand{\norm}[1]{\left\|{#1}\right\|}
\def\eqdef{\stackrel{\mathrm{def}}{=}}
\def\bR{\mathbb{R}}
\def\bN{\mathbb{N}}
\def\supp{\mathrm{supp}}
\def\fob{f_{\mathrm{o}}}
\def\fballs{f_{\mathrm{b}}}
\def\rmv{\mathrm{v}}
\def\rmh{\mathrm{h}}
\def\nv{n_{\rmv}}
\def\nh{n_{\rmh}}
\def\one{1}
\def\groupeG{\mathrm{g}}
\def\slalphaihat{\hat{s}^l_{\alpha_i}}
\def\salphaihat{\hat{s}_{\alpha_i}}
\def\tildev{\tilde{v}}
\def\alphahat{\hat{\alpha}}
\def\alphahatz{\alphahat_0}
\def\alphahati{\alphahat_i}
\def\ahat{\hat{a}}
\def\nbk{n_k}
\def\ErrS{\mbox{ErrS}}
\def\ErrAone{\mbox{ErrA}^{I}}
\def\ErrAtwo{\mbox{ErrA}^{II}}
\def\ErrLambda{\mbox{Err}\Lambda}
\def\ErrY{\mbox{ErrY}}
\def\ErrC{\mbox{ErrC}}
\def\ErrP{\mbox{ErrP}}
\def\alphahatone{\hat{\alpha}^{I}}
\def\alphaihatone{\alphahatone_i}
\def\alphazhatone{\alphahatone_0}
\def\alphahattwo{\hat{\alpha}^{II}}
\def\alphaihattwo{\alphahattwo_i}
\def\alphazhattwo{\alphahattwo_0}
\def\deltacMt{\left(\delta_{\vec{c}}\right)_{M,\vec{t}}}
\def\lone{\rm{a}}
\def\ltwo{\rm{b}}
\def\ftol{\left(\sum_{j=1}^{4} \delta_{\vec{c}^{\:l}_j}\right)}
\newcommand{\Kone}[1]{K_{1,#1}}
\def\Konej{\Kone{j}}
\def\Konejpo{\Kone{j+1}}
\def\KoneJ{\Kone{J}}
\def\Konejbbeta{K_{1,j(\bbeta)}}
\def\KonejbbetaI{K_{1,j(\bbeta)}}
\newcommand{\Ktwo}[1]{K_{2,#1}}
\def\Ktwoi{\Ktwo{i}}
\def\Ktwoipo{\Ktwo{i+1}}
\def\KtwoI{\Ktwo{I}}
\newcommand{\Pone}[1]{P_{1,#1}}
\def\Ponej{\Pone{j}}
\def\PoneJ{\Pone{J}}
\newcommand{\Ptwo}[1]{P_{2,#1}}
\def\Ptwoi{\Ptwo{i}}
\def\PtwoI{\Ptwo{I}}
\def\gammaBd{\gamma}
\def\GammaBB{\Gamma(\bbeta)}
\def\Ponetjgamma{\tilde{P}_{1,j,\gamma}}
\def\PonetJgamma{\tilde{P}_{1,j(\bbeta),\gammaBd}}
\def\PonetJgammabeta{\tilde{P}_{1,j(\bbeta),\gamma(\bbeta)}}
\def\Ptwotibbeta{\tilde{P}_{2,i,\bbeta}}
\def\PtwotIbbeta{\tilde{P}_{2,I,\bbeta}}
\newcommand{\redC}[1]{\textcolor[rgb]{0.00,0.00,0.00}{#1}}
\begin{document}
\title[\small Range conditions on distributions  and their possible application to geometric calibration]{\redC{Range conditions on distributions  and their possible application to geometric calibration in 2D parallel and fan-beam geometries}} 

\author{Anastasia Konik$^{1}$ and Laurent Desbat$^1$}
\address{$^1$TIMC-IMAG, Univ. Grenoble Alpes, 38700, La Tronche, France}

\begin{abstract}
\redC{In tomography, range conditions or data consistency conditions (DCCs) on functions have proven useful for geometric self-calibration, which involves identifying geometric parameters of acquisition systems based only on acquired radiographic images. These self-calibration methods using range conditions on functions typically require non-truncated data. In this work, we derive range conditions on distributions and demonstrate their application in addressing data truncation issues during the calibration process. We propose a novel approach based on range conditions on distributions, employing Dirac distributions to model markers within the field-of-view of an X-ray system.  Our calibration methods are based on the local geometric information from non-truncated projections of a marker set.  By applying range conditions to projections of sums of Dirac distributions, combined with specific calibration marker sets, we derive analytical formulas that enable the identification of geometric calibration parameters. We aim to present DCCs on distributions in tomography and explore the potential of DCCs on distributions as a possible tool in calibration. This approach represents one possible application, demonstrating how DCCs on distributions can effectively address challenges such as data truncation and incomplete marker set information. We present results for the 2D parallel geometry (Radon transform) and the 2D fan-beam geometry with sources on a line.}
 
\end{abstract}

\vspace{2pc}
\noindent{\it Keywords}: tomography, geometric calibration, fan-beam linogram, truncated data, distributions, range conditions, data consistency conditions.

\section{Introduction}

In tomography, X-ray projection data are acquired in order to reconstruct the  attenuation function of a measured object (in industrial applications) or patient organs (in medical applications). After physical corrections (gain, offset, log, etc.), the projection data can be modeled by integrals of the attenuation  function over the X-ray lines, see~\cite{Natterer:86} and the subsection~\ref{Sec2DRadon}.
To reconstruct the attenuation function by solving the corresponding linear inverse problem, the acquisition system (e.g., the scanner in computed tomography (CT)) must be perfectly known. In particular, a geometric calibration must be performed to precisely model the acquisition geometry.

Numerous authors have shown that images reconstructed from the data acquired from acquisition systems suffering from noncorrected geometric misalignments may have very strong artifacts~\cite{Hsieh99,VidalMigallon2008,panetta08,Kyriakou_2008,Kingston11,Ouadah16a}.
Many X-ray systems are geometrically calibrated "offline" (before their routine use), because their mechanics are robust, precise and reproducible. In this case, dedicated geometric calibration systems are used, very often including opaque markers, and the calibration is performed at a relatively low frequency, for example, for a cone-beam (CB) system, see~\cite{Rougee1993}. Geometric calibration methods of 3D CB systems are often derived from computer vision~\cite{Hartley2000,Menessier09}. Some geometric calibration methods use opaque markers of unknown position with strong {\em a priori} information (such as the source trajectory is circular) to identify analytically a small number of geometric parameters~\cite{NooClack00,Smekal2004}. Mathematical properties of the acquisition geometry (symmetries) have been used for the calibration of a turn table~\cite{Patel2009,Meng13} or an acquisition system with limited number of parameters~\cite{panetta08} without any use of calibration markers was calibrated with iterative methods for minimizing a cost function. 

The development of mobile C-arm in the early 2000s led to the design of "online" calibration methods, i.e., a geometric calibration of each acquisition. Mobile X-ray system mechanics are generally not sufficiently robust, precise and reproducible for 2D or 3D image reconstructions from projections with "offline" geometric calibration only. 
The geometric calibration is then performed during the projection acquisition.  It's called self-calibration when projection data only are used. Many self-calibration methods require a non-quadratic cost function to be minimized. Very often the number of geometric calibration parameters to be estimated must be reduced to the most sensitive ones in order to give a chance to the optimization method to compute a solution. 
In~\cite{Kyriakou_2008} the histogram entropy of the reconstructed image is minimized according to geometric calibration parameters. In~\cite{Kingston11} the cost is based on the sharpness of the reconstruction and only the 4 most sensitive geometric calibration parameters of a turn table are estimated. Some methods require a pre-existing 3D image (of the same patient) to be registered with the acquired 2D CB projection through the CB X-ray projection simulation (called Digitally Reconstructed Radiograph) and an iterative process~\cite{Ouadah16a,Otake_2013}.
These geometric self-calibration methods have the great advantage of not using opaque calibration markers perturbing the X-ray acquisition and the image reconstruction. Calibration marker geometries exploiting projection data redundancy have been designed for the 3D parallel geometry, see~\cite{desbat02b}, in order to reduce the impact of markers on the reconstruction, but at the very high cost of moving markers during the projections.

In tomography, self-calibration methods based on range conditions (also called data consistency conditions (DCCs), i.e., mathematical properties of projection data expressing redundancy) have been proposed. The first DCCs for CB CT are probably due to John~\cite{JohnFritz1938Tude}. From John's equations, DCCs have been established for X-ray sources belonging to a plane~\cite{FinchDavid1985CBRw} (see also~\cite{ClackdoyleRolf2013Fdcc} with further arguments, see~\cite{LevineMargoS2010CCfC} for sources on a line and~\cite{Nguyen_2024} for links between different DCCs associated with different CB geometries with co-planar sources) or on a helical trajectory~\cite{Patch2002Cout}. DCCs have also been constructed for sources on a sphere~\cite{finch1983}. 
In parallel geometry, DCCs known as Helgason-Ludwig consistency conditions (HLCCs)~\cite{helgason65,ludwig66,gelfand66} are based on projection moments. 
In 2D parallel geometry, Basu and Bressler studied and provided sufficient conditions for using the HLCCs of the 2D Radon transform to perform a geometric self-calibration. They have shown that the projection angles and detector shifts can be estimated from moments of order 0, 1, 2 and 3 only of sufficiently many projections with iterative methods. They also have shown that the data are consistent with rotations, translations and symmetries of the measured object resulting in a constant angular shift, a sinusoidal detector shift and a symmetry on the projection angle, see~\cite{basu100,basu200}.
Several self-calibration methods based on range conditions often require iterative optimization techniques to solve non-linear problems to identify the geometric calibration parameters~\cite{lesaint2017a, aichert15}. Range conditions for non-truncated fan-beam data with sources on a line were proposed in~\cite{clackdoyle13} and the corresponding self-calibration algorithm with a closed-form solution in~\cite{nguyen20}. Note that in~\cite{desbat14} a similar approach is proposed with a  closed-form solution for the self-calibration of detector shifts in the 2D parallel tomography.

One drawback of moment methods, such as HLCCs, is that the projections must not be truncated. The projection truncation prevents the computation of projection moments.  In many situations, X-ray projections are truncated due to the small detector size or in order to avoid unnecessary irradiation. Range conditions for truncated data in 2D parallel and fan-beam with sources on a circle geometries were introduced in~\cite{clackdoyle15} and in fan-beam geometry with an arbitrary source trajectory in~\cite{yu15}. However, these approaches imply data re-binning (or equivalently, a change of variables in moment integrals). This mixes the geometric calibration parameters. This is a huge barrier to find an efficient closed-form solution to the geometric calibration problem.

In computer vision, some geometric self-calibration methods of multiple views are based on singularities detected in images~\cite{Hartley2000}.  Bundle adjustment (BA) methods have been developed for the geometric calibration of multiple views containing projections of singularities of unknown positions. These methods have been adapted to the CB CT, e.g., see~\cite{konik21} with numerical experiments. However, BA needs iterative methods. Moreover, in tomography, the integral nature of the projection smooths out singularities, see~\cite{Natterer:86}, Chapter~II.5, p.42, and makes the singularity detection difficult. This is the reason why introducing opaque calibration markers in the scene simplifies the detection step. If the markers are in the field-of-view and their projections can be extracted in the X-ray projection images, then the geometric calibration can be done even if the X-ray projection images are truncated. This is a huge advantage.

Analytical solutions along with the design of particular calibration marker sets were recently proposed in~\cite{tischenko19} for the cone-beam geometry with a circular source trajectory and in~\cite{jonas18} for the fan-beam and cone-beam geometries with general source trajectories. In~\cite{tischenko19} the authors used a non-standard description of the geometry and required for their method two projections obtained after an accurate rotation of the calibration object by $180$ degrees. In~\cite{jonas18} non-standard descriptions of geometries were also considered for which parameters can be uniquely identified with linear systems of equations. In order to convert their geometrical parameters to standard ones, nonlinear inversions must be performed which are one-to-one in the ideal situation, but can cause problems in realistic settings. 

In digital subtraction angiography~\cite{unberath17}, DCCs are applied to the high contrast vascular tree projections (without introduction of markers). DCCs can be applied in the context of data truncation if singular objects of reduced support can be extracted in all projections (without being truncated). A similar idea has been already proposed in the context of motion estimation in projections and is based on mathematical properties of the projection of singularities~\cite{Katsevich2011} (it can be translated to a geometric calibration, because acquisition geometry modifications can be seen as object motions: e.g., a rigid transform of the acquisition system is equivalent to an inverse rigid movement of the measured object, see also~\cite{YuHengyong2007DCBR,LengShuai2007Mari}). However, these approaches need iterative methods for the parameter estimations.

\redC{In this work, we explore the application of data consistency conditions on distributions, applied to singularities resulting from opaque markers. Specifically, we address cases where only partial information is available from calibration marker sets.} \redC{We propose a simplified version of known DCCs on distributions in 2D parallel geometry, as well as new DCCs on distributions in 2D fan-beam geometry with sources on a line. The advantage of applying DCCs to distributions is that even when data are truncated, the DCCs of singularity projections can still be computed if these singularities remain within the field-of-view of each projection.}  We propose efficient closed-form formulas for the identification of large numbers of geometric calibration parameters.
\redC{The data are 2D images,} called sinograms in 2D parallel geometry and linograms in fan-beam geometry with X-ray sources on a line (see~\cite{desbat19,edholm87}). One axis is the projection angle in a sinogram or the source position parameter in a linogram; the second axis is the projection index for both geometries (see~\eref{eqn:radon2Ddef},~\eref{eqn:fanbeamdef} and \Fref{fig:geometries}).

Our paper is organized as follows. In the subsection~\ref{SecDistributions}, we  introduce our notation and recall basic facts on distributions. In the subsection~\ref{Sec2DRadon}, we recall the definitions of the 2D Radon transform of functions and distributions, known range conditions for this transform.
We also recall the fan-beam transform of functions and known range conditions for the fan-beam geometry when X-ray sources are on a line.
In the section~\ref{SecSelfCalib2DRadon}, we present \redC{a simplified version of DCCs on distributions and a possible application to} a geometric calibration problem from truncated data in 2D parallel geometry. We provide a special calibration marker set from which the detector shift and the angle of each projection are identified with an analytical formula.
Numerical simulations are presented at the end of the section.
The section~\ref{SecSelfCalibFanBeam} is dedicated to the fan-beam geometry with sources on a line.
In this geometry, we introduce \redC{new} necessary range conditions for multiple projections of distributions. \redC{One possible application of these new DCCs is the calibration with} the identification of the source position and the detector shift of each projection. With these proposed necessary range conditions for Dirac distributions in this geometry, we obtain a closed-form solution to the geometric calibration problem with a specific calibration marker set (with only partially known positions).
Numerical simulations are presented.
In the section~\ref{SecConclusions}, we discuss \redC{advantages and other perspectives of the application of DCCs on distributions}.

\section{Notation and basic facts}

\subsection{Distributions}
\label{SecDistributions}
Let $\Omega_N \subset \mathbb{R}^N$ denote an arbitrary open set, $N\in\mathbb{N}^\star$.
As in \cite{bony01,helgason99}, we will use:
\begin{itemize}
\item $\mathscr{D}\left(\Omega_N\right)$ the space  of smooth functions compactly supported in $\Omega_N$, $\mathscr{D}_N=\mathscr{D}\left( \mathbb{R}^N\right)$; 
\item $\mathscr{S}_N=\mathscr{S}\left( \mathbb{R}^N\right)$ the Schwartz space of rapidly decreasing smooth functions;  
\item  $\mathscr{E}\left( \Omega_N\right) = C^{\infty}\left(\Omega_N\right)$ the space of smooth functions on $\Omega_N$, $\mathscr{E}_N=C^{\infty}\left(\mathbb{R}^N\right)$;
\item $\mathscr{D}'_N$, $\mathscr{S}'_N$, $\mathscr{E}'_N$, $\mathscr{D}'\left( \Omega_N\right)$ and $\mathscr{E}'\left( \Omega_N\right)$ the spaces of all distributions acting on the test functions from $\mathscr{D}_N$, $\mathscr{S}_N$, $\mathscr{E}_N$, $\mathscr{D}\left( \Omega_N\right)$ and $\mathscr{E}\left( \Omega_N\right)$ respectively.
\end{itemize}

According to~\cite{helgason99}, we have the inclusions: $\mathscr{E}'_N \subset \mathscr{S}'_N \subset \mathscr{D}'_N$. Moreover, each function $f \in \mathscr{D}_N$ can be seen as the distribution $T_f$ acting on $\mathscr{E}_N$ with $T_f(\phi)=(f, \phi), \forall\phi\in\mathscr{E}_N$, where $(\cdot,\cdot)$ is the usual scalar product in $L^2\left(\mathbb{R}^N\right)$ defined by the integral
\begin{eqnarray}
    (f, \phi) \eqdef \int_{\mathbb{R}^N} f(\vec{x}) \phi(\vec{x}) \rmd\vec{x}.
\end{eqnarray}
Thus, $\mathscr{D}_N \subset \mathscr{E}'_N$. Also, each function $f \in \mathscr{E}_N$ can be seen as the distribution $T_f$ acting on $\mathscr{D}_N$ with $T_f(\phi)=(f, \phi), \forall\phi\in\mathscr{D}_N$, thus, $\mathscr{E}_N \subset \mathscr{D}'_N$.

We will focus on distributions from $\mathscr{E}'(\Omega_N)$. 
In the following sections~\ref{SecSelfCalib2DRadon} and~\ref{SecSelfCalibFanBeam}, we consider finite sums of Dirac distributions $\delta_{\vec{c}}$, $\vec{c}\in \Omega_N$,  acting as $\delta_{\vec{c}}(\phi) \eqdef \phi(\vec{c}), \forall\phi\in\mathscr{E}(\Omega_N)$. The distribution $\delta_{\vec{c}}$ is  of compact support $\left\{\vec{c}\right\}$,  $\delta_{\vec{c}}\in\mathscr{E}'(\Omega_N)$, see~\cite{gasquet99}.

According to \cite{bony01}, Chapter 7.2, p.135, a convolution of a distribution $T \in \mathscr{E}'_N$ with a smooth function of compact support $g$ can be defined as a function $f(\vec{x}) = \left(T(\vec{t}), g(\vec{x}-\vec{t})\right)$. Note that the action of the functional $T$ on the test function $\phi$ can be written as $T(\phi)$, $(T,\phi)$ or $\left( T(\vec{x}),\phi(\vec{x}) \right)$ if we need to emphasize that $T$ is a distribution acting on $\phi$ as a function of the variable $\vec{x}$. Moreover, we will use the following convolution properties:

\begin{theorem}[\cite{bony01}, 7.2, p.135]
Let $T \in \mathscr{E}'_N$, $g \in \mathscr{D}_N$, then $\supp(T *g) \subset \supp(T)+\supp(g)$.
\label{thm:21}
\end{theorem}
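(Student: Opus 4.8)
The plan is to work directly from the definition of the convolution recalled just before the statement, namely $f(\vec{x}) = (T*g)(\vec{x}) = \left(T(\vec{t}), g(\vec{x}-\vec{t})\right)$, and to show that $f$ vanishes identically on the complement of $\supp(T)+\supp(g)$, which turns out to be open.

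First I would fix $\vec{x} \in \mathbb{R}^N$ and examine the test function $\phi_{\vec{x}} \colon \vec{t} \mapsto g(\vec{x}-\vec{t})$, which belongs to $\mathscr{D}_N$ since $g \in \mathscr{D}_N$; its support is the reflected translate $\supp(\phi_{\vec{x}}) = \vec{x} - \supp(g)$. Next I would invoke the basic property of the support of a distribution: if the support of a test function is disjoint from $\supp(T)$, then $T$ applied to it is zero. Since $T \in \mathscr{E}'_N$ the set $\supp(T)$ is compact and $\supp(\phi_{\vec{x}})$ is closed, so disjointness yields a strictly positive distance between them, hence $\phi_{\vec{x}}$ vanishes on a neighbourhood of $\supp(T)$, which is exactly the situation in which $T(\phi_{\vec{x}}) = 0$. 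Consequently, $f(\vec{x}) \neq 0$ forces $\supp(T) \cap (\vec{x} - \supp(g)) \neq \emptyset$, i.e. there exist $\vec{s} \in \supp(T)$ and $\vec{u} \in \supp(g)$ with $\vec{x} = \vec{s}+\vec{u}$; equivalently $\vec{x} \in \supp(T)+\supp(g)$. The contrapositive reads: $\vec{x} \notin \supp(T)+\supp(g)$ implies $f(\vec{x}) = 0$.

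To upgrade this pointwise vanishing to the asserted support inclusion, I would use that $\supp(T)$ is compact (the defining feature of $\mathscr{E}'_N$) and $\supp(g)$ is compact, so $\supp(T)+\supp(g)$ is compact, hence closed, and its complement $U$ is open. By the previous step $f \equiv 0$ on $U$, so by the very definition of the support of a (smooth) function $\supp(f) \cap U = \emptyset$, i.e. $\supp(T*g) = \supp(f) \subset \supp(T)+\supp(g)$, as claimed.

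I expect the only delicate point to be the justification that $T(\phi_{\vec{x}})=0$ whenever $\supp(\phi_{\vec{x}})$ misses $\supp(T)$: this is precisely where the compactness of $\supp(T)$ coming from $T \in \mathscr{E}'_N$ and the standard cutoff/partition-of-unity argument underlying the definition of the support of a distribution are needed; the remainder of the argument is elementary bookkeeping with Minkowski sums of supports.
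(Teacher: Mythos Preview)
Your argument is correct and is precisely the standard proof one finds in textbooks on distributions (and presumably in the cited source \cite{bony01}). Note, however, that the paper does not supply its own proof of this statement: Theorem~\ref{thm:21} is quoted as a known result from \cite{bony01}, Chapter~7.2, p.135, and is used later (in the proof of Theorem~\ref{thm:dccradonour}) without further justification. So there is nothing in the paper to compare your proof against beyond the citation itself; your write-up would serve perfectly well as the omitted textbook argument.
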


\begin{theorem}[\cite{bony01}, 7.2, p.135]
Let $T \in \mathscr{E}'_N$, $g \in \mathscr{D}_N$, then $T * g$ belongs to $\mathscr{D}_N$.
\label{thm:22}
\end{theorem}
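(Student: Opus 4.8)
The plan is to verify separately the two properties characterising membership in $\mathscr{D}_N$ for the function $f \eqdef T * g$, where $f(\vec x) = \left(T(\vec t),\, g(\vec x - \vec t)\right)$: that $f$ has compact support, and that $f$ is $C^\infty$. Compact support is immediate from Theorem~\ref{thm:21}: since $T \in \mathscr{E}'_N$ and $g \in \mathscr{D}_N$, both $\supp T$ and $\supp g$ are compact, so $\supp f \subset \supp T + \supp g$ lies in the Minkowski sum of two compact sets, which is compact; hence $f$ vanishes outside a compact set.

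The substance is the smoothness of $f$, which I would obtain by differentiating under the pairing $\left(T(\vec t),\cdot\right)$. The structural fact to exploit is that, because $T \in \mathscr{E}'_N$, there are a compact $L \subset \bR^N$, an integer $m$ and a constant $C$ with $|T(\phi)| \le C \sum_{|\alpha| \le m} \sup_{\vec y \in L}|\partial^\alpha \phi(\vec y)|$ for all $\phi \in \mathscr{E}_N$. Fixing a coordinate direction $e_j$ and $h \ne 0$, linearity of $T$ gives
\[
\frac{f(\vec x + h e_j) - f(\vec x)}{h} = \left( T(\vec t),\ \frac{g(\vec x + h e_j - \vec t) - g(\vec x - \vec t)}{h}\right).
\]
Writing $\psi_{\vec x,h}$ for the test function inside this pairing, Taylor's formula with integral remainder applied in the $\vec x$-variable (which commutes with any $\partial^\alpha_{\vec t}$) shows that for each multi-index $\alpha$ the quantity $\sup_{\vec t}\big|\partial^\alpha_{\vec t}\psi_{\vec x,h}(\vec t) - \partial^\alpha_{\vec t}\big[(\partial_j g)(\vec x - \vec t)\big]\big|$ is bounded by $\sup_{|s|\le|h|,\,\vec t}\big|(\partial^\alpha\partial_j g)(\vec x + s e_j - \vec t) - (\partial^\alpha\partial_j g)(\vec x - \vec t)\big|$. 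Since $g \in \mathscr{D}_N$, every derivative of $g$ is uniformly continuous on $\bR^N$, so this supremum tends to $0$ as $h \to 0$, uniformly for $\vec x$ in bounded sets and for all $|\alpha| \le m$. Feeding this into the seminorm estimate for $T$ shows the difference quotient converges and
\[
\partial_{x_j} f(\vec x) = \big( T(\vec t),\ (\partial_j g)(\vec x - \vec t)\big) = (T * \partial_j g)(\vec x).
\]
A simpler version of the same estimate --- continuity of $\vec x \mapsto \partial^\beta g(\vec x - \vec t)$ in the seminorms above --- shows that $f$ and each $T * \partial^\beta g$ is continuous, so $f \in C^1(\bR^N)$.

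To finish, I would iterate: the right-hand side $T * \partial_j g$ has exactly the form of $f$, namely the convolution of a distribution in $\mathscr{E}'_N$ with a function in $\mathscr{D}_N$, so induction on $|\alpha|$ yields $\partial^\alpha f = T * \partial^\alpha g$ exists and is continuous for every $\alpha$, i.e. $f \in C^\infty(\bR^N) = \mathscr{E}_N$. Combined with the compact support above, $f = T * g \in \mathscr{D}_N$. I expect the main obstacle to be the first differentiation step, that is, justifying that $\partial_{x_j}$ may be pulled through the pairing: because the order $m$ of $T$ is not known in advance, one must control \emph{all} $\vec t$-derivatives of the difference quotient up to order $m$ simultaneously, not merely its values, and it is precisely the uniform continuity of the derivatives of the compactly supported $g$ that makes this control available. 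Once this step is in place, the compact-support claim and the induction are routine.
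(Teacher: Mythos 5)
Your proof is correct. Note, however, that the paper does not prove this statement at all: Theorem~\ref{thm:22} is quoted directly from~\cite{bony01} (Chapter~7.2, p.~135) as a known fact, so there is no internal argument to compare against. What you give is the standard textbook proof — compact support from $\supp(T*g)\subset\supp T+\supp g$, and smoothness by passing the difference quotient through the pairing using the finite-order seminorm bound for $T\in\mathscr{E}'_N$ together with uniform continuity of the derivatives of $g$, then induction to get $\partial^\alpha(T*g)=T*\partial^\alpha g$ — which is essentially the argument found in the cited reference, so it fills in exactly what the paper delegates to the literature.
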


Let us also recall that a sequence of distributions $\{T_j\}_{j \in \mathbb{N}}$ from $\mathscr{D}'_N$ converges to $T$ if for each $\phi \in \mathscr{D}_N$  $\lim_{j \to +\infty}(T_j, \phi)=(T, \phi)$, see~\cite{bony01}, Chapter 4.2, p.88. We also use the translation $\tau_{\vec{t}}$ by $\vec{t}\in\mathbb{R}^N$ of a distribution $T \in \mathscr{D}'_N$ defined  by $(\tau_{\vec{t}}T,\phi)\eqdef\left(T(\vec{x}-\vec{t}),\phi(\vec{x})\right)\eqdef\left(T(\vec{x}),\phi(\vec{x}+\vec{t})\right)=(T,\tau_{-\vec{t}}~\phi)$.

In the rest of the work $N=1$ or $N=2$.

\subsection{2D Radon and fan-beam transforms}
\label{Sec2DRadon} 
\begin{figure}
     \centering
     \begin{subfigure}[b]{0.49\textwidth}
         \centering
         \includegraphics[width=\textwidth]{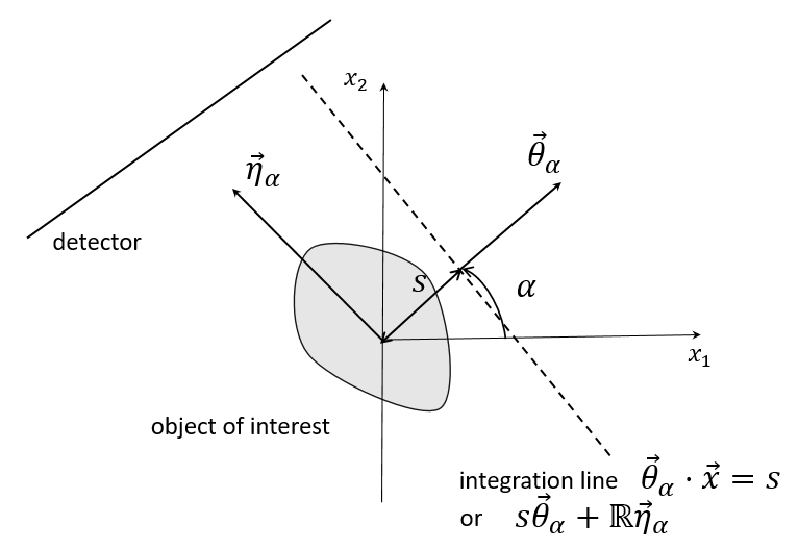}
         \caption{}
     \end{subfigure}
     \hfill
     \begin{subfigure}[b]{0.49\textwidth}
         \centering
         \includegraphics[width=\textwidth]{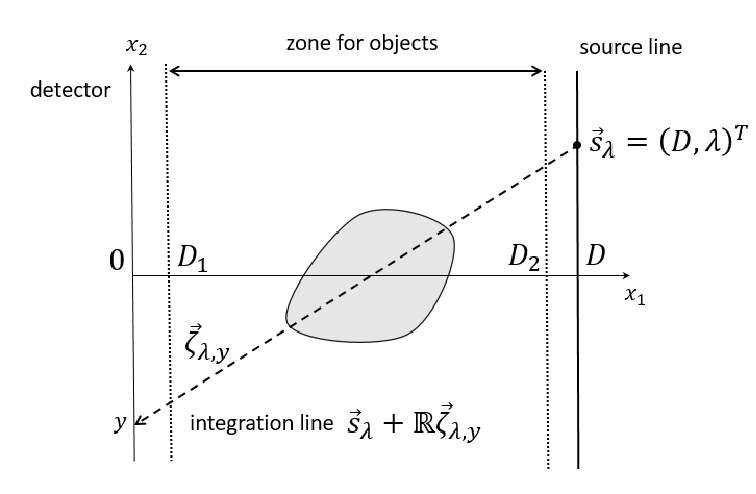}
         \caption{}
     \end{subfigure}
        \caption{(a) 2D parallel geometry, (b) fan-beam geometry with sources on a line.}
        \label{fig:geometries}
\end{figure}

We start with the 2D parallel geometry associated to the 2D Radon transform well known in tomography, see~\Fref{fig:geometries}~(a):
\begin{definition}
Let $f \in \mathscr{S}_2$, the Radon transform of $f$ is
\begin{eqnarray} 
\mathcal{R}f(\vec{\theta}_{\alpha},s)\eqdef \mathcal{R}f(\alpha,s) 
\eqdef \int_{\vec{\theta}_{\alpha} \cdot \vec{x}=s} f(\vec{x})\rmd \vec{x} = \int_{-\infty}^{+\infty} f(s\vec{\theta}_{\alpha}+l\vec{\eta}_{\alpha})\rmd l,
\label{eqn:radon2Ddef}
\end{eqnarray}
where  $s \in \mathbb{R}$, $\vec{\theta}_{\alpha}\eqdef (\cos{\alpha},\sin{\alpha})^T,$ $\vec{\eta}_{\alpha}\eqdef(-\sin{\alpha}, \cos{\alpha})^T$, $\alpha \in [0, 2\pi)$ ($\vec{\theta}_{\alpha}$ and $\vec{\eta}_{\alpha}$ belong to the unit circle $S^{1}$), $\vec{\theta}_{\alpha} \cdot \vec{x}=x_1 \cos{\alpha} + x_2 \sin{\alpha}$ is the usual inner product of two vectors.
\end{definition}

\begin{theorem}[Range conditions for $\mathcal{R}$ on functions,~\cite{Natterer:86}, II.4, p.36]
A function $p$ is the Radon transform of $f \in \mathscr{S}_2$ if and only if:
\begin{enumerate}
    \item $p \in \mathscr{S}\left(S^{1} \times \mathbb{R}\right)$,
    \item $p$ is even: $\forall s \in \mathbb{R}$, $\forall \vec{\theta}_{\alpha} \in S^{1}$ $p(-\vec{\theta}_{\alpha},-s)=p(\vec{\theta}_{\alpha},s)$,
    \item for $k=0,1,2,\dots$, for all $\vec{\theta}_{\alpha} \in S^1$ we have the moment conditions:
    \begin{eqnarray}
        \int_{-\infty}^{+\infty} p(\vec{\theta}_{\alpha},s) s^k \rmd s=\mathscr{P}_k(\vec{\theta}_{\alpha}),
    \label{eqn:hlccmomentcond}
    \end{eqnarray}
    $\mathscr{P}_k(\vec{\theta}_{\alpha})$ is a homogeneous polynomial of degree at most $k$ in the coordinates $\cos{\alpha}$ and $\sin{\alpha}$ of $\vec{\theta}_{\alpha}$.
\end{enumerate}
Moreover, $p(\vec{\theta}_{\alpha},s)=0$ for $|s| > a \Leftrightarrow f(\vec{x})=0$ for $\norm{\vec{x}} > a$.
\label{thm:hlcc}
\end{theorem}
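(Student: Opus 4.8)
The plan is to reduce the statement to the Fourier slice theorem, which asserts that the one-dimensional Fourier transform in $s$ of $\mathcal{R}f(\vec{\theta}_{\alpha},\cdot)$ equals the two-dimensional Fourier transform $\hat{f}$ of $f$ restricted to the line $\mathbb{R}\vec{\theta}_{\alpha}$, namely $\widehat{\mathcal{R}f}(\vec{\theta}_{\alpha},\sigma)=\hat{f}(\sigma\vec{\theta}_{\alpha})$. For $f\in\mathscr{S}_2$ this identity follows from Fubini applied to~\eref{eqn:radon2Ddef} together with $\hat{f}\in\mathscr{S}_2$.

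For the necessity direction, let $p=\mathcal{R}f$. Condition (ii) is immediate from~\eref{eqn:radon2Ddef}, since the line $\{\vec{\theta}_{\alpha}\cdot\vec{x}=s\}$ coincides with $\{(-\vec{\theta}_{\alpha})\cdot\vec{x}=-s\}$. For (iii) I would compute, by Fubini, $\int_{-\infty}^{+\infty}p(\vec{\theta}_{\alpha},s)\,s^k\,\rmd s=\int_{\mathbb{R}^2}f(\vec{x})\,(\vec{\theta}_{\alpha}\cdot\vec{x})^k\,\rmd\vec{x}$ and expand $(\vec{\theta}_{\alpha}\cdot\vec{x})^k=(x_1\cos\alpha+x_2\sin\alpha)^k$ by the binomial theorem; the coefficients $\int_{\mathbb{R}^2}f(\vec{x})\,x_1^{j}x_2^{k-j}\,\rmd\vec{x}$ are finite because $f\in\mathscr{S}_2$, so $\mathscr{P}_k$ is a homogeneous polynomial of degree $k$ in $\cos\alpha$ and $\sin\alpha$ (in particular of degree at most $k$). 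Condition (i) I would obtain from the slice theorem: $\sigma\mapsto\hat{f}(\sigma\vec{\theta}_{\alpha})$ is rapidly decreasing uniformly in $\alpha$ and depends smoothly on $\alpha$, and the inverse one-dimensional Fourier transform in $\sigma$ preserves these properties, giving $p\in\mathscr{S}(S^1\times\mathbb{R})$.

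The sufficiency direction is the core of the argument, and I expect the smoothness issue in it to be the main obstacle. Given $p$ satisfying (i)--(iii), let $g(\vec{\theta}_{\alpha},\sigma)$ be the one-dimensional Fourier transform of $p(\vec{\theta}_{\alpha},\cdot)$ and attempt to define a function $\hat{f}$ on $\mathbb{R}^2$ by $\hat{f}(\sigma\vec{\theta}_{\alpha})\eqdef g(\vec{\theta}_{\alpha},\sigma)$. Evenness (ii) makes this definition consistent, i.e. independent of the way a point is written as $\sigma\vec{\theta}_{\alpha}$. Because $p\in\mathscr{S}(S^1\times\mathbb{R})$, the function $\hat{f}$ so defined is automatically $C^\infty$ and rapidly decreasing on $\mathbb{R}^2\setminus\{0\}$; the delicate point is $C^\infty$ regularity at the origin. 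Here the moment conditions are exactly what is needed, because $\partial_\sigma^k g(\vec{\theta}_{\alpha},0)$ equals, up to a universal constant, $\int_{-\infty}^{+\infty}p(\vec{\theta}_{\alpha},s)\,s^k\,\rmd s=\mathscr{P}_k(\vec{\theta}_{\alpha})$, a homogeneous polynomial of degree $k$. I would then invoke a Whitney-type extension lemma: a function which is $C^\infty$ off the origin and whose $k$-th radial derivative at the origin is the restriction to $S^1$ of a homogeneous polynomial of degree $k$ extends to a $C^\infty$ function on $\mathbb{R}^2$. This yields $\hat{f}\in\mathscr{S}_2$, hence $f\eqdef\mathcal{F}^{-1}\hat{f}\in\mathscr{S}_2$, and running the slice theorem backwards gives $\mathcal{R}f=p$.

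For the support equivalence, one implication is immediate: if $f(\vec{x})=0$ for $\norm{\vec{x}}>a$, then any line $\{\vec{\theta}_{\alpha}\cdot\vec{x}=s\}$ with $\abs{s}>a$ misses $\supp f$, so $p(\vec{\theta}_{\alpha},s)=0$. Conversely, if $p(\vec{\theta}_{\alpha},s)=0$ for $\abs{s}>a$, then by the slice theorem $\sigma\mapsto\hat{f}(\sigma\vec{\theta}_{\alpha})$ is the Fourier transform of a smooth function supported in $[-a,a]$, hence extends to an entire function of $\sigma$ of exponential type at most $a$, with bounds uniform in $\alpha$ by the Schwartz estimates on $p$; the two-dimensional Paley--Wiener--Schwartz theorem then forces $\supp f\subset\{\norm{\vec{x}}\le a\}$.
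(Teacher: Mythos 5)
The paper does not prove this theorem: it is quoted verbatim from Natterer (II.4) as background, so the only meaningful comparison is with the classical proof in the cited reference. Your characterization argument is essentially that standard route. Necessity is fine (evenness from the definition, moments by Fubini and the binomial expansion, and the Schwartz property of $p$ via the slice theorem, although a direct estimate of $\partial_s^k\partial_\alpha^m\mathcal{R}f$ would be just as quick). In the sufficiency direction you correctly identify the crux: smoothness of $\hat{f}$ at the origin. Be aware, though, that the lemma you invoke needs more than the statement you give: knowing that the radial derivatives at $0$ are restrictions of homogeneous polynomials is a statement about the formal Taylor series only; the standard lemma (Helgason's homogeneity lemma, used in Natterer's proof) requires the joint smoothness of $(\vec{\theta}_{\alpha},\sigma)\mapsto\hat{p}(\vec{\theta}_{\alpha},\sigma)$ on $S^1\times\mathbb{R}$ together with the polynomial structure of all $\sigma$-derivatives at $\sigma=0$, plus uniform estimates, to conclude that $\xi\mapsto\hat{f}(\xi)$ is $C^\infty$ (indeed Schwartz) across the origin. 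You do have those hypotheses from (i) and (iii), so this is a matter of stating the right lemma, not a wrong idea.

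The genuine gap is in the last step, the implication ``$p(\vec{\theta}_{\alpha},s)=0$ for $\abs{s}>a$ $\Rightarrow$ $f(\vec{x})=0$ for $\norm{\vec{x}}>a$''. From the hypothesis and the one-dimensional Paley--Wiener theorem you only learn that $\sigma\mapsto\hat{f}(\sigma\vec{\theta}_{\alpha})$ extends, for each \emph{real} direction $\vec{\theta}_{\alpha}$, to an entire function of exponential type $a$ in the single complex variable $\sigma$. The set $\{\sigma\vec{\theta}_{\alpha}:\sigma\in\mathbb{C},\ \vec{\theta}_{\alpha}\in S^1\}$ is a three-real-dimensional subset of $\mathbb{C}^2$ with empty interior, so this does not produce an entire extension of $\hat{f}$ to $\mathbb{C}^2$ with the bound $C_N(1+\abs{\zeta})^{-N}e^{a\abs{\mathrm{Im}\,\zeta}}$, and the two-dimensional Paley--Wiener--Schwartz theorem therefore cannot be invoked as you do. In fact the statement ``exponential type $a$ along every complexified real ray implies $\supp f\subset\overline{B_a}$'' is precisely equivalent to Helgason's support theorem for the Radon transform, i.e.\ you have restated the claim rather than proved it; note also that in even dimension the naive route of shifting the radial contour in the Fourier inversion fails because of the non-analytic Jacobian factor $\abs{\sigma}$. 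This direction needs a genuinely different argument (Helgason's proof via spherical means, or the argument given in Natterer), or at least a substantial additional step exploiting the moment conditions before any Paley--Wiener conclusion can be drawn.
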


This theorem expresses data consistency conditions for the Radon transform, thus, provides the description of the range of the Radon transform when $f$ and $p$ are functions. These range conditions are known in the literature as Helgason-Ludwig consistency conditions (HLCCs)~\cite{helgason65,ludwig66}, see also~\cite{gelfand66}.

The generalization of the Radon transform of functions to distributions in arbitrary dimension was defined by Gelfand and colleagues in~\cite{gelfand66} in the middle of the last century. In~\cite{ramm96} the authors introduced DCCs for the Radon transform on distributions. The Radon transform of $f$ from $\mathscr{D}'_2$, $\mathscr{S}'_2$, $\mathscr{E}'_2$ is defined with the duality relation
\begin{eqnarray}
    (\mathcal{R}f,\phi)=\langle f,\mathcal{R}^{*} \phi \rangle, \: \mathcal{R}^{*} \phi(\vec{x})=  \int_{S^1} \phi(\vec{\theta}_{\alpha}, \vec{x} \cdot \vec{\theta}_{\alpha}) \rmd \vec{\theta}_{\alpha},
    \label{eqn:duality}
\end{eqnarray}
where $\phi$ is a test function on $S^1 \times \mathbb{R}$  of the corresponding space. If $f \in \mathscr{S}_2$, then $(\cdot,\cdot)$ represents the scalar product in $L^2(S^1 \times \mathbb{R})$, $\langle \cdot,\cdot \rangle$ represents the scalar product in $L^2(\mathbb{R}^2)$. For distributions of compact support the full analogue of HLCCs was proven by Ramm and Katsevich, see~\cite{ramm96}. We recall the result in 2D: 
\begin{theorem}[Range conditions for $\mathcal{R}$ on distributions,~\cite{ramm96}, 10.4, p.313]
\label{TheoRammAndKat}
A distribution $p$ is the Radon transform of $f \in \mathscr{E}'_2$ if and only if:
\begin{enumerate}
    \item $p \in  \mathscr{E}'\left(S^1 \times \mathbb{R}\right)$,
    \item $p$ is even: distributions $p(-\vec{\theta}_{\alpha},-s)$ and $p(\vec{\theta}_{\alpha},s)$ act identically,
    \item for $k=0,1,2,\dots$, $\forall \psi \in C^{\infty} (S^1)$ we have the moment conditions:
    \begin{eqnarray}
        \left(p(\vec{\theta}_{\alpha},s), s^k \psi(\vec{\theta}_{\alpha})\right)=\int_{S^{1}} \mathscr{P}_k(\vec{\theta}_{\alpha})\psi(\vec{\theta}_{\alpha}) \rmd \vec{\theta}_{\alpha},
    \label{eqn:dccRKmomentcond}
    \end{eqnarray}
    $\mathscr{P}_k(\vec{\theta}_{\alpha})$ is a homogeneous polynomial of degree at most $k$ in the coordinates $\cos{\alpha}$ and $\sin{\alpha}$ of $\vec{\theta}_{\alpha}$.
\end{enumerate}
Moreover, if $p(\vec{\theta}_{\alpha},s)=0$ for $|s| > a \Leftrightarrow \supp (f)$ is in the ball of the radius $a$ centered at the origin.
\label{thm:dccRK}
\end{theorem}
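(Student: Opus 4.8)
The plan is to prove the two implications separately, using the Fourier slice theorem as the organizing tool: it identifies the one-dimensional Fourier transform in $s$ of $\mathcal{R}f$ with the restriction of $\hat{f}$ to the line $\mathbb{R}\vec{\theta}_\alpha$, so the whole statement reduces to understanding when a smoothly varying family of entire functions $\hat{p}(\vec{\theta}_\alpha,\cdot)$ patches together into a single entire function on $\mathbb{R}^2$.

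For necessity, assume $p=\mathcal{R}f$ with $f\in\mathscr{E}'_2$ supported in the closed ball $B_a$ of radius $a$. Property~(i) comes for free: $\mathcal{R}^*$ maps $C^\infty(S^1\times\mathbb{R})$ continuously into $C^\infty(\mathbb{R}^2)$ (differentiate $\int_{S^1}\phi(\vec{\theta}_\alpha,\vec{x}\cdot\vec{\theta}_\alpha)\,\rmd\vec{\theta}_\alpha$ under the integral), so its adjoint sends $\mathscr{E}'_2$ into $\mathscr{E}'(S^1\times\mathbb{R})$; and if $\phi$ is supported in $\{|s|>a\}$ then $\mathcal{R}^*\phi$ vanishes on $B_a$ because $|\vec{x}\cdot\vec{\theta}_\alpha|\le\norm{\vec{x}}\le a$ there, hence $(p,\phi)=\langle f,\mathcal{R}^*\phi\rangle=0$, which already gives one direction of the support statement. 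Property~(ii) follows from $\mathcal{R}^*(\phi\circ\rho)=\mathcal{R}^*\phi$ for the reflection $\rho(\vec{\theta}_\alpha,s)=(-\vec{\theta}_\alpha,-s)$, via the substitution $\beta=\alpha+\pi$ in the defining integral. For the moment conditions~(iii), since $p$ has compact support it can be paired with the smooth function $s^k\psi(\vec{\theta}_\alpha)$; computing $\mathcal{R}^*(s^k\psi)(\vec{x})=\int_{S^1}\psi(\vec{\theta}_\alpha)(\vec{x}\cdot\vec{\theta}_\alpha)^k\,\rmd\vec{\theta}_\alpha$ and exchanging $f$ with the $S^1$-integral (a Fubini step, legitimate because $f$ has compact support and the integrand is jointly smooth) yields
\begin{eqnarray}
  \left(p(\vec{\theta}_\alpha,s),\,s^k\psi(\vec{\theta}_\alpha)\right)=\int_{S^1}\psi(\vec{\theta}_\alpha)\,\langle f(\vec{x}),(\vec{x}\cdot\vec{\theta}_\alpha)^k\rangle\,\rmd\vec{\theta}_\alpha ,
\end{eqnarray}
so $\mathscr{P}_k(\vec{\theta}_\alpha)=\langle f(\vec{x}),(x_1\cos\alpha+x_2\sin\alpha)^k\rangle$, which expands into a homogeneous polynomial of degree $k$ in $\cos\alpha,\sin\alpha$ with the moments of $f$ as coefficients.

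For sufficiency, starting from $p$ satisfying (i)--(iii), I would take the partial Fourier transform $\hat{p}(\vec{\theta}_\alpha,\sigma)$ in $s$; since $p$ has compact support in $s$, Paley--Wiener--Schwartz makes $\hat{p}(\vec{\theta}_\alpha,\cdot)$ the restriction to real $\sigma$ of an entire function, of exponential type $a$ when $p$ vanishes for $|s|>a$, whose successive Taylor coefficients at $\sigma=0$ are, up to constants, the $s$-moments of $p$, equal by~(iii) to the smooth functions $\mathscr{P}_k$. Then I would define $\widehat{F}$ on $\mathbb{R}^2\setminus\{0\}$ by $\widehat{F}(\sigma\vec{\theta}_\alpha):=\hat{p}(\vec{\theta}_\alpha,\sigma)$; this is consistent precisely because~(ii) forces $\hat{p}(-\vec{\theta}_\alpha,-\sigma)=\hat{p}(\vec{\theta}_\alpha,\sigma)$, and it extends analytically across the origin precisely because of~(iii): property~(ii) also forces $\mathscr{P}_k$ to have parity $(-1)^k$, so writing $\mathscr{P}_k=\sum_j h_{k-2j}$ as a sum of homogeneous components only degrees congruent to $k$ modulo $2$ appear, whence $\sigma^k\mathscr{P}_k(\vec{\theta}_\alpha)=\sum_j(\xi_1^2+\xi_2^2)^j h_{k-2j}(\vec{\xi})$ is a polynomial in $\vec{\xi}=\sigma\vec{\theta}_\alpha$, and summing the Taylor series of $\hat{p}$ in $\sigma$ displays a genuine power series in $\vec{\xi}$ for $\widehat{F}$ near the origin. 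Checking that $\widehat{F}$ is entire on $\mathbb{C}^2$ of exponential type then gives $f:=\mathcal{F}^{-1}\widehat{F}\in\mathscr{E}'_2$ (supported in $B_a$ in the refined case) by Paley--Wiener--Schwartz, and the distributional Fourier slice identity $\widehat{\mathcal{R}f}(\vec{\theta}_\alpha,\sigma)=\hat{f}(\sigma\vec{\theta}_\alpha)=\widehat{F}(\sigma\vec{\theta}_\alpha)=\hat{p}(\vec{\theta}_\alpha,\sigma)$ gives $\mathcal{R}f=p$ after inverting the $s$-transform. The ``moreover'' equivalence is exactly the Paley--Wiener--Schwartz dictionary between exponential type and support radius, used in both directions.

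The hard part will be the sufficiency direction: verifying that conditions~(ii)--(iii) are exactly what is needed to glue the slices $\hat{p}(\vec{\theta}_\alpha,\cdot)$ into a function that is smooth, indeed entire, through the origin of $\mathbb{R}^2$, and carrying this out for compactly supported distributions rather than Schwartz functions. This needs the distributional versions of the Fourier slice theorem and of Paley--Wiener--Schwartz, with simultaneous control of smoothness/analyticity and of the exponential type, together with care about partial Fourier transforms, restriction to slices, and the meaning of ``compact support'' on the manifold $S^1\times\mathbb{R}$.
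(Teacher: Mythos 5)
This statement is not proved in the paper at all: Theorem~\ref{TheoRammAndKat} is quoted verbatim as a known result of Ramm and Katsevich (\cite{ramm96}, Section~10.4), and the paper only uses it as background, so there is no in-paper proof to compare your argument against. Judged on its own, your necessity half is essentially complete and correct: the duality computation $(p,\phi)=\langle f,\mathcal{R}^*\phi\rangle$, the support argument via $|\vec{x}\cdot\vec{\theta}_\alpha|\le\norm{\vec{x}}$, the evenness via the substitution $\alpha\mapsto\alpha+\pi$, and the identification $\mathscr{P}_k(\vec{\theta}_\alpha)=\langle f(\vec{x}),(\vec{x}\cdot\vec{\theta}_\alpha)^k\rangle$ are exactly the standard arguments (and are the same manipulations the paper itself performs for its fixed-$\alpha$ Theorem in Section~3).

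The gap is in the sufficiency direction, which is the substance of the Ramm--Katsevich theorem and which your proposal only outlines. You correctly identify the strategy (partial Fourier transform in $s$, gluing the slices $\hat{p}(\vec{\theta}_\alpha,\cdot)$ into $\widehat{F}(\sigma\vec{\theta}_\alpha)$, parity of $\mathscr{P}_k$ making $\sigma^k\mathscr{P}_k(\vec{\theta}_\alpha)$ polynomial in $\vec{\xi}$), but the decisive step --- ``Checking that $\widehat{F}$ is entire on $\mathbb{C}^2$ of exponential type'' --- is precisely what has to be proved and is not automatic from what precedes it. Entirety of each slice $\hat{p}(\vec{\theta}_\alpha,\cdot)$ plus formal matching of Taylor coefficients at the origin does not by itself give joint analyticity, let alone a Paley--Wiener--Schwartz bound $|\widehat{F}(\zeta)|\le C(1+|\zeta|)^N e^{a|\mathrm{Im}\,\zeta|}$ on $\mathbb{C}^2$: one needs quantitative estimates $|\mathscr{P}_k(\vec{\theta}_\alpha)|\le C(1+k)^N a^k$ (uniform in $\alpha$), which must be extracted from the compact support and the order of $p$, to show that the series $\sum_k \frac{(-i)^k}{k!}\sigma^k\mathscr{P}_k(\vec{\theta}_\alpha)$ converges to an entire function of $\vec{\xi}$ with the right type, and one must also justify the distributional Fourier-slice identity $\widehat{\mathcal{R}f}(\vec{\theta}_\alpha,\sigma)=\hat{f}(\sigma\vec{\theta}_\alpha)$ for $f\in\mathscr{E}'_2$ and check that inverting the partial transform really recovers $p$ as a distribution on $S^1\times\mathbb{R}$ (not merely slice by slice). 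As written, the sufficiency half and the ``$\Leftarrow$'' part of the support statement are a program rather than a proof; the necessary uniform estimates and the two-variable analyticity argument are exactly the content of the cited Ramm--Katsevich proof and would need to be supplied.
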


The Radon transform, i.e., the parallel geometry, is rather a theoretical tool. The divergent beam transform such as the 2D fan-beam transform is more common in applications:
\begin{definition} The fan-beam transform of the function $f \in \mathscr{S}_2$ is defined by 
\begin{eqnarray} \mathcal{D}f(\lambda,\alpha) \eqdef \int_0^{+ \infty} f(\vec{s}_{\lambda}+l\vec{\zeta}_{\alpha})\rmd l, 
\end{eqnarray}
where  $\lambda\in\mathbb{R}$ is the trajectory parameter of the source $\vec{s}_{\lambda}\in \mathbb{R}^2$, $\vec{\zeta}_{\alpha}\in \mathbb{R}^2$ is the direction vector of the integration beam line $\vec{s}_{\lambda}+\mathbb{R}\vec{\zeta}_{\alpha}$.
\end{definition}
It's usually required that $\vec{\zeta}_{\alpha}\in S^1$, i.e., is a unit vector. It's also classical for the tomosynthesis geometry to consider $\vec{\zeta}_{\alpha}$ to be a non-unit vector (thus, to consider a weighted fan-beam transform), see~\cite{clackdoyle13}. 
When the source is on the line $x_1=D$, $D>0$, see \Fref{fig:geometries}~(b), the source trajectory is given by $\vec{s}_{\lambda}=(D,\lambda)^T$, $\lambda\in\mathbb{R}$. As in~\cite{clackdoyle13},  we consider that the detector is the $x_2$-axis. The direction vector of the integration line is parameterized by $y\in\mathbb{R}$, the position on the $x_2$-axis of a detector unit. Thus, the direction of the integration line $\vec{\zeta}_{\lambda,y}$ is given by $(0,y)^T-(D,\lambda)^T$. The integration line is $\vec{s}_{\lambda}+\mathbb{R}\vec{\zeta}_{\lambda,y}=(D, \lambda)^T+\mathbb{R}(-D,y-\lambda)^T$. In this case, the fan-beam data of the function $f$ is
\begin{eqnarray}
    \mathcal{D}f(\lambda,y) \eqdef \int_0^{+ \infty} f(D-lD, \lambda+ly-l\lambda)\rmd l.
    \label{eqn:fanbeamdef}
\end{eqnarray}
The following moment conditions were shown in \cite{clackdoyle13} for this geometry:
\begin{theorem}
Define $\mathscr{P}_k(\lambda)=\int_{- \infty}^{+ \infty} g(\lambda,y)y^k \rmd y$ for all $k=0,1,2,...$ The function $\mathscr{P}_k(\lambda)$ is a polynomial in $\lambda$ of degree $k$ if and only if $g=\mathcal{D}f$ for some function $f$.
\label{thm:dccclackdoyle}
\end{theorem}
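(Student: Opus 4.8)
The plan is to prove the two implications separately: necessity of the moment condition by a direct computation, and sufficiency by rebinning the fan-beam data to the parallel geometry and invoking Theorem~\ref{thm:hlcc}. Throughout I assume, as is standard for this geometry, that the object $f$ is supported on the detector side of the source line, i.e.\ in the half-plane $\{x_1<D\}$, so that the half-line integral in \eref{eqn:fanbeamdef} agrees with the integral over the whole line.

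For necessity, suppose $g=\mathcal{D}f$. The first step is to re-express the divergent integral \eref{eqn:fanbeamdef} as an integral over the first coordinate: substituting $u=D(1-l)$ gives $g(\lambda,y)=\frac{1}{D}\int_{-\infty}^{D}f\!\left(u,\frac{u\lambda+(D-u)y}{D}\right)\rmd u$. Next I insert this into $\mathscr{P}_k(\lambda)=\int g(\lambda,y)y^k\,\rmd y$, exchange the two integrations (legitimate by the decay and support of $f$), and in the inner integral change variables to $v=\frac{u\lambda+(D-u)y}{D}$. This turns $\mathscr{P}_k(\lambda)$ into $\int_{-\infty}^{D}\frac{1}{(D-u)^{k+1}}\int (Dv-u\lambda)^k f(u,v)\,\rmd v\,\rmd u$. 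Finally, expanding $(Dv-u\lambda)^k$ by the binomial theorem and collecting powers of $\lambda$ yields $\mathscr{P}_k(\lambda)=\sum_{j=0}^{k}c_{k,j}\,\lambda^{\,k-j}$, where each coefficient $c_{k,j}$ is an integral of $f$ against a fixed kernel and carries no dependence on $\lambda$. Hence $\mathscr{P}_k$ is a polynomial in $\lambda$ of degree at most $k$.

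For sufficiency, the key observation is that the line through the source $\vec{s}_{\lambda}=(D,\lambda)^T$ and the detector point $(0,y)^T$ has Radon parameters $\vec\theta_\alpha=\frac{1}{\sqrt{D^2+(y-\lambda)^2}}(y-\lambda,D)^T$ and $s=\frac{yD}{\sqrt{D^2+(y-\lambda)^2}}$, so that $(\lambda,y)\mapsto(\vec\theta_\alpha,s)$ is a diffeomorphism onto $\{\vec\theta_\alpha\in S^1:\sin\alpha>0\}\times\mathbb{R}$; along it one has $y-\lambda=D\cot\alpha$, $y=s/\sin\alpha$, and $\mathcal{D}f(\lambda,y)=\frac{\sin\alpha}{D}\,\mathcal{R}f(\vec\theta_\alpha,s)$. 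Given $g$ as in the statement, I would define $p(\vec\theta_\alpha,s)\eqdef\frac{D}{\sin\alpha}\,g(\lambda,y)$ on the upper half-cylinder and extend it by the evenness rule $p(-\vec\theta_\alpha,-s)=p(\vec\theta_\alpha,s)$. Then $p$ is smooth, rapidly decreasing, and even by construction, and a change of variables in the integral gives $\int p(\vec\theta_\alpha,s)s^k\,\rmd s=D\sin^k\!\alpha\int g(y-D\cot\alpha,y)\,y^k\,\rmd y$. If this last quantity is a homogeneous polynomial of degree $k$ in $\cos\alpha,\sin\alpha$ — which is exactly what happens when $c\mapsto\int g(y-c,y)y^k\,\rmd y$ is a polynomial of degree at most $k$ — then $p$ satisfies the moment conditions of Theorem~\ref{thm:hlcc}, and that theorem produces $f\in\mathscr{S}_2$ with $\mathcal{R}f=p$; its support statement pins $\supp f$ inside the half-plane $\{x_1<D\}$, and then $\mathcal{D}f=g$ follows from the weight identity above.

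The main obstacle is precisely the implication used in the last step: the hypothesis constrains the moments of $g$ only along the vertical slices $\{\lambda=\mathrm{const}\}$ of the data plane, whereas the Helgason--Ludwig conditions for $p$ require control of the oblique moments $\int g(y-c,y)y^k\,\rmd y$. The additional input is that ``being a fan-beam projection'' carries more information than these vertical moments alone: for each $\lambda$ the projection $g(\lambda,\cdot)$ is compactly supported in $y$, hence — by Paley--Wiener — determined by the numbers $\{\mathscr{P}_k(\lambda)\}_{k\ge0}$, and one has to show that the polynomiality of the $\mathscr{P}_k$ propagates to the oblique moments, with enough control on the growth of the coefficients in $k$ that the resulting moment problems for $f$ are solvable within $\mathscr{S}_2$. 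Making this precise, together with the half-line/whole-line support bookkeeping needed to pass from $\mathcal{R}f=p$ to $\mathcal{D}f=g$, is where the regularity and support assumptions tacit in the phrase ``for some function $f$'' do the real work.
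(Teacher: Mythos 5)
First, note that the paper does not prove Theorem~\ref{thm:dccclackdoyle} at all: it is quoted from~\cite{clackdoyle13}, so there is no internal proof to compare against. Your necessity direction is fine and is essentially the standard computation (it is the same change of variables $u=D-lD$, $v=\frac{u\lambda+(D-u)y}{D}$ and binomial expansion that the paper itself carries out later when it derives the moment conditions \eref{eqn:dccfanbeamourmomentcond} for distributions), giving $\mathscr{P}_k(\lambda)=\sum_i \binom{k}{i}\left(\int\!\!\int f(u,v)\frac{(Dv)^{k-i}(-u)^i}{(D-u)^{k+1}}\,\rmd v\,\rmd u\right)\lambda^i$, a polynomial of degree at most $k$.

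The sufficiency direction, however, has a genuine gap, and it is exactly the one you name yourself. The hypothesis controls only the moments of $g$ along the vertical slices $\{\lambda=\mathrm{const}\}$ of the data plane, whereas after your rebinning $p(\vec{\theta}_\alpha,s)=\frac{D}{\sin\alpha}g(\lambda,y)$ the Helgason--Ludwig moment at fixed $\alpha$ is $D\sin^k\!\alpha\int g(y-D\cot\alpha,y)\,y^k\,\rmd y$, an integral along the oblique line $\lambda=y-D\cot\alpha$. Nothing in your argument converts polynomiality of $\lambda\mapsto\mathscr{P}_k(\lambda)$ into polynomiality of $c\mapsto\int g(y-c,y)y^k\,\rmd y$; the Paley--Wiener remark only says that each slice $g(\lambda,\cdot)$ is determined by its moments (and even that presupposes a compact-support/regularity hypothesis on $g$ that the statement does not grant you), it does not make the oblique moments homogeneous polynomials --- indeed that implication is essentially the content of the theorem you are trying to prove, so the argument is circular at its crucial step. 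There are also secondary issues you gloss over: $p$ is only defined on the open half-circle $\sin\alpha>0$, and membership in $\mathscr{S}(S^1\times\mathbb{R})$, required by Theorem~\ref{thm:hlcc}, demands control of $g$ as $|\lambda|\to\infty$ (i.e.\ as $\alpha\to0,\pi$), which again is not available from the stated hypothesis. As written, the ``if'' half is a plan with an acknowledged hole rather than a proof; a correct sufficiency argument (as in~\cite{clackdoyle13}) must work directly with the fan-beam parametrization and precise assumptions on $g$ rather than passing through the parallel-beam HLCCs in this way.
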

We consider here $f$ as a smooth function of compact support. The support of $f$ $\supp(f)$ is completely in between the detector line and the source line in practice. To the best of our knowledge, DCCs for the fan-beam transform on distributions weren't discussed in the literature.

\section{\redC{Range conditions for the 2D Radon transform on distributions and geometric calibration}
}
\label{SecSelfCalib2DRadon}
\subsection{Mathematical results}

We first want to derive a simplified version of the known moment conditions~\eref{eqn:dccRKmomentcond}  of Theorem~\ref{TheoRammAndKat} for distributions. We consider the 2D Radon transform at fixed $\alpha$ and derive DCCs similar to those of Theorem~\ref{TheoRammAndKat}.
As in~\cite{Natterer:86}, we consider 
$\mathcal{R}_{\alpha}f(s)\eqdef\mathcal{R}f(\alpha,s)$, $\alpha\in[0,2\pi)$, $s\in\mathbb{R}$ for $f \in \mathscr{S}_2$, thus, $\mathcal{R}_{\alpha}f$ is a function of one variable $s$ for $\alpha$ fixed. 
For $f \in \mathscr{S}_2$ and $\phi \in \mathscr{S}_1$ with the classical change of variables $\vec{x}=s\vec{\theta}_{\alpha}+l\vec{\eta}_{\alpha}$:
\begin{eqnarray}
    \fl (\mathcal{R}_{\alpha}f,\phi)=\int_{-\infty}^{+\infty} \mathcal{R}_{\alpha}f(s) \phi(s) \rmd s = \int_{-\infty}^{+\infty} \int_{-\infty}^{+\infty} f(s\vec{\theta}_{\alpha}+l\vec{\eta}_{\alpha})\rmd l \phi(s) \rmd s \nonumber\\
    = \int_{\mathbb{R}^2} f(\vec{x}) \phi(\vec{x} \cdot \vec{\theta}_{\alpha}) \rmd \vec{x}= \langle f,\mathcal{R}_{\alpha}^{*} \phi \rangle,
    \label{eqn:dualityradonfuncour}
\end{eqnarray}
where $(\cdot,\cdot)$ is the canonical duality pairing for test functions on $\mathbb{R}$, $\langle \cdot,\cdot \rangle$ is the canonical duality pairing for test functions on $\mathbb{R}^2$. Thus, $\forall\alpha\in[0,2\pi)$ the adjoint operator $\mathcal{R}_{\alpha}^{*}$ for functions in $\mathscr{S}_1$:
\begin{eqnarray}
   \mathcal{R}_{\alpha}^{*} \phi(\vec{x}) = \phi(\vec{x} \cdot \vec{\theta}_{\alpha}), \: \forall\vec{x}\in\mathbb{R}^2.
   \label{eqn:adjointradon}
\end{eqnarray}
The definition \eref{eqn:adjointradon} can be generalized to $\phi \in \mathscr{E}_1$. We have $\mathcal{R}_{\alpha}^{*} \phi \in \mathscr{E}_2$ as a composition of two smooth functions $\vec{x}\mapsto \vec{x} \cdot \vec{\theta}_{\alpha}$ and $\phi$. The parallel projection of a distribution is classically defined by:
\begin{definition}
 The Radon transform $\mathcal{R}_{\alpha} f$ at fixed $\alpha\in [0,2\pi)$ of $f \in \mathscr{E}'_2$ is a distribution acting on the space $\mathscr{E}_1$ of test functions according to 
 \begin{eqnarray}
     (\mathcal{R}_{\alpha}f,\phi) \eqdef \langle f,\mathcal{R}_{\alpha}^{*} \phi \rangle.
     \label{eqn:dualityradonour}
 \end{eqnarray}
\end{definition}

The linearity of the operator $\mathcal{R}_{\alpha} f$ is obvious. The continuity is equivalent with its boundedness. This fact is explained and a proof of the boundedness can be found in Appendix~\ref{SecAppendix}, see Lemma~\ref{LemmaRadonBounded}. It's the proof of the first point~{\em(\ref{item_one_of_ParallelRangeCondTh})} in the next theorem.
\begin{theorem}[Necessary range conditions for $\mathcal{R}_{\alpha}$ on distributions, $\mathbf{\alpha\in[0,2\pi)}$]
If $f \in \mathscr{E}'_2$, the distribution $p_{\alpha}$ is the Radon transform $\mathcal{R}_{\alpha}f$ of the distribution $f$ at fixed $\alpha\in[0,2\pi)$, then:
\begin{enumerate}
    \item \label{item_one_of_ParallelRangeCondTh} $p_{\alpha} \in  \mathscr{E}'_1$,
    \item \label{item_two_of_ParallelRangeCondTh} $p_{\alpha}$ is even: the distributions  $p_{\alpha+\pi}(-s)$ and $p_{\alpha}(s)$ act identically on $\mathscr{E}'_1$ (or the corresponding distributions $p_{-\vec{\theta}_{\alpha}}(-s)$ and $p_{\vec{\theta}_{\alpha}}(s)$ act identically),
    \item \label{item_three_of_ParallelRangeCondTh} for $k=0,1,2,\dots$, $\forall \alpha$ the moment conditions are necessary fulfilled:
    \begin{eqnarray}
        \left(p_{\alpha} (s), s^k\right)=\mathscr{P}_k(\alpha),
    \label{eqn:dccradonourmomentcond}
    \end{eqnarray}
    where $\mathscr{P}_k(\alpha)$ is a homogeneous polynomial of degree at most $k$ in $\cos{\alpha}$, $\sin{\alpha}$.
\end{enumerate}
Moreover, $p_{\alpha}(s)=0$ for $|s| > a$ if $\supp (f)$ is in the ball $B_a$ of the radius $a$ centered at the origin.
\label{thm:dccradonour}
\end{theorem}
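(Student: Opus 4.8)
The plan is to push the three classical facts of Theorem~\ref{thm:hlcc} through the duality relation~\eref{eqn:dualityradonour}; the feature that makes everything go through in the distributional setting is that the test space is now $\mathscr{E}_1=C^\infty(\mathbb{R})$, so the monomials $s\mapsto s^k$ are admissible test functions and no cut-off is needed. For point~\emph{(\ref{item_one_of_ParallelRangeCondTh})}, linearity of $p_\alpha$ is immediate from~\eref{eqn:dualityradonour}; for continuity it suffices to show that $\phi\mapsto\mathcal{R}_\alpha^*\phi$, with $\mathcal{R}_\alpha^*\phi(\vec x)=\phi(\vec x\cdot\vec\theta_\alpha)$, maps $\mathscr{E}_1$ continuously into $\mathscr{E}_2$. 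By the chain rule, on any compact $K\subset\mathbb{R}^2$ the derivatives of $\mathcal{R}_\alpha^*\phi$ up to order $m$ are bounded by the derivatives of $\phi$ up to order $m$ on the compact interval $\{\vec x\cdot\vec\theta_\alpha:\vec x\in K\}$, with constants depending only on $m$ since $\vec\theta_\alpha\in S^1$; composing with the continuous functional $f\in\mathscr{E}'_2$ then gives $p_\alpha\in\mathscr{E}'_1$. This is the only step with genuine (if routine) analytic content, and it is the content of Appendix~\ref{SecAppendix}, Lemma~\ref{LemmaRadonBounded}.

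For the evenness point~\emph{(\ref{item_two_of_ParallelRangeCondTh})}, I would use that $\vec\theta_{\alpha+\pi}=-\vec\theta_\alpha$, so that for $\psi\in\mathscr{E}_1$ one has $\mathcal{R}_{\alpha+\pi}^*\psi(\vec x)=\psi(-\vec x\cdot\vec\theta_\alpha)$; taking $\psi(s)=\phi(-s)$ gives $\mathcal{R}_{\alpha+\pi}^*\psi=\mathcal{R}_\alpha^*\phi$. Hence, for every $\phi\in\mathscr{E}_1$,
\begin{eqnarray}
\big(p_{\alpha+\pi}(-s),\phi(s)\big)=\big(p_{\alpha+\pi}(s),\phi(-s)\big)=\langle f,\mathcal{R}_\alpha^*\phi\rangle=\big(p_\alpha(s),\phi(s)\big),
\end{eqnarray}
which is exactly the asserted identity of the distributions $p_{\alpha+\pi}(-s)$ and $p_\alpha(s)$.

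For the moment conditions~\emph{(\ref{item_three_of_ParallelRangeCondTh})}, since $p_\alpha\in\mathscr{E}'_1$ by the first point, it pairs with $s\mapsto s^k\in\mathscr{E}_1$, and $\mathcal{R}_\alpha^*(s^k)(\vec x)=(\vec x\cdot\vec\theta_\alpha)^k$. Expanding $(\vec x\cdot\vec\theta_\alpha)^k=(x_1\cos\alpha+x_2\sin\alpha)^k$ by the binomial theorem and using linearity of $f$,
\begin{eqnarray}
\big(p_\alpha(s),s^k\big)=\big\langle f,(\vec x\cdot\vec\theta_\alpha)^k\big\rangle=\sum_{j=0}^{k}\binom{k}{j}\big\langle f,x_1^{\,j}x_2^{\,k-j}\big\rangle\cos^j\!\alpha\,\sin^{k-j}\!\alpha\eqdef\mathscr{P}_k(\alpha),
\end{eqnarray}
which is a homogeneous polynomial of degree $k$ (a fortiori of degree at most $k$) in $\cos\alpha,\sin\alpha$, with coefficients the (finite) moments $\langle f,x_1^{\,j}x_2^{\,k-j}\rangle$ of the compactly supported distribution $f$. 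The support statement is then obtained from $|\vec x\cdot\vec\theta_\alpha|\le\norm{\vec x}$: if $\supp(f)$ lies in the ball $B_a$ and $\phi\in\mathscr{E}_1$ has support in $\{|s|>a\}$, then $|\vec x\cdot\vec\theta_\alpha|<a$ on a neighbourhood of the compact set $\supp(f)$, so $\mathcal{R}_\alpha^*\phi$ vanishes there and $(p_\alpha,\phi)=\langle f,\mathcal{R}_\alpha^*\phi\rangle=0$; hence $p_\alpha(s)=0$ for $|s|>a$.

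I do not expect a real obstacle: the only delicate book-keeping is the seminorm estimate for $\mathcal{R}_\alpha^*$ in point~\emph{(\ref{item_one_of_ParallelRangeCondTh})}, which is deferred to the appendix, and once it is available points~\emph{(\ref{item_two_of_ParallelRangeCondTh})}, \emph{(\ref{item_three_of_ParallelRangeCondTh})} and the support claim follow directly from the duality~\eref{eqn:dualityradonour} and elementary identities for $\vec\theta_\alpha$.
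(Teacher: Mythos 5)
Your proposal is correct, and for points \emph{(\ref{item_one_of_ParallelRangeCondTh})} and \emph{(\ref{item_three_of_ParallelRangeCondTh})} it coincides with the paper: the continuity of $p_\alpha$ rests on exactly the seminorm estimate for $\mathcal{R}_\alpha^*$ proved in Lemma~\ref{LemmaRadonBounded}, and the moment conditions follow from the same binomial expansion of $\langle f,(\vec{x}\cdot\vec{\theta}_\alpha)^k\rangle$. Where you genuinely diverge is point \emph{(\ref{item_two_of_ParallelRangeCondTh})} and the support statement. The paper proves these by mollification: it sets $f_\epsilon=W_\epsilon*f$, invokes Theorems~\ref{thm:21}, \ref{thm:22} and the function-level HLCCs (Theorem~\ref{thm:hlcc}) to get evenness and support in $[-a-\epsilon,a+\epsilon]$ for $\mathcal{R}_\alpha f_\epsilon$, and then transfers both properties through the distributional limit $\mathcal{R}_\alpha f_\epsilon\to\mathcal{R}_\alpha f$. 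You instead argue directly at the level of the adjoint: the identity $\mathcal{R}_{\alpha+\pi}^*\bigl(\phi(-\cdot)\bigr)=\mathcal{R}_\alpha^*\phi$, which follows from $\vec{\theta}_{\alpha+\pi}=-\vec{\theta}_\alpha$, gives evenness purely algebraically, with no regularization and no appeal to the classical range theorem; similarly the support claim comes straight from $\abs{\vec{x}\cdot\vec{\theta}_\alpha}\le\norm{\vec{x}}$. Your route is more elementary and self-contained; the paper's mollification argument has the advantage of transferring, in one stroke, any property already known for the Radon transform of smooth compactly supported functions, and it silently handles the boundary case discussed next.

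One small repair is needed in your support argument: if $\supp(f)$ meets the sphere $\norm{\vec{x}}=a$, it is not true that $\abs{\vec{x}\cdot\vec{\theta}_\alpha}<a$ on a neighbourhood of $\supp(f)$. The correct chain is: since $\supp\phi$ is contained in the open set $\{\abs{s}>a\}$, the function $\phi$ vanishes on an open set containing the compact interval $[-a,a]$, hence on $[-a-\epsilon,a+\epsilon]$ for some $\epsilon>0$; then $\mathcal{R}_\alpha^*\phi(\vec{x})=\phi(\vec{x}\cdot\vec{\theta}_\alpha)$ vanishes on $B_{a+\epsilon}$, which is a neighbourhood of $\supp(f)$, and a distribution of compact support annihilates smooth functions vanishing on a neighbourhood of its support, so $(p_\alpha,\phi)=0$. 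With that adjustment your proof of the support statement is complete.
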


\begin{proof}
The first point~{\em(\ref{item_one_of_ParallelRangeCondTh})} is proven in Appendix~\ref{SecAppendix}, Lemma~\ref{LemmaRadonBounded}.

For~{\em(\ref{item_two_of_ParallelRangeCondTh})}, for any $f \in \mathscr{E}'_2$ with $\supp (f) \subset B_a$, we classically define the function $f_{\epsilon}=W_{\epsilon} *f$, where $W_{\epsilon}\eqdef\epsilon^{-2}W_1(\vec{x}/\epsilon)$, $W_1 \in \mathscr{D} (B_1)$ is positive and $\int_{B_1} W_1 (\vec{x}) \rmd \vec{x}=1$. The function $f_{\epsilon} \in \mathscr{D} (B_{a+\epsilon})$ from Theorem \ref{thm:21} and  Theorem  \ref{thm:22}. From HLCCs (Theorem \ref{thm:hlcc}), 
$\mathcal{R}f_{\epsilon} \in  \mathscr{D} (Z_{2,a+\epsilon})$, where $Z_{2,a+\epsilon}=[0, 2\pi] \times [-a-\epsilon, a+\epsilon]$. At fixed $\alpha$, $\mathcal{R}_{\alpha}f_{\epsilon} \in \mathscr{D} ([-a-\epsilon,a+\epsilon])$. The same integral equality~\eref{eqn:dualityradonfuncour} is applied to the function $\mathcal{R}_{\alpha}f_{\epsilon}$ and $\forall \phi \in \mathscr{E}_1$
\begin{eqnarray}
    (\mathcal{R}_{\alpha}f_{\epsilon},\phi)=\langle f_{\epsilon},\mathcal{R}_{\alpha}^{*} \phi \rangle.
    \label{eqn:cond1}
\end{eqnarray}
According to~\cite{bony01}, Chapter 7.2, p.137, for such regularization function $W_{\epsilon}$ we have
\begin{eqnarray}
    \lim_{\epsilon \to 0} \langle f_{\epsilon},\mathcal{R}_{\alpha}^{*} \phi \rangle =\langle f,\mathcal{R}_{\alpha}^{*} \phi \rangle.
    \label{eqn:cond2}
\end{eqnarray}
From~\eref{eqn:cond1},~\eref{eqn:cond2} and the definition $(\mathcal{R}_{\alpha}f,\phi)=\langle f,\mathcal{R}_{\alpha}^{*} \phi \rangle$ \ $\lim_{\epsilon \to 0} \mathcal{R}_{\alpha}f_{\epsilon} = \mathcal{R}_{\alpha}f$. Thus, the properties of $\mathcal{R}_{\alpha}f_{\epsilon}$ can be transferred to $\mathcal{R}_{\alpha}f$. In particular,  $\mathcal{R}_{\alpha}f(s)=0$ for $|s| > a$, $\mathcal{R}_{\alpha}f$ is even (in the sense as described before), so~{\em(\ref{item_two_of_ParallelRangeCondTh})} is proven.  

For the moment conditions~{\em(\ref{item_three_of_ParallelRangeCondTh})}, with  $s^k \in \mathscr{E}_1$, $k\in\mathbb{N}$:
\begin{eqnarray}
 \fl \left(\mathcal{R}_{\alpha}f(s), s^k\right)=\Braket{ f(\vec{x}),\mathcal{R}_{\alpha}^{*} (s^k)(\vec{x})}=\Braket{ f(\vec{x}), (\vec{x} \cdot \vec{\theta}_{\alpha})^k  }  = \Braket{ f(\vec{x}), (x_1 \cos{\alpha}+x_2 \sin{\alpha})^k } \nonumber\\ = \Braket{ f(\vec{x}), \sum_{i=0}^{k}\binom{k}{i} (x_1   \cos{\alpha})^{k-i} (x_2 \sin{\alpha})^{i} }  \nonumber\\ = \sum_{i=0}^{k} \binom{k}{i} \Braket{ f(\vec{x}), x_1^{k-i} x_2^{i}} (\cos{\alpha})^{k-i} (\sin{\alpha})^{i} = \mathscr{P}_k(\alpha).
\end{eqnarray}
\end{proof}

In the following, a small, highly attenuating marker at $\vec{c}\in \mathbb{R}^2$ will be modeled by a Dirac distribution $\delta_{\vec{c}}$ of compact support. The  Radon transform at fixed $\alpha$ of $\delta_{\vec{c}}$ is given by
 \begin{eqnarray}
  \fl \left(\mathcal{R}_{\alpha}\delta_{\vec{c}}(s), \phi(s)\right)=\Braket{ \delta_{\vec{c}}(\vec{x}) ,\mathcal{R}_{\alpha}^{*} \phi(\vec{x}) }=\Braket{ \delta_{\vec{c}} (\vec{x}), \phi(\vec{x} \cdot \vec{\theta}_{\alpha})  }  =  \phi(\vec{c} \cdot \vec{\theta}_{\alpha}) \nonumber \\ = \delta_{\vec{c} \cdot \vec{\theta}_{\alpha}} (\phi).
\end{eqnarray}
The Radon transform at fixed $\alpha$ of the Dirac distribution at $\vec{c}$ is the Dirac distribution at $\vec{c} \cdot \vec{\theta}_{\alpha}$ in the projection.  \redC{Then~\eref{eqn:dccradonourmomentcond} gives us the idea that we can compute moments of the projection data in this case and they will have the same homogeneous polynomial behavior in $\cos \alpha$ and $\sin\alpha$ as for functions in~\eref{eqn:hlccmomentcond}}.

In a realistic setting, a marker isn't a Dirac distribution, it's a small ball, a real physical object. The Radon transform for the characteristic function $\chi_{B_{(\vec{c},R)}}$ of the  circular disc $B_{(\vec{c},R)}$ of radius $R>0$ centered at $\vec{c}=(c_1,c_2)^T$ (using the shifting property, e.g., see~\cite{deans10})  is
\begin{eqnarray}
    \mathcal{R}\chi_{B_{(\vec{c},R)}}(\alpha, s)= \cases{ 2\sqrt{R^2-(s-s_0(\alpha,c_1,c_2))^2} & if $|s-s_0|\leq R$ \\ 0 & otherwise,}
    \label{eqn:radonofdisk}
\end{eqnarray}
where $s_0(\alpha,c_1,c_2)=c_1\cos{\alpha}+c_2\sin{\alpha}=\vec{c}\cdot\vec{\theta}_{\alpha}$. The support of $\mathcal{R}_{\alpha}\chi_{B_{(\vec{c},R)}}$ is $[s_0-R, s_0+R]$, so our projected Dirac distribution $\delta_{\vec{c} \cdot \vec{\theta}_{\alpha}}$ is localized at the center $s_0$ of the $\alpha$-projection of the round marker. 

In the next subsection, we assume that $n$ markers are in the field-of-view of the scanner (in general, they have been introduced). Then the projected object is modeled by $\fob+\fballs$, where $\fob$ is the object attenuation function and $\fballs$ is the attenuation function of the markers (in general, small stainless steel balls). In the calibration process, we then replace $\fballs$ by $f=\sum_{j=1}^{n} \delta_{\vec{c}_j}$, where $\vec{c}_j \in\mathbb{R}^2$, $j=1,\ldots,n$, are the centers of the markers, and from the linearity of the Radon transform $\mathcal{R}_{\alpha}f=\sum_{j=1}^{n} \delta_{\vec{c}_j \cdot \vec{\theta}_{\alpha}}$. 

\subsection{Geometric calibration problem to solve}

Existing calibration methods for the 2D Radon transform derived from HLCCs are based on the moments of the projections~\eref{eqn:hlccmomentcond}. We need to compute the integrals from $-\infty$ to $+\infty$ over $s$ of the measured projection data. Thus, we can't work with truncated projections. For example, in~\Fref{fig:truncationradon}, the object of interest (the ellipse) isn't completely in the field-of-view. Projection data are truncated. We don't have all non-zero projection data. Moments in~\eref{eqn:hlccmomentcond} can not be computed.

\begin{figure}
     \centering
     \begin{subfigure}[b]{0.48\textwidth}
         \centering
         \includegraphics[width=\textwidth]{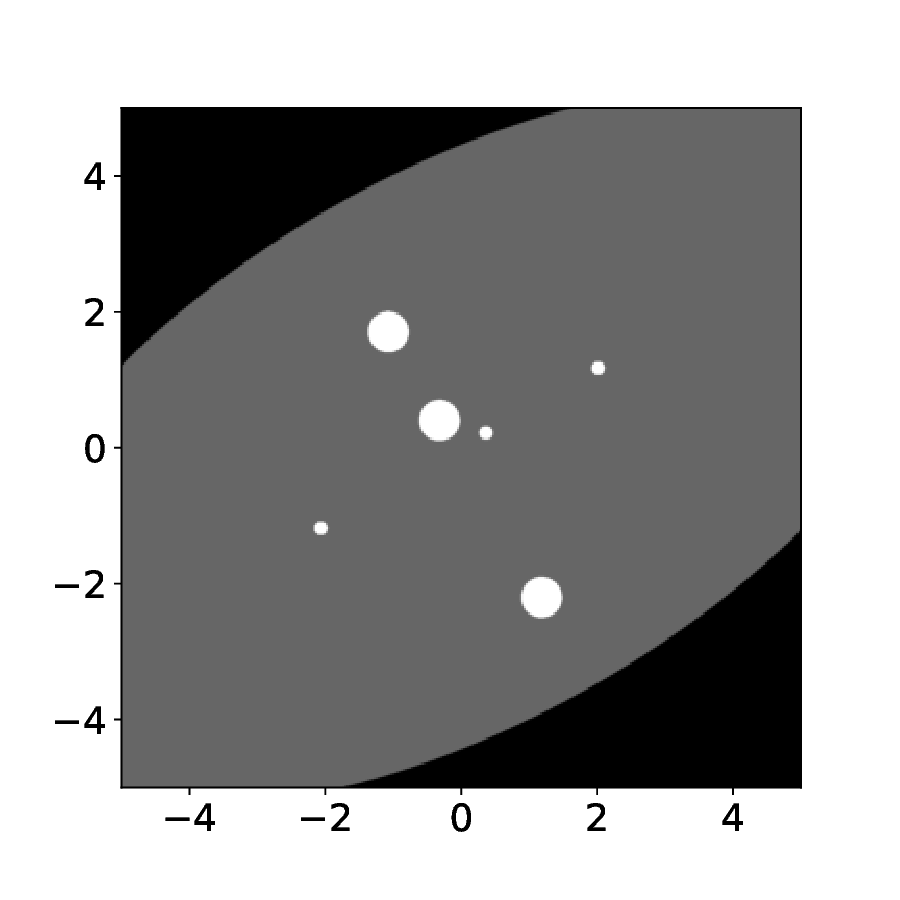}
         \caption{}
     \end{subfigure}
     \hfill
     \begin{subfigure}[b]{0.48\textwidth}
         \centering
         \includegraphics[width=\textwidth]{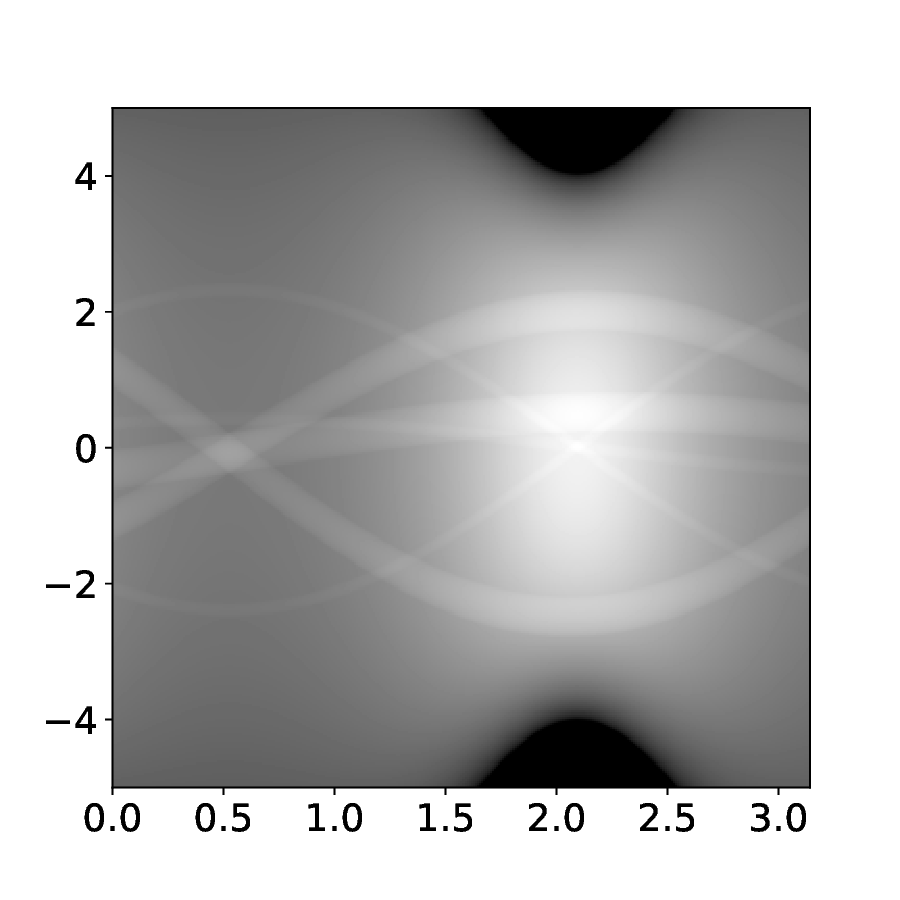}
         \caption{}
     \end{subfigure}
        \caption{(a) measured data $\fob+\fballs$: truncated object $\fob$ and round markers $\fballs$ in $\mathbb{R}^2$ (the horizontal axis is $x_1$, the vertical axis is $x_2$), (b) their {\em truncated} parallel projections (the horizontal axis is $\alpha$ (here $\alpha$ is measured according to the $x_1$ horizontal axis of the image (a)), the vertical axis is $s$). At fixed $\alpha$ each projection $\mathcal{R}_{\alpha}(\fob+\fballs)$ is truncated, but $\mathcal{R}_{\alpha}(\fballs)$ is {\em not truncated}.}
        \label{fig:truncationradon}
\end{figure}

We address here the shift and angle calibration problem, see~\cite{basu100}. We assume that  projection angles $\alpha_0, ... , \alpha_{P-1}$ of $P\in\mathbb{N}^\star$ projections are unknown and that each projection has been shifted by an unknown shift $s_{\alpha_i}\in\mathbb{R}$,
i.e.,  we have the measured data $\overline{m}_{i}(s)=\mathcal{R}_{\alpha_i}(\fob+\fballs)(s-s_{\alpha_i})$, $i=0,\ldots,P-1$, $s\in\mathbb{R}$, with both $\alpha_i$ and $s_{\alpha_i}$ unknown ($i=0,\ldots,P-1$, thus, $2P$ unknowns).
We assume that we can detect with some precision the (shifted) center projections $v_{ij}\in\mathbb{R}$ of $n$ markers in each projection:
\begin{eqnarray}
v_{ij}=\vec{c}_j \cdot \vec{\theta}_{\alpha_i}+s_{\alpha_i}, \: j=1,\ldots,n, \:  i=0,\ldots,P-1.
\label{eqn:vijdef}
\end{eqnarray} 
Thus, we can compute and introduce the measurement model in the distribution form 
\begin{eqnarray}
m_{i}(s)=\sum_{j=1}^{n} \delta_{v_{ij}} (s), \: i=0,\ldots,P-1,
\label{eqn:misvij}
\end{eqnarray} 
being just the projection of the markers modeled by the sum of Dirac distributions $\sum_{j=1}^{n} \delta_{\vec{c}_j}$, 
thus,  
\begin{eqnarray}
 m_{i}(s)=\mathcal{R}_{\alpha_i}\left(\sum_{j=1}^{n} \delta_{\vec{c}_j}\right)\left(s-s_{\alpha_i}\right)=\sum_{j=1}^{n} \delta_{\vec{c}_j \cdot \vec{\theta}_{\alpha_i}}\left(s-s_{\alpha_i}\right),
\label{eqn:mis}
\end{eqnarray} 
where $\vec{c}_j$ is the (unknown) true 2D center of the marker $j$.

In the next subsection, \redC{inspired by the moment conditions~\eref{eqn:dccradonourmomentcond}, we will leverage the moments of the projection data of Dirac distributions $m_i$ to develop a calibration algorithm based only on the information extracted from the non-truncated marker projections (the real projections $\overline{m}_i$ may be truncated).}  To solve the angle and shift calibration problem, we propose here a specific calibration marker set of $n$ balls.  We don't know the marker positions $\vec{c}_j$, $j=1,\ldots, n$, but we assume that the markers are lying on two perpendicular lines (here $\nv$ markers are supposed to be on one line, other $\nh$ markers are on a perpendicular line, $n = \nv+\nh$). The calibration task is to compute all $\alpha_i$ and $s_{\alpha_i}$, $i=0,\ldots,P-1$, based on the detected projected markers $v_{ij}$ in the data ${m}_i$, $i=0,\ldots,P-1$, $ j=1,\ldots, n$.

\subsection{Calibration method}

Let us first remind that the calibration problem based only on the measured data doesn't have a unique solution for functions.   After a rotation $R_{\gamma}$ (of angle $\gamma\in \mathbb{R}$) and a translation $\vec{t}\in \mathbb{R}^2$ of the function  $f$ we still have the same data, but for other acquisition parameters:
\begin{eqnarray}
    \mathcal{R}f_{R_{\gamma},\vec{t}} \left(\alpha-\gamma,s-\vec{\theta}_{\alpha}\cdot \vec{t}\right)=\mathcal{R}f (\alpha,s),
\end{eqnarray}
where $f_{R_{\gamma},\vec{t}}(\vec{x}) \eqdef f\left(R_{\gamma}\vec{x}+\vec{t}\right)$, $R_{\gamma} \eqdef \left( \matrix{ \cos{\gamma} & -\sin{\gamma} \cr
\sin{\gamma} & \cos{\gamma}} \right)$. 
Moreover, if  $f_{\sigma_1}(x_1,x_2) \eqdef f(x_1,-x_2)$, then $\mathcal{R}f_{\sigma_1}(\alpha, s) = \mathcal{R}f(-\alpha, s)$. Combining this with a rotation yields the symmetry ambiguity according to any axis, see~\cite{basu100}.
Thus, from the projection data only we can not identify the function $f$ better than up to a rotation, a translation and a symmetry and the acquisition parameters accordingly.

We can associate to $f \in \mathscr{E}'_2$ the distribution $f_{R_{\gamma},\vec{t}}$ defined by  $\Braket{ f_{R_{\gamma},\vec{t}}(\vec{x}), \phi(\vec{x}) }=\Braket{ f(\vec{x}), \phi\left(R_{\gamma}^{-1}(\vec{x}-\vec{t})\right) }$.
It's easy to see that ${(\delta_{\vec{c}})}_{R_{\gamma},\vec{t}}$ is the distribution $\delta_{R_{\gamma}^{-1}(\vec{c}-\vec{t})}$:
\begin{eqnarray}
\fl\Braket{ {(\delta_{\vec{c}})}_{R_{\gamma},\vec{t}}(\vec{x}), \phi(\vec{x}) }
=  \Braket{ \delta_{\vec{c}} (\vec{x}), \phi\left(R_{\gamma}^{-1}(\vec{x}-\vec{t})\right) } = \phi\left(R_{\gamma}^{-1}(\vec{c}-\vec{t})\right)\nonumber\\ =\Braket{ \delta_{R_{\gamma}^{-1}(\vec{c}-\vec{t})} (\vec{x}), \phi(\vec{x}) }.
\label{eqn:distributionRT}
\end{eqnarray}
Then we have the same ambiguity for the Dirac distribution $f=\delta_{\vec{c}}$ (thus, also for finite sums of Dirac distributions):
\begin{eqnarray}
\fl \left(\mathcal{R}_{\alpha}{(\delta_{\vec{c}})}_{R_{\gamma},\vec{t}}(s), \phi(s)\right)= \Braket{ {(\delta_{\vec{c}})}_{R_{\gamma},\vec{t}}(\vec{x}), \phi(\vec{x} \cdot \vec{\theta}_{\alpha}) } = \Braket{ \delta_{R_{\gamma}^{-1}(\vec{c}-\vec{t})} (\vec{x}), \phi(\vec{x} \cdot \vec{\theta}_{\alpha}) } \nonumber\\ 
= \phi\left((R_{\gamma}^{-1}\vec{c}-R_{\gamma}^{-1}\vec{t})\cdot \vec{\theta}_{\alpha}\right)=\phi\left((\vec{c}-\vec{t})\cdot R_{\gamma}\vec{\theta}_{\alpha}\right)=\phi\left((\vec{c}-\vec{t})\cdot \vec{\theta}_{\alpha'}\right) \nonumber\\ =\left( \delta_{\vec{c} \cdot \vec{\theta}_{\alpha'}} (s), \phi(s-\vec{t}\cdot \vec{\theta}_{\alpha'})\right) =\left(\mathcal{R}_{\alpha'} \delta_{\vec{c}}(s), \phi(s-\vec{t}\cdot \vec{\theta}_{\alpha'})\right) \nonumber\\ =\left(\mathcal{R}_{\alpha'} \delta_{\vec{c}}(s+\vec{t}\cdot \vec{\theta}_{\alpha'}), \phi(s)\right),
\end{eqnarray}
where $\alpha'=\alpha+\gamma$. We have equivalently
\begin{eqnarray}
    \left(\mathcal{R}_{\alpha-\gamma}\left({(\delta_{\vec{c}})}_{R_{\gamma},\vec{t}}\right)\right)(s-\vec{\theta}_{\alpha}\cdot \vec{t})=\mathcal{R}_{\alpha}\delta_{\vec{c}} (s).
\end{eqnarray}
Thus, we have the same projection data as at angle $\alpha$ and $s$ after a rotation of angle $\gamma$ and a translation $\vec{t}$ of a Dirac, but with other parameters, namely at angle $\alpha-\gamma$ and at $s-\vec{\theta}_{\alpha}\cdot \vec{t}$. With projection data only, 
we have a non-unique solution of the geometric calibration problem. Since we have these degrees of freedom "up to a rigid transform" for functions and for Dirac distributions, we can choose to work in any coordinate system and identify the solution accordingly.  

As for functions, we consider the symmetry $(\delta_{\vec{c}})_{\sigma_1}$: ${(\delta_{\vec{c}})}_{\sigma_1}=\delta_{(c_1,-c_2)}$.
We have the ambiguity for the Radon transform:
\begin{eqnarray}
\fl (\mathcal{R}_{\alpha}{(\delta_{\vec{c}})}_{\sigma_1}(s), \phi(s))= \langle {(\delta_{\vec{c}})}_{\sigma_1}(\vec{x}), \phi(\vec{x} \cdot \vec{\theta}_{\alpha}) \rangle = \langle \delta_{(c_1,-c_2)} (\vec{x}), \phi(\vec{x} \cdot \vec{\theta}_{\alpha}) \rangle \nonumber\\ 
= \phi(c_1\cos{\alpha}-c_2\sin{\alpha})=\phi(\vec{c}\cdot \vec{\theta}_{-\alpha}) =\langle \delta_{\vec{c}} (\vec{x}), \phi(\vec{x}\cdot \vec{\theta}_{-\alpha}) \rangle\nonumber \\=(\mathcal{R}_{-\alpha} \delta_{\vec{c}}(s), \phi(s)).
\label{eqn:symmetry1}
\end{eqnarray}
Combining with the angular ambiguity, a symmetry according to any axis leads to similar results. In particular, for $\sigma_{2}$ such that $(\delta_{\vec{c}})_{\sigma_2}=\delta_{(-c_1,c_2)}$:
\begin{eqnarray}
    \mathcal{R}_{\alpha}{(\delta_{\vec{c}})}_{\sigma_2}=\mathcal{R}_{\pi-\alpha} \delta_{\vec{c}}.
\label{eqn:symmetry2}
\end{eqnarray}

In our calibration algorithm, we always work in coordinate systems with the same perpendicular directions, each is associated with a group of aligned markers. The first direction will be called "horizontal" and  is associated with the group of the so-called "horizontal" aligned markers. The perpendicular direction will be called "vertical" and is associated with the group of the so-called "vertical" aligned markers. In the following, we use the superscript $l=\rmh$ or $l=\rmv$, $\rmh$ will correspond to the "horizontal" group of markers and $\rmv$ to the "vertical" one. We will use several coordinate system centers in order to make the geometric calibration parameter identification more efficient. As we will see, it simplifies the angle identification with the moments of order 2 and 3 (for each group of markers independently).

We assume that we can detect in the projections $\nh$ centers of the "horizontal" markers $\vec{c}^{\:\rmh}_j$, $j=1,\ldots,\nh$, and  $\nv$ centers of the "vertical" markers $\vec{c}^{\:\rmv}_j$, $j=1,\ldots,\nv$, the total number of markers is $n$.  For example, the radii $R^{\:\rmh}$ of the "horizontal" marker disks can be chosen smaller compare to the radii $R^{\:\rmv}$ of the "vertical" markers as in~\Fref{fig:truncationradon}~(a).  
In practical experiments, the horizontal or vertical projections of the horizontal or vertical marker centers are identical. But with the size difference of the horizontal and vertical markers we can easily decide (up to $\pi$) if the projection is horizontal or vertical. More generally, directions close to these projection angles can induce detection problems as marker projections then overlap in directions close to horizontal or to vertical. But here again we know that the projection direction is close to either horizontal or vertical. Anyway, this is a limitation of the proposed approach due to the detection process, which we haven't further investigated.

\subsubsection{Shift correction.} The first step of our algorithm is to compute all shifts $s_{\alpha_i}, i=0,\ldots,P-1$. We use the same approach~\cite{desbat14} as for functions, but for distributions of compact support.  It's based on the well-known property that the projection of the center of mass of a function $f$ is the center of mass of its projection. This property is also valid for distributions of compact support: let $f \in \mathscr{E}'_2$ and define the mass of  $f$ by  $\langle f, \one \rangle$ (where $\one$ is the constant function equal to 1) and $\vec{c}_f \eqdef \frac{\langle f, \vec{x} \rangle}{\langle f, \one \rangle}$ the center of mass of $f$. For all $\alpha\in[0,2\pi)$,  the mass of the projection $\mathcal{R}_{\alpha}f \in \mathscr{E}'_1$ is the mass of $f$: 
\begin{eqnarray}
\left(\mathcal{R}_{\alpha}f,\one\right)
={\langle f, \mathcal{R}_{\alpha}^{*}\one \rangle} 
={\langle f, \one \rangle}, 
\end{eqnarray}
the center of mass of the projection is the projection of the center of mass:
\begin{eqnarray}
c_{\mathcal{R}_{\alpha}f}
\eqdef\frac{\left(\mathcal{R}_{\alpha}f(s),s\right)}{\left(\mathcal{R}_{\alpha}f,\one\right)}
=
\frac{\langle f, \vec{x} \cdot \vec{\theta}_{\alpha}\rangle}{\langle f, \one \rangle} 
= \vec{c}_f \cdot  \vec{\theta}_{\alpha}.
\end{eqnarray}

We suppose that we can identify in each projection the projections of a group g of $n_\groupeG$ markers, i.e., 
we can identify $v_{ij}$, $j\in G$, the projected center of the marker $j$ in the projection $i$ in the projection data 
\begin{eqnarray}
m^{\:\groupeG}_{i}(s) = \mathcal{R}_{\alpha_i}\sum_{j\in G} \delta_{\vec{c}_j}(s-s_{\alpha_i})
= \sum_{j\in G}\delta_{\vec{c}_j \cdot \vec{\theta}_{\alpha_i}}(s-s_{\alpha_i}),
\end{eqnarray}
where $G \subset \mathbb{N}$ is the index set of the markers of the group g.
The order $0$  moment of $m^{\:\groupeG}_{i}$ is the mass of the compact support distribution:
\begin{eqnarray}
\fl M^{\:\groupeG}_0(i)=(m^{\:\groupeG}_{i}(s), 1)
=\left(\sum_{j\in G} \delta_{\vec{c}_j \cdot \vec{\theta}_{\alpha_i}}(s-s_{\alpha_i}), 1\right)
=\left(\sum_{j\in G} \delta_{\vec{c}_j \cdot \vec{\theta}_{\alpha_i}}(s), 1\right)
\nonumber\\  =n_\groupeG,
\end{eqnarray}
the order $1$ moments of $m^{\:\groupeG}_{i}$:
\begin{eqnarray}
\fl M^{\:\groupeG}_1(i)=(m^{\:\groupeG}_{i}(s), s)= \left(\sum_{j\in G} \delta_{\vec{c}_j \cdot \vec{\theta}_{\alpha_i}}(s-s_{\alpha_i}), s\right)
=\left(\sum_{j\in G} \delta_{\vec{c}_j \cdot \vec{\theta}_{\alpha_i}}(s), s+s_{\alpha_i}\right) \nonumber\\
=\sum_{j\in G} \vec{c}_j \cdot \vec{\theta}_{\alpha_i}+s_{\alpha_i}n_\groupeG.
\label{eqn:moment_order_one_G_par}
\end{eqnarray}
From~\eref{eqn:moment_order_one_G_par} we have
\begin{eqnarray}
    s_{\alpha_i}=\frac{M^{\:\groupeG}_1(i)-\sum_{j\in G} \vec{c}_j \cdot \vec{\theta}_{\alpha_i}}{n_\groupeG}.
\end{eqnarray}
Thus, if we choose the center of mass of the group g as the center of the coordinate system, then $\sum_{j\in G} \vec{c}_j = \vec{0}$ and
\begin{eqnarray}
    s^{\:\groupeG}_{\alpha_i}=\frac{M^{\:\groupeG}_1(i)}{n_\groupeG}
    =\frac{\left(\sum_{j\in G} \delta_{v_{ij}}(s), s\right)}{n_\groupeG}
    =\frac{\sum_{j\in G}v_{ij}}{n_\groupeG}.
    \label{eqn:shiftcorrection}
\end{eqnarray}

If the marker set is either the "horizontal" ($l=\rmh$) or the "vertical" ($l=\rmv$) group, for the sake of simplicity, we will denote $ \vec{c}^{\:l}_j$ ($j=1,\ldots,n_l$, $n_l$ is either $\nh$ or $\nv$) the marker locations and $v^l_{ij}$ ($j=1,\ldots,n_l$) the detected locations of their corresponding projections, $i=0,\ldots,P-1$. If we choose the center of mass of the marker group $l$ as the center of the coordinate system ($l$ being either $\rmh$ or $\rmv$), then the shift corrections $s^l_{\alpha_i}$ and the corresponding estimators $\slalphaihat$ for each projection in each coordinate system are 
\begin{eqnarray}
    \eqalign{s^l_{\alpha_i}
    =\frac{M^l_1(i)-\sum_{j=1}^{n_l} \vec{c}^{\:l}_j \cdot \vec{\theta}_{\alpha_i}}{n_l}
    =\frac{M^l_1(i)}{n_l}
    =\frac{\left(\sum_{j=1}^{n_l} \delta_{v^l_{ij}}(s), s\right)}{n_l}, \cr 
    \slalphaihat = \frac{\sum_{j=1}^{n_l} v^l_{ij}}{n_l}.}
\label{eqn:slalphai}
\end{eqnarray}
 
By default, we work in the coordinate system centered at the center of mass of {\em all} markers, then the estimator $\salphaihat$ of $s_{\alpha_i}$ is given by
\begin{eqnarray}
\label{eqn:salphai}
    \salphaihat = \frac{M_1(i)}{n}=\frac{\sum_{j=1}^{n} v_{ij}}{n}, \: i=0,\ldots,P-1.
\end{eqnarray}

However, we compute three different possible shifts for each projection ($\slalphaihat, l\in \left\{\rmh,\rmv\right\}$, see~\eref{eqn:slalphai}, and $\salphaihat$, see~\eref{eqn:salphai}), one for each coordinate system center. With this, we can compensate for the shifts in projections in order to make the projection angle estimation formulas more simple. 
We separately  perform shift corrections by $s^{\rmh}_{\alpha_i}$ associated to the center of mass of the "horizontal" group of markers and by $s^{\rmv}_{\alpha_i}$ associated to the center of mass of the "vertical" group of markers in order to obtain simple formulas to identify $\cos{\alpha_i}$ and $\sin{\alpha_i}$ using only moments of order $2$ and $3$ of the data (once the shifts have been corrected by moments of order 0 and 1). In the next paragraphs, the angles $\alpha_i$, $i=0,\ldots,P-1$, are estimated relative to the "horizontal" line of markers (the line of small markers in~\Fref{fig:truncationradon}~(a) belonging to the group $\rmh$).

\subsubsection{Finding of angles.}
The data are defined for each of two groups $l=\rmh$ or $l=\rmv$ separately 
$m^l_{i}(s)=\mathcal{R}_{\alpha_i}\left(\sum_{j=1}^{n_l} \delta_{\vec{c}^{\:l}_j}\right)(s-s_{\alpha_i})$. The data are corrected twice with the consistent shifts $\slalphaihat$, $l\in\{\rmh,\rmv\}$, identified by~\eref{eqn:slalphai}, so 
\begin{eqnarray}
    \tilde{m}^l_{i}(s)=m^l_{i}(s+\slalphaihat) 
    = \sum_{j=1}^{n_l} \delta_{v^l_{ij}}(s+\slalphaihat)
    = \sum_{j=1}^{n_l} \delta_{v^l_{ij}-\slalphaihat}(s)
    = \sum_{j=1}^{n_l} \delta_{\tildev^l_{ij}}(s),
\end{eqnarray}
where $\tildev^l_{ij}=v^l_{ij}-\slalphaihat$. In the following, we present analytical formulas based on the order 2 and 3 moments  $M^l_2$ and  $M^l_3$ respectively of the shift corrected data  $\tilde{m}_i^l$.

The order $2$ moment conditions of the shift corrected measured data:  
\begin{eqnarray}
\fl M^l_2(i)=\left(\tilde{m}^l_{i}(s), s^2\right) 
=  \left(\mathcal{R}_{\alpha_i}\left(\sum_{j=1}^{n_l} \delta_{\vec{c}^{\:l}_j}\right), s^2\right)
=\left(\sum_{j=1}^{n_l} \delta_{\vec{c}^{\:l}_j \cdot \vec{\theta}_{\alpha_i}},s^2\right)=\sum_{j=1}^{n_l} (\vec{c}^{\:l}_j \cdot \vec{\theta}_{\alpha_i})^2 \nonumber\\ =\sum_{j=1}^{n_l}(c^l_{j1} \cos{\alpha_i}+c^l_{j2}\sin{\alpha_i})^2 \nonumber\\  
= a^l_{20}\cos^2{\alpha_i}+2a^l_{11}\cos{\alpha_i}\sin{\alpha_i}+a^l_{02}\sin^2{\alpha_i},
\end{eqnarray}
where $a^l_{20}=\sum_{j=1}^{n_l}(c^l_{j1})^2$, $a^l_{11}=\sum_{j=1}^{n_l} c^l_{j1}c^l_{j2}$, $a^l_{02}=\sum_{j=1}^{n_l}(c^l_{j2})^2$ are only depending on the marker center coordinates. The coefficients $a^l_{20}$ and $a^l_{02}$ are positive as soon as $n_l>1$. Since  $\vec{c}^{\:\rmh}_j=(c^\rmh_{j1},0)^T$ for the "horizontal" line of markers in the corresponding coordinate system  
centered at the center of mass of the $\nh$ markers on the "horizontal" line and $\vec{c}^{\:\rmv}_j=(0,c^\rmv_{j2})^T$ for the "vertical" line in the corresponding coordinate system centered at the center of mass of the $\nv$ markers on the "vertical" line, we have
\begin{eqnarray}
    M^\rmh_2(i)=a^\rmh_{20}\cos^2{\alpha_i}, \qquad M^\rmv_2(i)=a^\rmv_{02}\sin^2{\alpha_i}.
\label{eqn:m2radon}
\end{eqnarray}
The order 2 moments are computed from the projection data $\tilde{m}^l_{i}$: $M^l_2(i)=\left(\sum_{j=1}^{n_l} \delta_{\tildev^l_{ij}}(s), s^2\right)=\sum_{j=1}^{n_l} (\tildev^l_{ij})^2>0$.
Suppose that we know that $\alpha_0$ and $\alpha_1$ are different and that $M^\rmh_2(0) M^\rmv_2(1)-M^\rmh_2(1) M^\rmv_2(0) \neq 0$, see~\eref{eqn:alpha0}. 
We can define 4 unknowns $a^\rmh_{20}$, $a^\rmv_{02}$, $\alpha_0$, $\alpha_1$ from the system
\begin{eqnarray}
  \left\{\begin{array}{@{}l@{}l@{}l@{}}
    M^\rmh_2(0)=a^\rmh_{20}\cos^2{\alpha_0}\\
    M^\rmv_2(0)=a^\rmv_{02}\sin^2{\alpha_0}\\
    M^\rmh_2(1)=a^\rmh_{20}\cos^2{\alpha_1}\\
    M^\rmv_2(1)=a^\rmv_{02}\sin^2{\alpha_1}
  \end{array}\right. \Leftrightarrow   \left\{\begin{array}{@{}l@{}l@{}l@{}}
    M^\rmh_2(0)=\cos^2{\alpha_0}\cdot M^\rmh_2(1)/\cos^2{\alpha_1}\\
    M^\rmv_2(0)=\sin^2{\alpha_0}\cdot M^\rmv_2(1)/\sin^2{\alpha_1}\\
    a^\rmh_{20}=M^\rmh_2(1)/\cos^2{\alpha_1}\\
    a^\rmv_{02}=M^\rmv_2(1)/\sin^2{\alpha_1}
  \end{array}\right..
\end{eqnarray}
From the second equation 
\begin{eqnarray}
    \cos^2{\alpha_1}=1-\sin^2{\alpha_1}=1-\frac{M^\rmv_2(1)\sin^2{\alpha_0}}{M^\rmv_2(0)}= \frac{M^\rmv_2(0)-M^\rmv_2(1)\sin^2{\alpha_0}}{M^\rmv_2(0)},
\end{eqnarray}
then from the first equation
\begin{eqnarray}
\fl M^\rmh_2(0) \frac{M^\rmv_2(0)-M^\rmv_2(1)\sin^2{\alpha_0}}{M^\rmv_2(0)}=M^\rmh_2(1) \cos^2{\alpha_0} \Rightarrow \nonumber\\
M^\rmh_2(0) M^\rmv_2(0)-M^\rmh_2(0) M^\rmv_2(1)\sin^2{\alpha_0}=M^\rmh_2(1) M^\rmv_2(0) (1-\sin^2{\alpha_0}) \Rightarrow \nonumber\\
(M^\rmh_2(0) M^\rmv_2(1)-M^\rmh_2(1) M^\rmv_2(0)) \sin^2{\alpha_0}  =M^\rmh_2(0) M^\rmv_2(0) \nonumber\\ -M^\rmh_2(1) M^\rmv_2(0).
\end{eqnarray}

We can choose one solution $\alpha_0$ among 4 possibilities (from symmetry ambiguities, see~\eref{eqn:symmetry1} and~\eref{eqn:symmetry2}), 
e.g., we assume that $\alpha_0 \in (0,\pi/2)$, then $\alpha_0$ can be uniquely estimated from 
\begin{eqnarray}
\sin^2{\alphahatz}= \frac{M^\rmh_2(0) M^\rmv_2(0)-M^\rmh_2(1) M^\rmv_2(0)}{M^\rmh_2(0) M^\rmv_2(1)-M^\rmh_2(1) M^\rmv_2(0)}.
\label{eqn:alpha0}
\end{eqnarray}

We can calculate the coefficients $a^\rmh_{20}$ and $a^\rmv_{02}$ from
\begin{eqnarray}
    {\ahat}^\rmh_{20}=\frac{M^\rmh_2(0)}{\cos^2{\alphahatz}}, \qquad {\ahat}^\rmv_{02}=\frac{M^\rmv_2(0)}{\sin^2{\alphahatz}}.
\label{eqn:a20a02}
\end{eqnarray}

With similar computations as for the order 2 moments, we obtain for the order 3 moments:
\begin{eqnarray}
    M^\rmh_3(i)=a^\rmh_{30}\cos^3{\alpha_i}, \qquad M^\rmv_3(i)=a^\rmv_{03}\sin^3{\alpha_i},
\label{eqn:m3radon}
\end{eqnarray}
where $a^\rmh_{30}=\sum_{j=1}^{\nh}(c^\rmh_{j1})^3$, $a^\rmv_{03}=\sum_{j=1}^{\nv}(c^\rmv_{j2})^3$. We assume that $a^\rmh_{30}$ and $a^\rmv_{03}$ are non-zero (in practice, this can be detected by~\eref{eqn:m3radon} and avoided by a non-singular choice of the marker positions, singular distributions of markers yielding $a^\rmh_{30}=\sum_{j=1}^{n_\rmh}(c^\rmh_{j1})^3=0$ or  $a^\rmv_{03}=\sum_{j=1}^{n_\rmv}(c^\rmv_{j2})^3=0$ are highly unlikely, see also the subsection~\ref{sect:NumericalSimulationParallel}). Thus, 
\begin{eqnarray}
    \ahat^\rmh_{30}=\frac{M^\rmh_3(0)}{\cos^3{\alphahatz}}, \qquad \ahat^\rmv_{03}=\frac{M^\rmv_3(0)}{\sin^3{\alphahatz}}.
\label{eqn:a30a03}
\end{eqnarray}

Using \eref{eqn:m2radon} and~\eref{eqn:m3radon}, $\alpha_i$ can be estimated from 
\begin{eqnarray}
    \cos{\alphahati}=\frac{\ahat^\rmh_{20}M^\rmh_3(i)}{\ahat^\rmh_{30}M^\rmh_2(i)}, \qquad \sin{\alphahati}=\frac{\ahat^\rmv_{02}M^\rmv_3(i)}{\ahat^\rmv_{03}M^\rmv_2(i)}.
\label{eqn:alphai}
\end{eqnarray}
Here $M^l_3(i)$ can be computed from our data with $M^l_3(i)=\left(\sum_{j=1}^{n_l} \delta_{\tildev^l_{ij}}(s), s^3\right)=\sum_{j=1}^{n_l} (\tildev^l_{ij})^3$.
Thus, we can identify analytically the projection angles $\alpha_i$ from~\eref{eqn:alphai} based on~\eref{eqn:alpha0}, \eref{eqn:a20a02}, \eref{eqn:a30a03}. Then~\eref{eqn:salphai} and~\eref{eqn:alphai} give us the new algorithm. Note that our analytical algorithm is based only on the local information about the centers of markers, thus, an object of interest can be truncated. Moreover, we don't require the knowledge of any angles, but can compute all unknown acquisition parameters.

\subsection{Numerical simulations}
\label{sect:NumericalSimulationParallel}
All numerical experiments were performed with Python 3. 
The calibration marker set is presented within the object in~\Fref{fig:truncationradon}~(a) and simulated truncated projections in~\Fref{fig:truncationradon}~(b). The origin of the marker coordinate system is the center of mass of 6 Dirac distributions. The axes of the coordinate system attached to the marker set are the two perpendicular marker lines. In this coordinate system, the marker positions are in cm: 
\begin{itemize}
    \item $(-2.4,0)^T$, $(0.4,0)^T$, $(2.3,0)^T$ for the first line;
    \item $(-0.1,-2.5)^T$, $(-0.1,0.5)^T$, $(-0.1,2)^T$ for the second line.
\end{itemize}

For these calibration "vertical" and "horizontal" marker distributions $a^\rmh_{30}=\sum_{j=1}^{n_\rmh}(c^\rmh_{j1})^3=-4.95$ and  $a^\rmv_{03}=\sum_{j=1}^{n_\rmv}(c^\rmv_{j2})^3=-7.5$. They are non-zero as required in the previous section, see~\eref{eqn:alphai}. 
More generally, it can be easily checked that if the moment of order 1 of three aligned Dirac distributions is zero (thus, we work in their center of mass), then the moment of order 3 (in the coordinate system centered at the center of mass of this line) can only be zero if these three Dirac distributions are equidistant (which is highly unlikely for random distributions).
Note that $c^\rmh_{j1}$ and $c^\rmv_{j2}$ are coordinates on the respective lines with the center of each coordinate system in the center of mass of each respective group of markers.  
The estimation error of $a^\rmh_{30}$ and $a^\rmv_{03}$ from~\eref{eqn:a30a03} in the case of noise-free data is of order $10^{-15}$.

We simulated $P=80$ random projection angles  $\alpha_i$, $i=0,\ldots, P-1$, in $[0,\pi]$ as true values (in the marker coordinate system),  half of them is in $(0,\pi/2)$, the second half is in $(\pi/2,\pi)$ (in order to avoid $0$, $\pi/2$ and $\pi$ for which projected marker centers overlap). We assume that the projections of the set of Dirac distributions $m_{i}(s)$  at angle $\alpha_i$, see~\eref{eqn:misvij} and~\eref{eqn:mis}, can be used to extract $v_{ij}$. The noise-free projections $v_{ij}$ were defined by~\eref{eqn:vijdef}, where the shifts $s_{\alpha_i}$ were randomly generated according to the uniform distribution on  $[-0.05, 0.05)$. The inputs of our algorithm are two arrays of the size $(3,P)$. The first array consists of all $P$ projections $v^\rmh_{ij}$, $i=0,\ldots,P-1$, $j=1,\ldots,3$, of $3$ markers from the first "horizontal" marker line. The second array consists of $v^\rmv_{ij}$, $i=0,\ldots,P-1$, $j=1,\ldots,3$, from the "vertical" marker line.

\begin{table}
\caption{\label{tabparallel} Mean absolute errors for shifts ($\ErrS$, see~\eref{eqn:ErrS}) and angles ($\ErrAone$ and $\ErrAtwo$, see~\eref{eqn:ErrAone} and~\eref{eqn:ErrAtwo}) for numerical experiments with non-noisy and noisy projections.} 

\begin{indented}
\lineup
\item[]\begin{tabular}{lp{2cm}p{2cm}p{2.5cm}p{2.5cm}}
\br                              
Noise level&Noise std in cm&Error for shifts in cm& Error for angles $\alphaihatone$  in rad  for $\alphazhatone \in (0, \pi/2)$&Error for angles $\alphaihattwo$ in rad for $\alphazhattwo \in (\pi/2, \pi)$\cr 
~ & ~ & $\ErrS$ &  $\ErrAone$ &  $\ErrAtwo$ \cr
\mr
      0\%&0&$5.84\times10^{-17}$&$3.71\times10^{-16}$&$3.97\times10^{-16}$\cr
        10\%&0.001&$3.26\times10^{-4}$&$2.21\times10^{-3}$&$2.21\times10^{-3}$\cr
        50\% &0.005&$1.61\times10^{-3}$&$1.13\times10^{-2}$&$1.28\times10^{-2}$\cr 
        100\% &0.01&$3.22\times10^{-3}$&$2.45\times10^{-2}$&$3.54\times10^{-2}$\cr
        200\% &0.02&$6.51\times10^{-3}$&$6.10\times10^{-2}$&$8.20\times10^{-2}$\cr
\br
\end{tabular}
\end{indented}
\end{table}

The estimator $\alphaihatone$ is given with the choice $\alphazhatone\in (0,\pi/2)$ in~\eref{eqn:alpha0}. 
According to~\eref{eqn:alpha0}, we have 4 possible solutions for the angle $\alpha_0$, thus 4 possible angle sets compatible with the initial data. To illustrate this ambiguity, we present the results for another solution of~\eref{eqn:alpha0} with $\alphazhattwo \in (\pi/2, \pi)$  yielding then, with same formulas~\eref{eqn:salphai} and~\eref{eqn:alphai}, the estimator $\alphaihattwo$ for $i=1,\ldots,P-1$ of $\pi-\alpha_i$. According to~\eref{eqn:symmetry2}, these solutions correspond to markers differing by the symmetry $\sigma_2$ according to the $x_2$-axis. We computed the errors $\ErrS \eqdef 1/P\sum_{i=0}^{P-1}\abs{ \salphaihat - s_{\alpha_i}}$, 
$\ErrAone \eqdef 1/P\sum_{i=0}^{P-1}\abs{ \alphaihatone - \alpha_i}$ and
$\ErrAtwo \eqdef {1}/{P}\sum_{i=0}^{P-1} \abs{\alphaihattwo - (\pi-\alpha_i)}$ (also given in~\eref{eqn:ErrS}, ~\eref{eqn:ErrAone} and~\eref{eqn:ErrAtwo} with $\nbk=1$), see~\Tref{tabparallel},  first row (0\%), third to fifth columns.
From the exact projection values $v^l_{ij} = \vec{c}^{\:l}_j \cdot \vec{\theta}_{\alpha_i}+s_{\alpha_i}$, $i=0,\ldots, P-1$, $j=1,\ldots,3$, $l\in\{\rmh,\rmv\}$, the estimations $\salphaihat$, $\alphaihatone$ and $\alphaihattwo$ are almost perfect. 

In practice, the detection of $v^l_{ij}$ contains errors. We modeled detection errors with a Gaussian noise $N(0,\sigma)$ added to $v^l_{ij}$, where $\sigma$ is equal to the product of the noise level and the pixel size of the detector $0.01$ cm. In rows 2 (noise level of $10\%$) to 5 (noise level of $200\%$) of~\Tref{tabparallel} we present results of noisy experiments. We computed mean absolute errors from $\nbk = 100$ realizations of the Gaussian noise, i.e., from $v^l_{ijk}=v^l_{ij}+\epsilon_{ijk}$, $i=0,\ldots, P-1$, $j=1,\ldots,3$, $l\in\{\rmh,\rmv\}$,  $k=1,\ldots, \nbk$, where  $\epsilon_{ijk}$ is a random value given by  $N(0,\sigma)$. 
For the shifts the computed error is given by 
\begin{eqnarray}
\ErrS \eqdef 
\frac{1}{\nbk}\sum_{k=1}^{\nbk}
\frac{1}{P}\sum_{i=0}^{P-1} |\hat{s}_{\alpha_{i}, k} - s_{\alpha_{i}}|, 
\label{eqn:ErrS}
\end{eqnarray}
where $s_{\alpha_{i}}$ is the true shift of the projection $i$, $\hat{s}_{\alpha_{i}, k}$ is its estimation with~\eref{eqn:salphai} from the data  $v^l_{ijk}$, $j=1,\ldots,3$, $l\in\{\rmh,\rmv\}$.
For projection angles the  computed errors are given by 
\begin{eqnarray}
\ErrAone \eqdef 
\frac{1}{\nbk}\sum_{k=1}^{\nbk}
\frac{1}{P}\sum_{i=0}^{P-1} \abs{\alphahatone_{i,k} - \alpha_{i}}
\mbox{ and } 
\label{eqn:ErrAone}
\end{eqnarray}
\begin{eqnarray}
\ErrAtwo \eqdef 
\frac{1}{\nbk}\sum_{k=1}^{\nbk}
\frac{1}{P}\sum_{i=0}^{P-1} \abs{\alphahattwo_{i,k} - (\pi-\alpha_{i})},
\label{eqn:ErrAtwo}
\end{eqnarray}
where $\alpha_{i}$ is the true angle of the projection $i$, $\alphahatone_{i,k}$ and $\alphahattwo_{i,k}$ are the estimations with~\eref{eqn:alphai} from the data  $v^l_{ijk}$, $j=1,\ldots,3$, $l\in\{\rmh,\rmv\}$.
 We see in~\Tref{tabparallel} that the errors are essentially proportional to the noise level.

We repeated our experiments with similar calibration sets (two perpendicular lines of markers) of various sizes. We observed the well-known result: the angle estimation error is essentially a decreasing function of the size of the marker set. The shift estimation error is not sensitive to this size.

\section{\redC{Range conditions for the fan-beam transform on distributions and geometric calibration} }
\label{SecSelfCalibFanBeam}
\subsection{Mathematical results}

Let us consider $\mathcal{D}_{\lambda}f(y)\eqdef \mathcal{D}f(\lambda,y)$, where $\mathcal{D}$ is defined in~\eref{eqn:fanbeamdef}, $\lambda\in\bR$, $y\in\bR$,  for $f \in \mathscr{D}_2$ with support in $X_2=(D_1,D_2) \times \mathbb{R}$, $0<D_1<D_2<D$, see~\Fref{fig:geometries}~(b). 
Thus, $\mathcal{D}_{\lambda}f$ is a function of one variable $y\in\bR$. The operator $\mathcal{D}_{\lambda}$ is generalized to distributions through its dual operator $\mathcal{D}_{\lambda}^{*}$: 
\begin{eqnarray}
    \fl (\mathcal{D}_{\lambda}f,\phi)=\int_{-\infty}^{ +\infty} \mathcal{D}_{\lambda}f(y) \phi(y) \rmd y = \int_{-\infty}^{+\infty} \int_{0}^{+\infty} f(D-lD, \lambda+ly-l\lambda)\rmd l \phi(y) \rmd y \nonumber\\
    = \frac{1}{D}\int_{-\infty}^{+\infty} \int_{-\infty}^{D} f \left( u,\frac{u\lambda}{D}+\frac{(D-u)y}{D}\right) \rmd u \phi(y) \rmd y \nonumber\\=\int_{-\infty}^{D} \int_{-\infty}^{+\infty} f(u,v) \phi \left( \frac{vD-u\lambda}{D-u} \right) \frac{1}{D-u} \rmd v \rmd u \nonumber\\=\int_{D_1}^{D_2} \int_{-\infty}^{+\infty} f(u,v) \phi \left( \frac{vD-u\lambda}{D-u} \right) \frac{1}{D-u} \rmd v \rmd u=\langle f,\mathcal{D}_{\lambda}^{*} \phi \rangle, 
\end{eqnarray}
where we first made the change of variable $u=D-lD$, $\rmd l=-\rmd u/D$, $\lambda+l(y-\lambda)=\frac{u\lambda}{D}+\frac{(D-u)y}{D}$ and then the change of variable $v=\frac{u\lambda}{D}+\frac{(D-u)y}{D}$, $\rmd v=\frac{D-u}{D}\rmd y$. We denote by  
$(\cdot,\cdot)$ the canonical duality pairing for test functions on $\mathbb{R}$ and by $\langle \cdot,\cdot \rangle$ the canonical duality pairing for test functions on $X_2$. Thus, here we have
\begin{eqnarray}
    \forall \phi \in \mathscr{E}_1,  \: \mathcal{D}_{\lambda}^{*} \phi(\vec{x})= \frac{1}{D-x_1} \phi \left( \frac{x_2D-x_1\lambda}{D-x_1} \right).
   \label{eqn:adjointfanbeam}
\end{eqnarray}
We have $\mathcal{D}_{\lambda}^{*} \phi \in \mathscr{E}(X_2)$ as a composition of two smooth functions  $\vec{x}\mapsto \frac{x_2D-x_1\lambda}{D-x_1}$ and $\phi$ multiplied by the smooth function $\vec{x}\mapsto \frac{1}{D-x_1}$. 
\begin{definition}
 The fan-beam transform on a line $\mathcal{D}_{\lambda} f$ of $f \in \mathscr{E}'(X_2)$ is a distribution acting on the space $\mathscr{E}_1$ according to 
 \begin{eqnarray}
     \forall \phi \in \mathscr{E}_1, \: (\mathcal{D}_{\lambda}f,\phi) \eqdef \langle f,\mathcal{D}_{\lambda}^{*} \phi \rangle.
     \label{eqn:dualityfanbeam}
 \end{eqnarray}
\end{definition}

The linearity of the operator $\mathcal{D}_{\lambda} f$ is obvious. The continuity is equivalent to its boundedness. We give the proof of its boundedness in Appendix~\ref{SecAppendix}, Lemma~\ref{LemmaFanbeamBounded}. It's the proof of the first point~{\em(\ref{item_one_of_FanbeamRangeCondTh})} in the next theorem.

\begin{theorem}[Necessary range conditions for $\mathcal{D}_{\lambda}$ on distributions, $\mathbf{\lambda\in\mathbb{R}}$]
Let $g_{\lambda}\eqdef\mathcal{D}_{\lambda}f$ be the fan-beam transform on a line of $f\in \mathscr{E}'(X_2)$ for $\lambda\in\bR$, then:
\begin{enumerate}
    \item\label{item_one_of_FanbeamRangeCondTh} $g_{\lambda} \in  \mathscr{E}'_1$,
    \item\label{item_two_of_FanbeamRangeCondTh} for $k\in\bN$, we have the moment conditions:
    \begin{eqnarray}
        \left(g_{\lambda} (y), y^k\right)=\mathscr{P}_k(\lambda), \: \forall\lambda\in\bR,
    \label{eqn:dccfanbeamourmomentcond}
    \end{eqnarray}
    where $\mathscr{P}_k(\lambda)$ is a polynomial of degree at most $k$ in $\lambda$.
\end{enumerate}
\end{theorem}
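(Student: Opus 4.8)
The plan is to handle the two points separately. Point~{\em(\ref{item_one_of_FanbeamRangeCondTh})} --- that $g_{\lambda}$ is a bounded (hence continuous) linear functional on $\mathscr{E}_1$, so that $g_{\lambda}\in\mathscr{E}'_1$ --- is established in Appendix~\ref{SecAppendix} as Lemma~\ref{LemmaFanbeamBounded}, which I would simply invoke. The point to keep in mind is that, because $g_{\lambda}$ lies in $\mathscr{E}'_1$ and not merely in $\mathscr{D}'_1$, the pairing $\left(g_{\lambda}(y),y^k\right)$ is meaningful for every $k\in\bN$ even though $y^k$ is not compactly supported; this is precisely why the $\mathscr{E}'$ framework is used here, exactly as in the parallel case of Theorem~\ref{thm:dccradonour}.

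For the moment conditions~{\em(\ref{item_two_of_FanbeamRangeCondTh})}, I would compute $\left(g_{\lambda}(y),y^k\right)$ directly from the definition~\eref{eqn:dualityfanbeam} together with the explicit adjoint~\eref{eqn:adjointfanbeam}. Since $y^k\in\mathscr{E}_1$, taking $\phi(y)=y^k$ gives
\begin{eqnarray}
\fl\left(g_{\lambda}(y),y^k\right)
=\left\langle f(\vec{x}),\mathcal{D}_{\lambda}^{*}(y^k)(\vec{x})\right\rangle
=\left\langle f(\vec{x}),\frac{(x_2D-x_1\lambda)^k}{(D-x_1)^{k+1}}\right\rangle \nonumber\\
=\sum_{i=0}^{k}(-1)^i D^{k-i}\binom{k}{i}\left\langle f(\vec{x}),\frac{x_1^i x_2^{k-i}}{(D-x_1)^{k+1}}\right\rangle\lambda^i,
\end{eqnarray}
where the last step is the binomial expansion of $(x_2D-x_1\lambda)^k$ combined with the linearity of $f$, which lets me pull the powers of $\lambda$ outside the pairing. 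Each coefficient $\left\langle f(\vec{x}),x_1^i x_2^{k-i}/(D-x_1)^{k+1}\right\rangle$ is a fixed real number independent of $\lambda$, so the right-hand side is a polynomial in $\lambda$ of degree at most $k$, and this polynomial is $\mathscr{P}_k(\lambda)$. The argument is the exact analogue of the moment-condition computation in the proof of Theorem~\ref{thm:dccradonour} for the Radon case, and is the distributional counterpart of Theorem~\ref{thm:dccclackdoyle}.

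The one step requiring genuine care --- and the closest thing here to an obstacle --- is checking that $\mathcal{D}_{\lambda}^{*}(y^k)$, equivalently each $\vec{x}\mapsto x_1^i x_2^{k-i}/(D-x_1)^{k+1}$, really belongs to $\mathscr{E}(X_2)=C^{\infty}(X_2)$, so that these pairings are well-defined finite numbers. This relies on the standing geometric assumption $\supp f\subset X_2=(D_1,D_2)\times\mathbb{R}$ with $0<D_1<D_2<D$: on this strip $D-x_1\geq D-D_2>0$, the denominator never vanishes, and the functions are smooth there (this is the same observation already used to justify $\mathcal{D}_{\lambda}^{*}\phi\in\mathscr{E}(X_2)$). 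Once that is noted, everything else is the routine binomial bookkeeping indicated above; in particular the degree bound ``at most $k$'' is automatic from the expansion, since the $\lambda^k$-coefficient $(-1)^k\left\langle f(\vec{x}),x_1^k/(D-x_1)^{k+1}\right\rangle$ may itself vanish.
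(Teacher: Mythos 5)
Your proposal is correct and follows essentially the same route as the paper: point~{\em(\ref{item_one_of_FanbeamRangeCondTh})} is delegated to Lemma~\ref{LemmaFanbeamBounded}, and the moment conditions are obtained by pairing $f$ with $\mathcal{D}_{\lambda}^{*}(y^k)$ and applying the binomial expansion of $(x_2D-x_1\lambda)^k$ to extract the powers of $\lambda$, exactly as in the paper's proof (your extra remark that $D-x_1\geq D-D_2>0$ on $X_2$, ensuring smoothness of the test function, is a welcome but minor amplification of what the paper already states after~\eref{eqn:adjointfanbeam}).
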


\begin{proof}
The point~{\em(\ref{item_one_of_FanbeamRangeCondTh})} is proven in  Appendix~\ref{SecAppendix}, Lemma~\ref{LemmaFanbeamBounded}. 
For~{\em(\ref{item_two_of_FanbeamRangeCondTh})}, $\forall k\in\mathbb{N}$, the function $y\in\bR \mapsto y^k \in\bR$ is in $\mathscr{E}_1$:
\begin{eqnarray}
 \fl (\mathcal{D}_{\lambda}f(y), y^k)=\Braket{ f(\vec{x}),\mathcal{D}_{\lambda}^{*} (y^k)(\vec{x}) }=\Braket{ f(\vec{x}), \left( \frac{x_2D-x_1\lambda}{D-x_1} \right)^k \frac{1}{D-x_1} }   \nonumber\\= \Braket{ f(\vec{x}), \frac{1}{(D-x_1)^{k+1}} \sum_{i=0}^{k}\binom{k}{i} (x_2D)^{k-i} (-x_1\lambda)^i }\nonumber\\ 
 = \sum_{i=0}^{k}\binom{k}{i} \Braket{ f(\vec{x}), \frac{(x_2D)^{k-i}(-x_1)^i}{(D-x_1)^{k+1}} } \lambda^i=\mathscr{P}_k(\lambda).
\end{eqnarray}
\end{proof}

The fan-beam transform at fixed $\lambda$ of $\delta_{\vec{c}}$ (modeling a marker at $\vec{c}\in\bR^2$)  is
\begin{eqnarray}
 \fl \left(\mathcal{D}_{\lambda}\delta_{\vec{c}}(y), \phi(y)\right)=\Braket{ \delta_{\vec{c}}(\vec{x}) ,\mathcal{D}_{\lambda}^{*} \phi(\vec{x}) }=\Braket{ \delta_{\vec{c}} (\vec{x}), \phi \left( \frac{x_2D-x_1\lambda}{D-x_1} \right) \frac{1}{D-x_1}  } \nonumber\\
 =  \phi \left( \frac{c_2D-c_1\lambda}{D-c_1} \right) \frac{1}{D-c_1}
 = \frac{1}{D-c_1} \delta_{\tilde{c}} (\phi),  \qquad \tilde{c}=\frac{c_2D-c_1\lambda}{D-c_1}.
 \label{eqn:fanbeamofdirac}
\end{eqnarray}
Note that $\tilde{c}$ is the geometric projection of $\vec{c}$ in fan-beam geometry of $\mathcal{D}_{\lambda}$ (see~\cite{desbat19}). \redC{This theorem represents the first attempt to construct DCCs on distributions for the fan-beam geometry. In the following, we will demonstrate that this theoretical result can prove to be valuable for constructing a calibration procedure, similar to the approach used for the parallel geometry.}

\subsection{Geometric calibration problem to solve}
\label{SectSelfCalibFBoaL}

Existing geometric calibration methods for this fan-beam geometry  based on DCCs require again non-truncated data (see, for example,~\cite{nguyen20}). 
In~\Fref{fig:truncationfanbeam}, we present the example used in our numerical simulations, see subsection~\ref{sect:NumericalSimulationFanBeam}. The object of interest is the union of big disks and isn't in the field-of-view of any projection. 
As it can be seen in~\Fref{fig:truncationfanbeam}~(b), all projections are truncated and non-zero projection data are missing to compute moments. 
However, all marker projections are contained in all projection data. Thus, we will use only the local data concerning markers to check the range conditions~\eref{eqn:dccfanbeamourmomentcond}.

\begin{figure}[btph]
     \centering
     \begin{subfigure}[b]{0.48\textwidth}
         \centering
         \includegraphics[width=\textwidth]{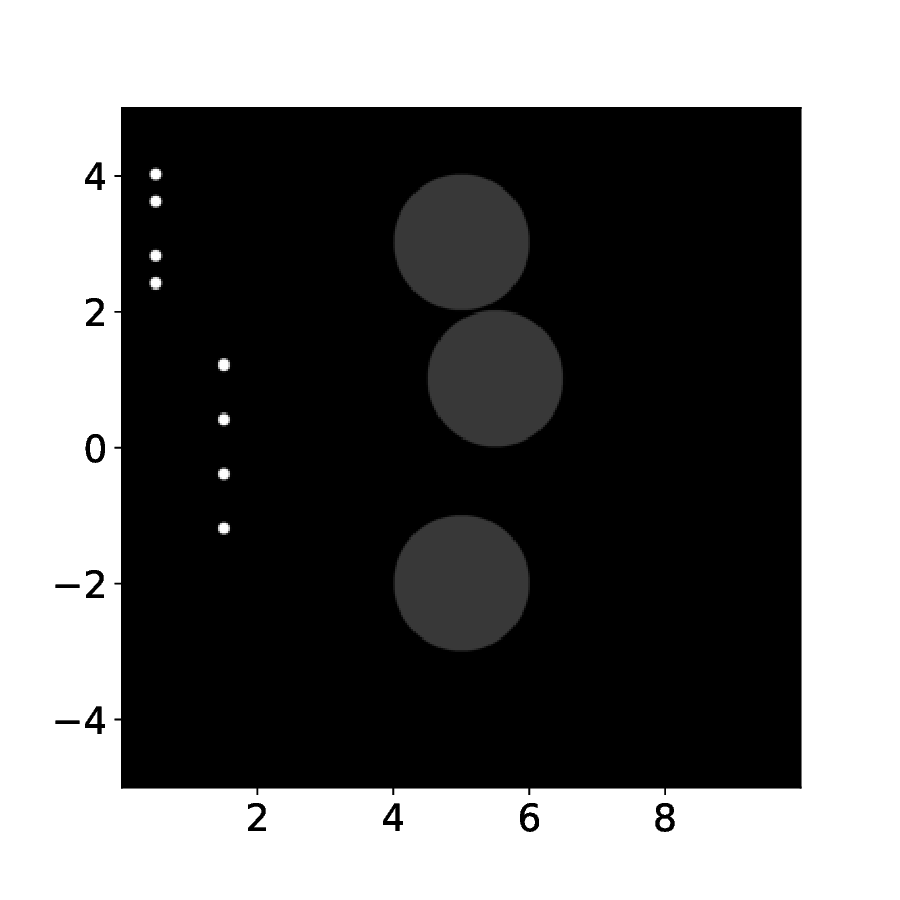}
         \caption{}
     \end{subfigure}
     \hfill
     \begin{subfigure}[b]{0.48\textwidth}
         \centering
         \includegraphics[width=\textwidth]{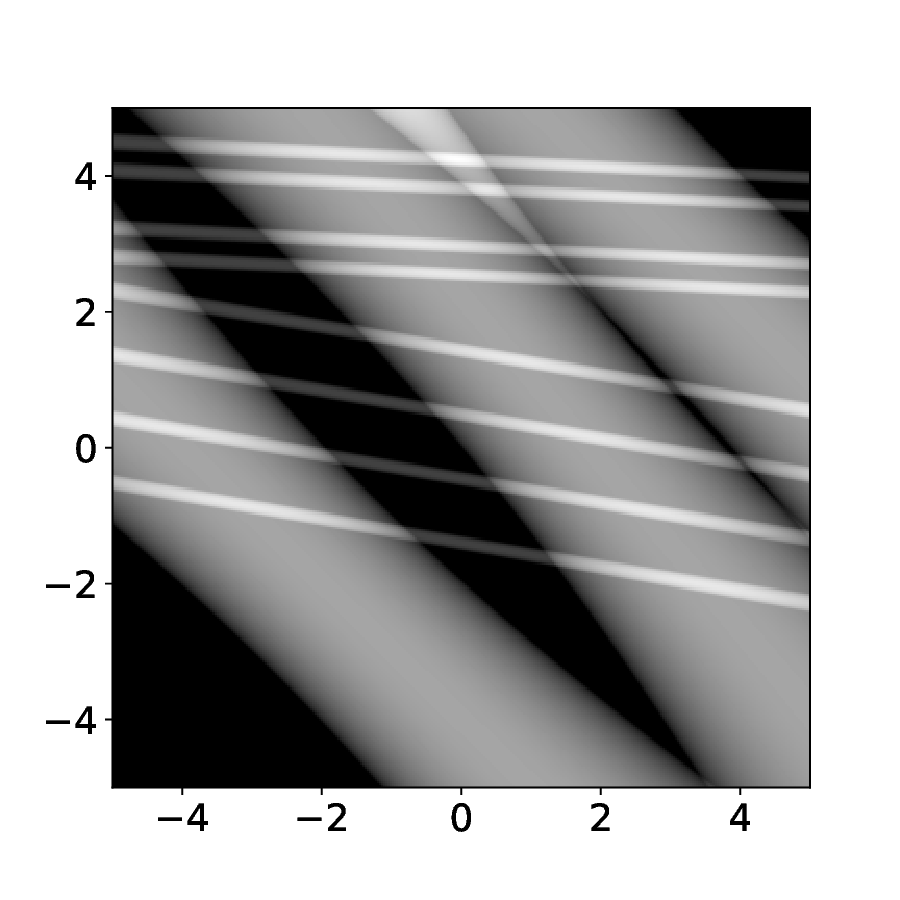}
         \caption{}
     \end{subfigure}
        \caption{(a) measured data $\fob+\fballs$: object  $\fob$ (the union of big disks) with markers $\fballs$ ($8$ small disks, i.e., two objects of $4$ markers) in $\mathbb{R}^2$; the horizontal axis is $x_1$, the vertical axis is $x_2$, the detector line is at $x_1=0$, the source line is at $x_1=D=10$. (b) Their {\em truncated} fan-beam projections; the horizontal axis is $\lambda$, the vertical axis is $y$. At fixed $\lambda_i\in [-5,5]$, $i=0,\ldots,P-1$, each projection $\mathcal{D}_{\lambda_i}(\fob+\fballs)$ {\em is truncated}. But we can remark that the marker set projections $\mathcal{D}_{\lambda_i}(\fballs)$ $\forall \lambda_i$ are {\em not truncated}.
        }
        \label{fig:truncationfanbeam}
\end{figure}

We assume that the source positions are unknown and that the detector could have been shifted at each projection (because of vibrations or because the detector is moving like in the geometry considered in~\cite{nguyen20}). 
We assume that we have $P$ projections 
\begin{eqnarray}
    \overline{m}_{i}(y)=\mathcal{D}_{\lambda_i}(\fob+\fballs)\left(y-y_{\lambda_i}\right), \: y\in\bR, \: i=0,\ldots,P-1,
\end{eqnarray}
where both source positions $\lambda_i\in\bR$ and shifts $y_{\lambda_i}\in\bR$, $i=0,\ldots,P-1$, are unknown. As previously, $\fob$ is the object attenuation function and $\fballs$ is the marker attenuation function. All projections $\mathcal{D}_{\lambda_i}\fob$ are truncated, but none of $\mathcal{D}_{\lambda_i}\fballs$. 
Thus, we focus only on $m_{i}(y)=\mathcal{D}_{\lambda_i} \left(\sum_{j=1}^{n} \delta_{\vec{c}_j}\right) \left(y-y_{\lambda_i}\right)$ obtained from $\mathcal{D}_{\lambda_i}\fballs\left(y-y_{\lambda_i}\right)$, where $\vec{c}_j$ is the true 2D center of the marker $j$, $j=1,\ldots,n$. We have $\mathcal{D}_{\lambda_i}\left(\sum_{j=1}^{n} \delta_{\vec{c}_j}\right)=\sum_{j=1}^{n} \frac{1}{D-c_{j1}}\delta_{\tilde{c}(\vec{c}_j,{\lambda_i},D)}$, where $\tilde{c}(\vec{c}_j,{\lambda_i},D)$ is the projection of $\vec{c}_j$ from the source position $\lambda_i$ as defined by~\eref{eqn:fanbeamofdirac}. Note that the detection of  $\tilde{c}(\vec{c}_j,{\lambda_i},D)$ in the projection data doesn't mean that we know $m_{i}(y)$, unless we know the source-detector distance $D$ and the marker $x_1$-coordinates $c_{j1}$, $j=1,\ldots,n$.

In the next subsection, we show that the geometric parameters $\lambda_i$, $y_{\lambda_i}$, $i=0,\ldots,P-1$, and the partially unknown 2D marker positions $\vec{c}_j$, $j=1,\ldots,n$, can be identified from the detection of the marker center projections and some properties of our designed calibration marker set.

\subsection{Calibration method}

\label{SectFanBeamNonUniqueSol}
Let us remind first that the geometric calibration based only on the measured data doesn't have a unique solution for functions.  Let $f$ be a smooth function of compact support  in $X_2$, we have (see~\cite{desbat19})
\begin{eqnarray}
\label{Eq:ShearAndTransForFunction}
\mathcal{D}f_{M,\vec{t}} (\lambda,y)
 =\mathcal{D}f(\lambda+\lambda',y-y')
\end{eqnarray}
with $f_{M,\vec{t}}(\vec{x}) \eqdef f\left(M\vec{x}+\vec{t}\right)$, the shearing matrix $M=\left( \matrix{ 1 & 0 \cr
(y'+\lambda')/D & 1} \right)$  and the translation $\vec{t}=(0,-y')^T\in\bR^2$. Thus, we can not identify from only range conditions the geometric parameters $\lambda$ and $y$ better than up to a global shift $\lambda'$ on $\lambda$ and a global shift $y'$ on $y$ (both origins of the source positions and of the detector positions of consistent data can be freely chosen). 

We have the same result for Dirac sums. 
 Let us associate to $\delta_{\vec{c}} \in \mathscr{E}'(X_2)$, $\vec{c}\in X_2$, the distribution $\deltacMt$ 
 defined by $\Braket{\deltacMt(\vec{x}), \phi(\vec{x}) }=\Braket{ \delta_{\vec{c}}(\vec{x}), \phi\left(M^{-1}(\vec{x}-\vec{t})\right) }$,  
 where $M^{-1}=\left( \matrix{ 1 & 0 \cr
-(y'+\lambda')/D & 1} \right)$, thus, $M^{-1}(\vec{x}-\vec{t})=\left( \matrix{ x_1 \cr
-(y'+\lambda')x_1/D+x_2+y' } \right)$. 
As in~\eref{eqn:distributionRT}, $(\delta_{\vec{c}})_{M,\vec{t}}$ is the distribution $\delta_{M^{-1}(\vec{c}-\vec{t})} \in \mathscr{E}'(X_2)$. Thus, in the geometric calibration with a Dirac distribution (or a sum of Dirac distributions by linearity of $\mathcal{D}_{\lambda}$) we have the similar ambiguity:
\begin{eqnarray}
\label{Eq:ShearAndTrans}
\fl \left(\mathcal{D}_{\lambda}(\delta_{\vec{c}})_{M,\vec{t}}(y), \phi(y)\right)= \Braket{ (\delta_{\vec{c}})_{M,\vec{t}}(\vec{x}), \phi \left( \frac{x_2D-x_1\lambda}{D-x_1} \right) \frac{1}{D-x_1} }\nonumber\\=\Braket{ \delta_{M^{-1}(\vec{c}-\vec{t})}(\vec{x}), \phi \left( \frac{x_2D-x_1\lambda}{D-x_1} \right) \frac{1}{D-x_1} }\nonumber\\= \frac{1}{D-c_1} \phi \left( \frac{(-(y'+\lambda')c_1/D+c_2+y')D-c_1\lambda}{D-c_1} \right)\nonumber\\ =\frac{1}{D-c_1} \phi \left( \frac{c_2D-c_1(\lambda+\lambda')}{D-c_1} +y' \right) = \left(\mathcal{D}_{\lambda+\lambda'} \delta_{\vec{c}}(y), \phi(y+y')\right) \nonumber\\ =\left(\mathcal{D}_{\lambda+\lambda'} \delta_{\vec{c}}(y-y'), \phi(y)\right).
\end{eqnarray}

The geometric calibration as defined in the  subsection~\ref{SectSelfCalibFBoaL} doesn't have a unique solution. After a global shift $\lambda'$ of the source position and a global shift $y'$ of the detector position, the data are still consistent.  Thus, we can choose the origin of the source position to be $\lambda_0=0$ and for the detector position $y_{\lambda_0}=0$. We want to find only one solution among all possible solutions. This solution will correspond to another object obtained from the initial one by a transformation including a shearing and a translation.
 
The scaling ambiguity in geometric self-calibration known from computer vision~\cite{pollefeys99} was shown for the X-ray 3D CB transform in~\cite{konik21}.
A similar scaling ambiguity occurs in our 2D divergent geometry. Without any additional information, we can't uniquely estimate $D$. 
Let $k\in\bR$, $k>0$, the fan-beam data $\mathcal{D}f(\lambda,y)=\int_0^{+ \infty} f(kD-lkD, \lambda+ly-l\lambda) \rmd l$ for the function $f(x_1,x_2)$ and the source-detector distance $kD$ are equal to the fan-beam data $\mathcal{D}\tilde{f}(\lambda,y)=\int_0^{+ \infty} f(k(D-lD), \lambda+ly-l\lambda)\rmd l$ for the function $\tilde{f}(x_1,x_2)=f(kx_1,x_2)$ and the source-detector distance $D$. The same is true for Dirac distributions: the available 2D projection $\frac{c_2 D - c_1\lambda}{D-c_1}$ of $(c_1,c_2)^T$ in the geometry with $D$  is equal to $\frac{c_2D/k-c_1\lambda/k}{D/k-c_1/k}$ of $(c_1/k,c_2)^T$ in the geometry with $D/k$. Thus, we can not estimate $D$ in self-calibration without any additional information. We assume that $D$ is known.

We now design a specific calibration marker set. We suppose that we have two (vertical) lines parallel to the detector and source lines with two different groups of $4$ markers as in~\Fref{fig:truncationfanbeam} (a). We use the index $l=\lone$ for the first line, $l=\ltwo$ for the second line. As usual, both parallel calibration marker sets are fixed to the object (have the same movement if any). The positions of both marker sets are assumed to be unknown.  We also assume that the detector may jitter during the object translation. In practice, putting both lines of marker sets as close as possible to the detector line reduces the possibility of truncation in marker projections. 

We do not consider in this work the marker detection problem in projections. However, we remind that a popular detection trick in computer vision is the cross-ratio invariance (see, for example, \cite{park00} or \cite{aichert18}). The cross-ratio of 4 collinear points is invariant by the divergent projection. Thus, in our marker set, each of both aligned sets of 4 markers has its own specific cross-ratio.  We can classify in all projections 8 markers into two groups of 4, each belongs to its specific line. In real applications, the markers are not Dirac distributions, but balls of finite radius. In 3D CB applications, it's known that the centroid of the projection of a ball is not the projection of the center of the ball, see~\cite{Desbat2006ConeBeamImagingOfDelta}. But it's also known that this error have in general negligible effects on geometric calibration~\cite{Desbat2006ConeBeamImagingOfDelta}. In our 2D fan-beam geometry, it's known that the center of mass of the projection of a disk indicator is the projection of the center of mass of this disk indicator affected by a weight $1/(D-x_1)^2$, see~\cite{desbat19}. We haven't investigated this point, but as in CB, we assume that this effect is in general negligible for the geometric calibration.

We assume that the first 4 marker centers for the first line have coordinates $\vec{c}^{\:\lone}_1=(C_{\lone},p_{\lone}-k_1L)^T$, $\vec{c}^{\:\lone}_2=(C_{\lone},p_{\lone}-L)^T$, $\vec{c}^{\:\lone}_3=(C_{\lone},p_{\lone}+L)^T$, $\vec{c}^{\:\lone}_4=(C_{\lone},p_{\lone}+k_1L)^T$, where $p_{\lone}\in\bR$ and the position $C_{\lone}$ ($0<D_1<C_{\lone}<D_2<D$) of the line $x_1=C_{\lone}$ are unknown, but we assume to know $L$ and $k_1$ ($L>0$, $k_1>0$). The centers for the second line are $\vec{c}^{\:\ltwo}_1=(C_{\ltwo},p_{\ltwo}-k_2L)^T$, $\vec{c}^{\:\ltwo}_2=(C_{\ltwo},p_{\ltwo}-k_3L)^T$, $\vec{c}^{\:\ltwo}_3=(C_{\ltwo},p_{\ltwo}+k_3L)^T$, $\vec{c}^{\:\ltwo}_4=(C_{\ltwo},p_{\ltwo}+k_2L)^T$, where $p_{\ltwo}\in\bR$ and the position $C_{\ltwo}$ ($0<D_1<C_{\ltwo}<D_2<D$) of the line $x_1=C_{\ltwo}$ are unknown, but we know  
$k_2>0$, $k_3>0$. To summarize, we know the specific pattern of each of both 4-marker sets, but we don't know their positions given by $(C_{\lone},p_{\lone})$ and $(C_{\ltwo},p_{\ltwo})$.

For each 4-marker set, $l=\lone$ or $l=\ltwo$, the data are defined by 
\begin{eqnarray}
    m^l_{i}(y)=\mathcal{D}_{\lambda_i}\ftol(y-y_{\lambda_i}), 
\end{eqnarray} 
where $\vec{c}^{\:l}_j\in X_2$ ($j=1,\ldots,4$, $l\in\{\lone,\ltwo\}$) are the marker locations.

The  order $0$ moments of projections $i=0,\ldots,P-1$ for each group $l\in\{\lone,\ltwo\}$:
\begin{eqnarray}
\fl M^l_0(i)=(m^l_{i}(y), 1)= \left(\mathcal{D}_{\lambda_i}\ftol(y-y_{\lambda_i}), 1\right)=\left( \sum_{j=1}^{4} \frac{1}{D-c^{\:l}_{j1}} \delta_{\tilde{c}(\vec{c}^{\:l}_j,{\lambda_i},D)}(y-y_{\lambda_i}), 1 \right) \nonumber\\ =\frac{4}{D-C_{l}}.
\end{eqnarray}

The order $1$ moments:
\begin{eqnarray}
\fl M^l_1(i)=\left(m^l_{i}(y), y\right)= \left(\mathcal{D}_{\lambda_i}\ftol(y-y_{\lambda_i}), y\right)=\left(\mathcal{D}_{\lambda_i}\ftol(y), y+y_{\lambda_i}\right) \nonumber\\
=\left( \sum_{j=1}^{4} \frac{1}{D-c^{\:l}_{j1}} \delta_{\tilde{c}(\vec{c}^{\:l}_j,{\lambda_i},D)}(y),y+y_{\lambda_i} \right)=\sum_{j=1}^{4} \frac{\tilde{c}(\vec{c}^{\:l}_j,{\lambda_i},D)}{D-c^{\:l}_{j1}}+\sum_{j=1}^{4} \frac{y_{\lambda_i}}{D-c^{\:l}_{j1}}\nonumber\\
=\sum_{j=1}^{4} \frac{c^{\:l}_{j2}D-c^{\:l}_{j1}\lambda_i}{(D-c^{\:l}_{j1})^2}+\sum_{j=1}^{4} \frac{y_{\lambda_i}}{D-c^{\:l}_{j1}}\nonumber\\
=-\lambda_i \frac{4C_{l}}{(D-C_{l})^2}+ y_{\lambda_i}M^l_0(i)+\sum_{j=1}^{4} \frac{c^{\:l}_{j2}D}{(D-C_{l})^2}.
\label{eqn:m1fanbeam}
\end{eqnarray}
According to~\eref{Eq:ShearAndTrans}, we can choose $\lambda_0=0$ and $y_{\lambda_0}=0$. Then for $i=0$
\begin{eqnarray}
\eqalign{ M^{l}_1(0)=\sum_{j=1}^{4} \frac{c^{\:l}_{j2}D}{(D-C_{l})^2}, \: l\in\{\lone,\ltwo\},\label{eqn:firstmomentzeroprojection} \cr
 \mbox{i.e.,~}
   M^{\lone}_1(0)=\sum_{j=1}^{4} \frac{c^{\:\lone}_{j2}D}{(D-C_{\lone})^2},  \: M^{\ltwo}_1(0)=\sum_{j=1}^{4} \frac{c^{\:\ltwo}_{j2}D}{(D-C_{\ltwo})^2}.}
\end{eqnarray}
We assume to have detected $v^l_{ij}$, the centers of the projected markers in projections, then 
\begin{eqnarray} 
M^l_1(i)=\left(\sum_{j=1}^{4}\frac{1}{D-c^{\:l}_{j1}} \delta_{v^{l}_{ij}}, y\right)
=\frac{1}{D-C_{l}}\sum_{j=1}^{4} v^{l}_{ij}.
\label{eqn:m1fanbeamtocompute}
\end{eqnarray}
If we put~\eref{eqn:firstmomentzeroprojection} and~\eref{eqn:m1fanbeamtocompute} in \eref{eqn:m1fanbeam} and multiple each side of the equation by $D-C_{l}$, we obtain
\begin{eqnarray} 
-\lambda_i \frac{4C_{l}}{D-C_{l}}+4y_{\lambda_i}=\sum_{j=1}^{4} v^{l}_{ij}-\sum_{j=1}^{4} v^{l}_{0j}.
\label{eqn:fanbeammain0}
\end{eqnarray}
Let us define new unknown variables $r_{\lone}>0$ and $r_{\ltwo}>0$ with
\begin{eqnarray}
   \eqalign{r_{l}= \frac{C_{l}}{D-C_{l}}, \: l\in\{\lone,\ltwo\}, \cr 
   \mbox{i.e.,~} r_{\lone}= \frac{C_{\lone}}{D-C_{\lone}}, \:
   r_{\ltwo}= \frac{C_{\ltwo}}{D-C_{\ltwo}},}
   \label{eqn:cLl}
\end{eqnarray}
then \eref{eqn:fanbeammain0} gives, $l\in\{\lone,\ltwo\}$, $i=1,\ldots,P-1$,
\begin{eqnarray} 
y_{\lambda_i}-\lambda_i r_l=\Delta \tilde{M}^l_1(i), \mbox{ where } \Delta \tilde{M}^l_1(i) =\frac{1}{4}\left( \sum_{j=1}^{4} v^{l}_{ij}-\sum_{j=1}^{4} v^{l}_{0j} \right),
 \label{eqn:fanbeammain1}
\end{eqnarray}
$\Delta \tilde{M}^l_1(i)$ can be computed from the projection data. We need more equations to find the $2P$ unknown $y_{\lambda_i}$, $\lambda_i$, $i=1,\ldots,P-1$ and $r_{\lone}$, $r_{\ltwo}$  than $2P-2$ equations in~\eref{eqn:fanbeammain1}.

The moment conditions of order $2$ (let us remind that $c^{\:l}_{j1}=C_{l}$, $j=1,\ldots,4$, $l\in\{\lone,\ltwo\}$) :
\begin{eqnarray}
\fl M^l_2(i)=\left(m^l_{i}(y), y^2\right)= \left(\mathcal{D}_{\lambda_i}\ftol(y), (y+y_{\lambda_i})^2\right)=\sum_{j=1}^{4} \frac{\left(\tilde{c}(\vec{c}^{\:l}_j,{\lambda_i},D)+y_{\lambda_i}\right)^2}{D-c^{\:l}_{j1}}\nonumber\\
=\frac{1}{{D-C_{l}}}\sum_{j=1}^{4} \left( \frac{c^{\:l}_{j2}D-c^{\:l}_{j1}\lambda_i}{D-C_{l}} +y_{\lambda_i}\right)^2\nonumber\\
=\frac{1}{{D-C_{l}}}\sum_{j=1}^{4} \left( \frac{c^{\:l}_{j2}D}{D-C_{l}} +y_{\lambda_i}-\lambda_i r_l \right)^2\nonumber\\ =\frac{1}{{D-C_{l}}}\sum_{j=1}^{4} \left( \frac{c^{\:l}_{j2}D}{D-C_{l}} +\Delta \tilde{M}^l_1(i) \right)^2.
\label{eqn:m2fanbeam}
\end{eqnarray}

From \eref{eqn:firstmomentzeroprojection}, \eref{eqn:m1fanbeamtocompute} and \eref{eqn:cLl} we have 
\begin{eqnarray}
  \sum_{j=1}^{4} \frac{c^{\:l}_{j2}D}{D-C_{l}}=\frac{D}{D-C_{l}} \sum_{j=1}^{4} c^{\:l}_{j2}=(1+r_l)\sum_{j=1}^{4} c^{\:l}_{j2}=\sum_{j=1}^{4} v^{l}_{0j},
  \label{eqn:fanbeamlong}
\end{eqnarray}
thus, from $\sum_{j=1}^{4} c^{\:l}_{j2}=4p_l$ and \eref{eqn:fanbeamlong}
\begin{eqnarray}
(1+r_l)4p_l=\sum_{j=1}^{4} v^{l}_{0j}.
\label{eqn:p1p2}
\end{eqnarray}

For each of our both marker sets $\lone$ and $\ltwo$ we have
\begin{eqnarray}
\sum_{j=1}^{4} (c^{\:\lone}_{j2})^2=4p^2_{\lone}+(2+2k^2_1)L^2 \mbox{ and }
\sum_{j=1}^{4} (c^{\:\ltwo}_{j2})^2=4p^2_{\ltwo}+(2k^2_2+2k^2_3)L^2,
\end{eqnarray}
then we can compute using~\eref{eqn:cLl}:
\begin{eqnarray}
\fl\sum_{j=1}^{4} \left( \frac{c^{\:l}_{j2}D}{D-C_{l}} \right)^2=(1+r_l)^2\sum_{j=1}^{4} (c^{\:l}_{j2})^2=\nonumber\\
=\cases{ (1+r_{\lone})^2\left(4p^2_{\lone}+(2+2k^2_1)L^2\right) & if $l=\lone$ \\ (1+r_{\ltwo})^2\left(4p^2_{\ltwo}+(2k^2_2+2k^2_3)L^2\right) & if $l=\ltwo$.}
\end{eqnarray}
From the detected $v^l_{ij}$ (the centers of the projected markers in projections) we have 
\begin{eqnarray} 
M^l_2(i)=\left(\sum_{j=1}^{4}\frac{1}{D-c^{\:l}_{j1}} \delta_{v^{l}_{ij}}, y^2\right)
=\frac{1}{D-C_{l}}\sum_{j=1}^{4} \left(v^{l}_{ij}\right)^2.
\label{eqn:m2fanbeamtocompute}
\end{eqnarray}
Now we can rewrite \eref{eqn:m2fanbeam} after multiplying each side of the equation by $D-C_{l}$:
\begin{eqnarray}
\fl \cases{ \sum_{j=1}^{4} (v^{\lone}_{ij})^2=(1+r_{\lone})^2\left(4p^2_{\lone}+(2+2k^2_1)L^2\right)+2\Delta \tilde{M}^{\lone}_1(i)\sum_{j=1}^{4} v^{\lone}_{0j}+4[\Delta \tilde{M}^{\lone}_1(i)]^2\\
\sum_{j=1}^{4} (v^{\ltwo}_{ij})^2=(1+r_{\ltwo})^2\left(4p^2_{\ltwo}+(2k^2_2+2k^2_3)L^2\right)+2\Delta \tilde{M}^{\ltwo}_1(i)\sum_{j=1}^{4} v^{\ltwo}_{0j}+4[\Delta \tilde{M}^{\ltwo}_1(i)]^2,} 
\end{eqnarray}
finally we obtain with \eref{eqn:p1p2} the following system:
\begin{eqnarray}
\fl \cases{ \sum_{j=1}^{4} (v^{\lone}_{ij})^2=(1+r_{\lone})^2(2+2k^2_1)L^2+\frac{1}{4}\left[\sum_{j=1}^{4} v^{\lone}_{0j} \right]^2+2\Delta \tilde{M}^{\lone}_1(i)\sum_{j=1}^{4} v^{\lone}_{0j}+4[\Delta \tilde{M}^{\lone}_1(i)]^2\\
\sum_{j=1}^{4} (v^{\ltwo}_{ij})^2=(1+r_{\ltwo})^2(2k^2_2+2k^2_3)L^2+\frac{1}{4}\left[\sum_{j=1}^{4} v^{\ltwo}_{0j} \right]^2+2\Delta \tilde{M}^{\ltwo}_1(i)\sum_{j=1}^{4} v^{\ltwo}_{0j}+4[\Delta \tilde{M}^{\ltwo}_1(i)]^2,}
\label{eqn:r1r2}
\end{eqnarray}
where $k_1$, $k_2$, $k_3$ and $L$ are known. 
We can compute uniquely $(1+r_{\lone})^2$ and $(1+r_{\ltwo})^2$ from these two equations~\eref{eqn:r1r2} and from one chosen projection $i \geq 1$.
Several projections could be used in order to improve the precision and the stability of the estimation of $r_l$, $l \in \{\lone,\ltwo\}$, in the case of noisy data (we used only one projection for our numerical simulations including experiments on noisy data). Suppose that the solution of the system is $R_l=(1+r_l)^2$, then $r_l=-1\pm R_l^{1/2}$. Since $r_l>0$, thus, from \eref{eqn:r1r2} we can uniquely compute $r_l=-1+R_l^{1/2}$, then from \eref{eqn:cLl} uniquely compute $C_{\lone}$ and $C_{\ltwo}$ for fixed $D$ and from \eref{eqn:p1p2} uniquely compute $p_{\lone}$ and $p_{\ltwo}$. For each $i \geq 1$ \eref{eqn:fanbeammain1} gives the system of two equations for computing the pair $y_{\lambda_i}$, $\lambda_i$. This system has a unique solution:
\begin{eqnarray}
\fl \det{ \left( \matrix{ 1 & -r_{\lone} \cr
1 & -r_{\ltwo}} \right)}=r_{\lone}-r_{\ltwo}= \frac{C_{\lone}}{D-C_{\lone}}-\frac{C_{\ltwo}}{D-C_{\ltwo}}=\frac{C_{\lone}(D-C_{\ltwo})-C_{\ltwo}(D-C_{\lone})}{(D-C_{\lone})(D-C_{\ltwo})} \nonumber\\
=\frac{D(C_{\lone}-C_{\ltwo})}{(D-C_{\lone})(D-C_{\ltwo})} \neq 0.
\end{eqnarray}

With \eref{eqn:r1r2}, \eref{eqn:cLl}, \eref{eqn:p1p2}, \eref{eqn:fanbeammain1} we can analytically compute all geometric calibration parameters $\lambda_i$, $y_{\lambda_i}$, $i=1,\ldots,P-1$ ($\lambda_0$ and $y_{\lambda_0}$ are chosen to be $0$) and the locations $C_{\lone}$, $p_{\lone}$ and $C_{\ltwo}$, $p_{\ltwo}$ of both 4-marker sets from $v^l_{ij}$, $j=1,\ldots,4$, $i=0,\ldots,P-1$, $l\in\{\lone,\ltwo\}$, projections of the markers in the projection data.

\subsection{Numerical simulations}
\label{sect:NumericalSimulationFanBeam}
The calibration marker sets are presented in~\Fref{fig:truncationfanbeam}. All distances are in cm. 
The $x_2$-coordinates of each 4-marker calibration object are respectively $p_{\lone}-k_1L$, $p_{\lone}-L$, $p_{\lone}+L$, $p_{\lone}+k_1L$  for the first line (where $L=0.4$ and $k_1=3$ are known), $p_{\ltwo}-k_3L$, $p_{\ltwo}-k_2L$, $p_{\ltwo}+k_2L$ and $p_{\ltwo}+k_3L$ for the second line (where $k_2=1$ and $k_3=2$ are known). The cross-ratio for the first line is $\frac{2L \cdot 2L}{4L \cdot 4L}=\frac{1}{4}$, for the second line is $\frac{L \cdot L}{3L \cdot 3L}=\frac{1}{9}$ (according to the definition from~\cite{Hartley2000} $ \textnormal{Cross}(z_1,z_2,z_3,z_4)=\frac{|z_1 z_2|\cdot|z_3z_4|}{|z_1 z_3|\cdot|z_2z_4|}$ with the distance $|z_i z_j|$ between two points $z_i$ and $z_j$).
The known distance $D$ between the source and detector lines is $10$. The unknown parameters are:
\begin{itemize}
    \item the positions of both 4-marker calibration objects:  $p_{\lone}=0$ and the distance to the detector line $C_{\lone}=1.5$ (the $x_1$-coordinate of 4 markers), $p_{\ltwo}=3.2$ and the distance to the detector line $C_{\ltwo}=0.5$ respectively;
    \item the source positions $\lambda_i$, $i=0,\ldots,P-1$, $P=30$,  are random values uniformly distributed on the interval $[-5,5]$;
    \item the detector jitters $y_{\lambda_i}$, $i=0,\ldots,P-1$, are random values uniformly distributed on the interval $[-0.05,0.05]$.
\end{itemize}

\begin{table}[bthp]
\caption{\label{tabfanbeam} Mean absolute errors for calibration parameters and the positions of the markers for numerical experiments with non-noisy and noisy projections. All errors are in cm and defined in~\eref{eqn:lambda_err} and~\eref{eqn:C_err}.} 

\begin{indented}
\lineup
\item[]\begin{tabular}{@{}*{6}{l}}
\br                              
Noise level&Noise std&Error for $\lambda_i$ &Error for $y_{\lambda_i}$  &Error for $p_l$ &Error for $C_l$ \cr 
~& ~ & $\ErrLambda$ & $\ErrY$ & $\ErrP$ & $\ErrC$ \cr
\mr
0\%&0&$2.10\times10^{-14}$&$3.62\times10^{-15}$&$1.17\times10^{-15}$&$3.39\times10^{-15}$\cr
    10\%&0.001&$2.32\times10^{-2}$&$3.40\times10^{-3}$&$1.26\times10^{-3}$&$4.58\times10^{-3}$\cr 
    50\% &0.005&$1.01\times10^{-1}$&$1.52\times10^{-2}$&$5.86\times10^{-3}$&$2.12\times10^{-2}$\cr 
    100\% &0.01&$2.20\times10^{-1}$&$3.13\times10^{-2}$&$1.15\times10^{-2}$&$4.32\times10^{-2}$\cr
    200\% &0.02&$5.19\times10^{-1}$&$7.45\times10^{-2}$&$2.63\times10^{-2}$&$9.66\times10^{-2}$\cr
\br
\end{tabular}
\end{indented}
\end{table}

We computed the exact projections of the marker centers from their coordinates in the initial coordinate system and produced as input two arrays $v^l_{ij}$, $i=0,\ldots,P-1$, $j=1,\ldots,4$, each of size $4P$, $l\in\{\lone,\ltwo\}$ (in each of both arrays, each of $P$ rows contains the coordinates $v^l_{ij}$, $j=1,\ldots,4$, of the projections of 4 markers). 
We simulated also noisy projection data of the markers. We modeled the detection error with a Gaussian noise $N(0,\sigma)$ added to $v^l_{ij}$, where $\sigma$ is the product of the noise level and the pixel size  $0.01$~cm of  the detector line
 (e.g., for the noise level of $200\%$ $\sigma=0.02$~cm).

The geometric calibration results are given such that $\lambda_0=0$ and $y_{\lambda_0}=0$. In order to  compare our parameter estimations with their true values, we need to subtract $\lambda_0$ from the true $\lambda_i$ and  $y_{\lambda_0}$ from the true $y_{\lambda_i}$, $i=0,\ldots,P-1$. From the discussion in the subsection~\ref{SectFanBeamNonUniqueSol} on the non-uniqueness of the solution we also need to perform a transformation on the estimations of $p_{\lone}$ and $p_{\ltwo}$ in order to compare them correctly with their true values.
The connection between the sheared and initial ordinates is $x^{{\mathrm{sheared}}}_{2}=x^{{\rm initial}}_{2}-(\lambda_0/D+y_{\lambda_0}/D)x^{{\rm initial}}_{1}+y_{\lambda_0}$, see~\eref{Eq:ShearAndTransForFunction} and~\eref{Eq:ShearAndTrans}. With this formula, we can compare the initial and estimated ordinates of the centers of mass $p_{\lone}$ and $p_{\ltwo}$ of each 4-marker group.

We present numerical results in~\Tref{tabfanbeam}. With noise-free projections our algorithm gives an almost exact solution (noise level of $0\%$ in \Tref{tabfanbeam}). We show mean absolute errors from $\nbk = 100$ realizations of the Gaussian noise for each noise level. For source positions and detector shifts we computed 
\begin{eqnarray}
\label{eqn:lambda_err}
\ErrLambda\eqdef \frac{1}{\nbk}\sum_{k=1}^{\nbk}
\frac{1}{P}\sum_{i=0}^{P-1} |\hat{\lambda}_{i,k} - \lambda_{i}|, \:
\ErrY\eqdef\frac{1}{\nbk}\sum_{k=1}^{\nbk}
\frac{1}{P}\sum_{i=0}^{P-1} |\hat{y}_{\lambda_i,k} - y_{\lambda_i}|,
\end{eqnarray}  
where $\lambda_{i}$ and $y_{\lambda_i}$ are the true values for the projection $i$, $\hat{\lambda}_{i,k}$ and $\hat{y}_{\lambda_i,k}$ are the corresponding estimations from our algorithm (see~\eref{eqn:fanbeammain1} and~\eref{eqn:r1r2}) of the source position and the detector shift for the projection $i$ with the Gaussian noise number $k$ added to projections $v^l_{ij}$. For the positions of both objects of four markers we computed 
\begin{eqnarray}
\label{eqn:C_err}
\ErrC\eqdef\frac{1}{\nbk}\sum_{k=1}^{\nbk} \frac{1}{2}\sum_{l\in\{\lone,\ltwo\}} |\hat{C}_{l,k} - C_{l}|, \:
\ErrP\eqdef \frac{1}{\nbk}\sum_{k=1}^{\nbk} \frac{1}{2}\sum_{l\in\{\lone,\ltwo\}} |\hat{p}_{l,k} - p_{l}|,
\end{eqnarray}
where $C_{l}$ and  $p_{l}$ are the true values, $\hat{C}_{l,k}$ and $\hat{p}_{l,k}$ are the estimations of the position of the marker objects, $l \in \{\lone,\ltwo\}$, from our algorithm (see~\eref{eqn:cLl},~\eref{eqn:p1p2} and~\eref{eqn:r1r2}) with the Gaussian noise number $k$ added to geometric projection data  $v^l_{ij}$.
With noisy projections, the errors are essentially proportional to the noise level. 
The results of our estimations of source positions are comparable with the estimations of Jonas~\cite{jonas18} for the general fan-beam geometry (Jonas estimated only source positions).

\section{Conclusion}
\label{SecConclusions}

Range conditions on functions in the so-called "projection form" (like HLCCs in~\cite{helgason65,ludwig66} or in~\cite{clackdoyle13} for the fan-beam linogram transform) have been applied to the geometric self-calibration of X-ray systems. In this work, we showed that range conditions on distributions can be considered as a powerful tool to solve the similar task. We constructed first DCCs on distributions for the fan-beam geometry and simplified existing DCCs for the parallel geometry. DCCs are based on moments of projections. Thus, DCCs on functions can not be computed if the projections are truncated. Singularities like Dirac distributions can model markers introduced in the X-ray system field-of-view.   If all singularities are in the field-of-view of projections, their projections aren't truncated. Thus, the range conditions of projections of singularities can be computed if full projections are truncated.

We have illustrated these ideas in two 2D projection geometries: the classical 2D parallel projection geometry (the 2D Radon transform) and the 2D fan-beam geometry with sources on a line. 
For both geometries we have proposed a {\em partially known} set of Dirac distributions and associated analytical formulas to identify geometrical projection parameters: the angles for the Radon transform or the source positions in fan-beam geometry with sources on a line and the detector jitters in both geometries. With the derived methods, we can analytically estimate a large number of geometric calibration parameters. We have presented numerical experiments where $2P$ geometrical parameters were estimated from $P$ projections, the angles and shifts with $P=80$ in parallel geometry and the source positions and detector shifts with $P=30$ in fan-beam linogram geometry respectively. From the derived DCCs on distributions, we observed that the behavior of moments remains polynomial in the case of projections of distributions. This demonstrates that the computation of moments on distributions is instrumental in deriving meaningful and easily manipulable equations, which can significantly aid in identifying the parameters for calibration.

We have numerically shown that our derived calibration methods are robust to noise. The error of the parameter estimation behaves essentially linearly relatively to the noise level in the range from $10\%$ to $200\%$ of the detector pixel size, see the subsections~\ref{sect:NumericalSimulationParallel} and~\ref{sect:NumericalSimulationFanBeam}. Contrary to many self-calibration methods based on DCCs on functions or bundle adjustment~\cite{basu100,basu200,lesaint2017a,aichert15,konik21,unberath17}, we do not use iterative methods (to solve non-linear problems), but we have closed-form solutions to estimate the geometric calibration parameters from moment conditions. Of course, we present only the initial attempt to exploit the idea of calibrating using moment conditions on distributions. One area for improvement, as it presented a disadvantage, is for the fan-beam geometry, where we need to place a pair of calibration marker sets parallel to the detector line. However, in our algorithm, their positions are flexible, which is an advantage, as it ensures the markers can be positioned within the X-ray system field-of-view, mitigating the potential limitation. The future work could include the application of the proposed idea of the utilization of moment conditions on distributions solely on non-truncated singularities not only for singularities represented by introduced markers with partially known geometries, but also for intrinsic singularities of the image, as used in computer vision. This approach would help avoid disturbing the scene with markers, but it would likely require the application of optimization techniques to solve the equations derived from the moment conditions on distributions.

\ack
This work is supported by the French National Research Agency in the framework of the "Investissement d'avenir" program (ANR-15-IDEX-02), the "Fonds unique interministériel", the European Union FEDER in Auvergne-Rhône-Alpes (3D4Carm) and  the ANR CAMI LabEx (ANR-11-LABX-0004-01).

\section{Appendix}
\label{SecAppendix}

In this section, we discuss the fact that the Radon transform at fixed $\alpha$ of distributions and that the fan-beam transform with sources on a line at fixed $\lambda$ (i.e., at fixed source position) of distributions are distributions.

A distribution $T$ acting on a set of functions is a continuous linear functional from this set to $\mathbb{R}$. The check of continuity of distributions can be replaced by the check of their boundedness, see~\cite{bony01}. 
$\mathscr{E}(\Omega_N)$ ($N=1$ or $N=2$ in our case) is usually considered as a Fréchet space, a special topological vector space equipped with a set of semi-norms.
\begin{definition}[\cite{bony01}, C.1, p.239]
For the vector space $X$ a non-negative real-value function $P:X \to [0,+\infty)$ is a semi-norm if it satisfies $P(f+g) \leq P(f)+P(g)$ and $P(\lambda f)=|\lambda|P(f)$ $\forall f\in  X$, $\forall g \in  X$ and $\forall\lambda\in\bR$.
\end{definition}
Semi-norms help to define a proper topology in $\mathscr{E}(\Omega_N)$. According to~\cite{bony01}, Chapter C.2.6, p.243 and Chapter 2.3.B, p.38, the set of semi-norms is defined 
\begin{itemize}
    \item in the case $N=2$ by
    \begin{eqnarray}
    \Ptwoi(\psi)=\sum_{|\bbeta| \leq i } \sup_{\vec{x} \in \Ktwoi} \abs{\partial^{\bbeta} \psi (\vec{x})},\: i\in\bN,
    \label{eqn:seminormsbasictwo}
    \end{eqnarray}
    where $\psi \in \mathscr{E}(\Omega_2)$,  $\bbeta=(\beta_1,\beta_2) \in \mathbb{N}^2$, $|\bbeta|=\beta_1+\beta_2$, and $\partial^{\bbeta} = \partial_1^{\beta_1}\partial_2^{\beta_2}$ are partial derivatives of order $\beta_k$ according to $x_k$, $k=1,2$, 
    the set $\{\Ktwoi\}_{i\in\bN}$ is a set of compacts with $\cup_{i\in\bN} \Ktwoi = \Omega_2$, $\Ktwoi$ is a subset of the interior of the set $\Ktwoipo$;
\item in the case $N=1$ by 
    \begin{eqnarray}
    \Ponej(\phi)=\sum_{\gamma \leq j } \sup_{s \in \Konej} \abs{\phi^{(\gamma)} (s)}, \: j\in\bN,
    \label{eqn:seminormsbasicone}
    \end{eqnarray}
    where $\phi \in \mathscr{E}(\Omega_1)$, $\gamma \in \mathbb{N}$, 
    the set $\{\Konej\}_{j\in\bN}$ is a set of compacts with $\cup_{j\in\bN} \Konej = \Omega_1$, $\Konej$ is a subset of the interior of the set $\Konejpo$.
\end{itemize}

According to~\cite{bony01}, Theorem C.1.7, Chapter C.1, p.240, for $T: \mathscr{E} (\Omega_N)\to \mathbb{R}$, the continuity of the functional $T$ is equivalent to its boundedness. The functional $T$ is bounded if 
\begin{eqnarray}
\exists j \in \bN, \: \exists C>0, \: \forall \phi \in \mathscr{E} (\Omega_N)  \: |T(\phi)| \leq C P_j(\phi), 
\label{eqn:boundedT}
\end{eqnarray}
where $P_j$ is a semi-norm on $\mathscr{E} (\Omega_N)$ and we used the remark that the normed space $\left(\bR,\abs{\cdot}\right)$ is a particular case of a semi-normed space, see Chapter C.1, p.239,~\cite{bony01}.

\begin{lemma}\label{LemmaRadonBounded}
$\mathcal{R}_{\alpha}f$ defined in \eref{eqn:dualityradonour} is bounded.
\end{lemma}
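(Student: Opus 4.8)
The plan is to use the defining identity $(\mathcal{R}_{\alpha}f,\phi)=\langle f,\mathcal{R}_{\alpha}^{*}\phi\rangle$ together with the boundedness of $f$ as an element of $\mathscr{E}'_2$, and then to dominate a seminorm of $\mathcal{R}_{\alpha}^{*}\phi$ on $\mathbb{R}^2$ by a seminorm of $\phi$ on $\mathbb{R}$. Since $f\in\mathscr{E}'_2$ is continuous, the criterion \eref{eqn:boundedT} gives $C>0$ and $i\in\bN$ with $|\langle f,\psi\rangle|\le C\,\Ptwoi(\psi)$ for all $\psi\in\mathscr{E}_2$, where $\Ptwoi$ is the seminorm \eref{eqn:seminormsbasictwo} attached to the compact $\Ktwoi$. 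Taking $\psi=\mathcal{R}_{\alpha}^{*}\phi\in\mathscr{E}_2$ (smoothness was noted after \eref{eqn:adjointradon}) reduces the statement to an estimate of the form $\Ptwoi(\mathcal{R}_{\alpha}^{*}\phi)\le C'\,\Ponej(\phi)$ for suitable $j\in\bN$ and $C'>0$.

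To obtain that estimate, I would differentiate $\mathcal{R}_{\alpha}^{*}\phi(\vec{x})=\phi(\vec{x}\cdot\vec{\theta}_{\alpha})$. Because $\vec{x}\mapsto\vec{x}\cdot\vec{\theta}_{\alpha}$ is linear, the chain rule yields $\partial^{\bbeta}(\mathcal{R}_{\alpha}^{*}\phi)(\vec{x})=(\cos\alpha)^{\beta_1}(\sin\alpha)^{\beta_2}\,\phi^{(|\bbeta|)}(\vec{x}\cdot\vec{\theta}_{\alpha})$, hence $|\partial^{\bbeta}(\mathcal{R}_{\alpha}^{*}\phi)(\vec{x})|\le|\phi^{(|\bbeta|)}(\vec{x}\cdot\vec{\theta}_{\alpha})|$ since $|\cos\alpha|,|\sin\alpha|\le1$. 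Putting $a_i=\sup_{\vec{x}\in\Ktwoi}\norm{\vec{x}}<\infty$, the Cauchy--Schwarz inequality gives $\vec{x}\cdot\vec{\theta}_{\alpha}\in[-a_i,a_i]$ for every $\vec{x}\in\Ktwoi$. Since $\{\Konej\}_{j}$ is an increasing compact exhaustion of $\mathbb{R}$, I can pick $j\ge i$ with $\Konej\supset[-a_i,a_i]$; then $\sup_{\vec{x}\in\Ktwoi}|\phi^{(|\bbeta|)}(\vec{x}\cdot\vec{\theta}_{\alpha})|\le\sup_{s\in\Konej}|\phi^{(|\bbeta|)}(s)|$ whenever $|\bbeta|\le i$. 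Summing over the finitely many $\bbeta$ with $|\bbeta|\le i$ --- grouping those with $|\bbeta|=\gamma$, of which there are $\gamma+1\le i+1$ --- gives $\Ptwoi(\mathcal{R}_{\alpha}^{*}\phi)\le (i+1)\,\Ponej(\phi)$. Combining with the first step, $|(\mathcal{R}_{\alpha}f,\phi)|\le C(i+1)\,\Ponej(\phi)$, which is precisely \eref{eqn:boundedT} for $\mathcal{R}_{\alpha}f$ on $\mathscr{E}_1$, so $\mathcal{R}_{\alpha}f\in\mathscr{E}'_1$.

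The only delicate bookkeeping point --- and the step I would flag as the main obstacle --- is the simultaneous choice of the index $j$: it must be taken large enough both to dominate the orders of the derivatives involved (forcing $j\ge i$) and so that $\Konej$ contains the image of $\Ktwoi$ under $\vec{x}\mapsto\vec{x}\cdot\vec{\theta}_{\alpha}$, which is exactly where the compact exhaustion hypothesis on $\{\Konej\}$ enters. Everything else (the chain-rule formula, the bound $|\cos\alpha|,|\sin\alpha|\le1$, counting multi-indices) is routine. It is worth noting that the resulting constant $C(i+1)$ does not depend on $\alpha$, so the same bound holds uniformly for all $\alpha\in[0,2\pi)$.
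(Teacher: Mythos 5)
Your proof is correct and follows essentially the same route as the paper's: dualizing via $\mathcal{R}_{\alpha}^{*}$, invoking the boundedness of $f\in\mathscr{E}'_2$, computing $\partial^{\bbeta}\bigl(\phi(\vec{x}\cdot\vec{\theta}_{\alpha})\bigr)=(\cos\alpha)^{\beta_1}(\sin\alpha)^{\beta_2}\phi^{(|\bbeta|)}(\vec{x}\cdot\vec{\theta}_{\alpha})$, and choosing $\Konej$ large enough to contain the image of $\Ktwoi$ under $\vec{x}\mapsto\vec{x}\cdot\vec{\theta}_{\alpha}$ with $j$ at least the top derivative order. The only cosmetic difference is that you bound the trigonometric factors by $1$ and pick a single common index $j$ up front (yielding the $\alpha$-uniform constant $C(i+1)$), where the paper keeps per-multi-index constants $C(\bbeta)$ and compacts $\Konejbbeta$ before taking maxima.
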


\begin{proof}
We have seen that $\forall \phi\in \mathscr{E}_1$, $\mathcal{R}_{\alpha}^{*}(\phi) \in \mathscr{E}_2$. Since $f$ is bounded, according to~\eref{eqn:boundedT}, $\exists I{\color{black} \in \bN}$, $\exists C_1 {\color{black} > 0}$, $\forall \phi\in \mathscr{E}_1$   
\begin{eqnarray}
|\mathcal{R}_{\alpha}f(\phi)|=|f\left( \mathcal{R}_{\alpha}^{*}(\phi)\right| \leq C_1 \PtwoI\left( \mathcal{R}_{\alpha}^{*}(\phi)\right),
\label{eqn:bounded_f_of_RalphaDiese}
\end{eqnarray}
where  $\Ptwoi$, $i \in \mathbb{N}$, are semi-norms on $\mathscr{E}_2$.

To prove that $\mathcal{R}_{\alpha}f$ is bounded, we need to show that 
$\exists J \in \bN$, $\exists C_2 > 0$, $\forall \phi \in \mathscr{E}_1$
\begin{eqnarray}
|\mathcal{R}_{\alpha}f(\phi)|\leq  C_2 \PoneJ(\phi),
\end{eqnarray}
where $\Ponej$, $j \in \mathbb{N}$, are semi-norms on $\mathscr{E}_1$. From~\eref{eqn:bounded_f_of_RalphaDiese}, it's sufficient to show that $\forall I \in \bN$, $\exists J \in \bN$, $\exists C_3 > 0$, $\forall \phi \in \mathscr{E}_1$
\begin{eqnarray}
\PtwoI\left( \mathcal{R}_{\alpha}^{*}(\phi)\right) \leq  C_3 \PoneJ(\phi).
\label{eqn:seminormsradon}
\end{eqnarray}

Each $\Ptwoi$, $i\in\bN$, in~\eref{eqn:seminormsbasictwo} is a finite sum over $\bbeta$ of expressions
\begin{eqnarray}
\Ptwotibbeta(\psi)=\sup_{\vec{x} \in \Ktwoi} |\partial^{\bbeta} \psi (\vec{x})|,
\label{eqn:Ptwotibbeta}
\end{eqnarray}
where $\Ktwoi$ is a compact set, $\cup_{i \in \bN} \Ktwoi=\mathbb{R}^2$, $\Ktwoi$ is in the interior of the set $\Ktwoipo$, $\psi\in\mathscr{E}_2$ (in particular, when  $\psi(\vec{x})=\mathcal{R}_{\alpha}^{*}(\phi)(\vec{x}) =\phi(\vec{x} \cdot \vec{\theta}_{\alpha})$). Each $\Ponej$,  $j\in\bN$, in~\eref{eqn:seminormsbasicone} is a finite sum over $\gamma$ of expressions
\begin{eqnarray}
\Ponetjgamma(\phi)=\sup_{s \in \Konej} \abs{\phi^{(\gamma)} (s)},
\label{eqn:Ponetjgamma}
\end{eqnarray}
where $\Konej$ is a compact set, $\cup_{j \in \bN} \Konej=\mathbb{R}$, $\Konej$ is in the interior of the set $\Konejpo$.
Thus, it's sufficient to show that $\forall \bbeta$, $\exists \gamma(\bbeta)  \in \bN$, $\exists j(\bbeta) \in \bN$, $\exists C(\bbeta)>0$, $\forall \phi \in \mathscr{E}_1$
\begin{eqnarray}
\PtwotIbbeta\left(\mathcal{R}_{\alpha}^{*}(\phi)\right) \leq C(\bbeta) \PonetJgammabeta(\phi).
\label{eqn:seminormstoproveradon}
\end{eqnarray}

$\PtwoI$ is a finite sum over $\bbeta$ of $\PtwotIbbeta$. Thus, if we have \eref{eqn:seminormstoproveradon}, we can define one common constant $C_3$ as the maximum of all $C(\bbeta)$, one common $J$ such that it corresponds to sufficiently large compact $\KoneJ$ containing all compacts $\Konejbbeta$ and fulfilling $\gamma (\bbeta) \leq J$ $\forall \gamma (\bbeta)$, then \eref{eqn:seminormstoproveradon} gives \eref{eqn:seminormsradon}.

Let us show \eref{eqn:seminormstoproveradon}. For each function $\phi \in \mathscr{E}_1$: $\psi(\vec{x})\eqdef\phi(\vec{x} \cdot \vec{\theta}_{\alpha})=\phi(x_1 \cos{\alpha}+x_2 \sin{\alpha})$ we obtain $\partial^{\beta_1}_{1}\partial^{\beta_2}_{2}\psi(\vec{x}) =(\cos{\alpha})^{\beta_1} (\sin{\alpha})^{\beta_2} \phi^{(|\bbeta|)} (\vec{x} \cdot \vec{\theta}_{\alpha})=(\cos{\alpha})^{\beta_1} (\sin{\alpha})^{\beta_2} \phi^{(\gamma (\bbeta))} (\vec{x} \cdot \vec{\theta}_{\alpha})$ with $\bbeta=(\beta_1,\beta_2)\in \mathbb{N}^2$, $\gamma (\bbeta)\eqdef|\bbeta|$. 
Since $\abs{\vec{x} \cdot \vec{\theta}_{\alpha}}\leq \norm{\vec{x}}$  and $\vec{x}$ is in the compact set $\KtwoI$, because $\cup_{j \in \bN} \Konej=\mathbb{R}$ with $ \Konej\subset  \Konejpo$, 
for the compact set $\KtwoI$, we can choose $j(\bbeta)$ (depending also on $I$) sufficiently large such that $\vec{x} \cdot \vec{\theta}_{\alpha}=x_1 \cos{\alpha}+x_2 \sin{\alpha}$ is in $\KonejbbetaI$ for all $\vec{x} \in \KtwoI$ with $\gamma (\bbeta) \leq j(\bbeta)$. Thus, 
\begin{eqnarray}
\sup_{\vec{x} \in \KtwoI} \abs{\phi^{(\gamma (\bbeta)) } (\vec{x} \cdot \vec{\theta}_{\alpha})} \leq \sup_{s \in \KonejbbetaI} \abs{\phi^{(\gamma (\bbeta)) } (s)}.   
\end{eqnarray} 
We can conclude
\begin{eqnarray}
\fl \PtwotIbbeta(\psi)=\sup_{\vec{x} \in \KtwoI} \abs{\partial^{\bbeta} \psi (\vec{x})} \leq  C(\bbeta)\sup_{\vec{x} \in \KtwoI} \abs{\phi^{(\gamma (\bbeta))} (\vec{x} \cdot \vec{\theta}_{\alpha})}  \leq C(\bbeta) \sup_{s \in \KonejbbetaI} \abs{\phi^{(\gamma (\bbeta))} (s)} \nonumber \\ = C(\bbeta) \PonetJgammabeta(\phi),
\end{eqnarray} 
where $C(\bbeta) =|(\cos{\alpha})^{\beta_1} (\sin{\alpha})^{\beta_2}|$. This yields the proof.
\end{proof}

\begin{lemma}\label{LemmaFanbeamBounded}
$\mathcal{D}_{\lambda}f$ defined in \eref{eqn:dualityfanbeam} is bounded.
\end{lemma}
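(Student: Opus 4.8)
The plan is to mirror the proof of Lemma~\ref{LemmaRadonBounded}, with the Radon adjoint $\mathcal{R}_\alpha^{*}$ replaced by the fan-beam adjoint $\mathcal{D}_\lambda^{*}$ of~\eref{eqn:adjointfanbeam} and $\mathbb{R}^2$ replaced by $X_2=(D_1,D_2)\times\mathbb{R}$. First I would recall, as observed right after~\eref{eqn:adjointfanbeam}, that $\mathcal{D}_\lambda^{*}\phi\in\mathscr{E}(X_2)$ for every $\phi\in\mathscr{E}_1$. Since $f\in\mathscr{E}'(X_2)$ is bounded, \eref{eqn:boundedT} yields $I\in\bN$ and $C_1>0$ with $|\mathcal{D}_\lambda f(\phi)|=|f(\mathcal{D}_\lambda^{*}\phi)|\le C_1\,\PtwoI(\mathcal{D}_\lambda^{*}\phi)$, where now $\{\Ktwoi\}_{i\in\bN}$ is a compact exhaustion of $X_2$ (not of $\mathbb{R}^2$). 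As in the Radon case, it then suffices to show that for every $I$ there are $J\in\bN$ and $C_3>0$ with $\PtwoI(\mathcal{D}_\lambda^{*}\phi)\le C_3\,\PoneJ(\phi)$, and since $\PtwoI$ is a finite sum of the $\PtwotIbbeta$, this reduces to a bound $\PtwotIbbeta(\mathcal{D}_\lambda^{*}\phi)\le C(\bbeta)\,\PonetJgammabeta(\phi)$ for each multi-index $\bbeta$.

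The core step is then the explicit differentiation of $\mathcal{D}_\lambda^{*}\phi(\vec{x})=\frac{1}{D-x_1}\,\phi(u(\vec{x}))$ with $u(\vec{x})=\frac{x_2 D-x_1\lambda}{D-x_1}$. Since $\partial_2 u=\frac{D}{D-x_1}$ and $\partial_1 u=\frac{D(x_2-\lambda)}{(D-x_1)^2}$, the Leibniz and chain rules express $\partial^{\bbeta}(\mathcal{D}_\lambda^{*}\phi)$ as a finite sum of terms $q(\vec{x})\,\phi^{(m)}(u(\vec{x}))$ with $0\le m\le|\bbeta|$, where each coefficient $q$ is a polynomial in $x_2$ and $\lambda$ divided by a power of $D-x_1$. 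The geometric fact replacing the inequality $|\vec{x}\cdot\vec{\theta}_\alpha|\le\norm{\vec{x}}$ used in Lemma~\ref{LemmaRadonBounded} is the following: on a compact $\KtwoI\subset X_2$ the coordinate $x_1$ stays in a compact subinterval of $(D_1,D_2)$, hence $D-x_1$ is bounded away from $0$; consequently every $q$ is bounded on $\KtwoI$, and the continuous map $u$ sends $\KtwoI$ onto a compact subset of $\mathbb{R}$, which we enclose in $\KonejbbetaI$ by taking $j(\bbeta)$ large enough (and $\ge|\bbeta|\eqdef\gamma(\bbeta)$). Taking suprema gives $\PtwotIbbeta(\mathcal{D}_\lambda^{*}\phi)\le C(\bbeta)\sum_{m\le|\bbeta|}\sup_{s\in\KonejbbetaI}\abs{\phi^{(m)}(s)}\le C'(\bbeta)\,\PonetJgammabeta(\phi)$, and summing over the finitely many $\bbeta$ with $|\bbeta|\le I$, with $C_3=\max_{\bbeta}C'(\bbeta)$ and $J$ a common index dominating all the $j(\bbeta)$, closes the chain of inequalities.

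The step I expect to require the most care is the bookkeeping for $\partial^{\bbeta}$ in the second paragraph: unlike $\mathcal{R}_\alpha^{*}$, the fan-beam adjoint carries the weight $\frac{1}{D-x_1}$ and a rational inner argument $u$, so one must verify that all coefficient functions produced by differentiation remain smooth and bounded on the relevant compacts. This is precisely where the standing assumption $0<D_1<D_2<D$ enters: it keeps $D-x_1$ uniformly positive on $\supp f$, so no singularity of $\frac{1}{D-x_1}$ or of its derivatives is ever met. Apart from this point, the argument is a line-by-line transcription of the proof of Lemma~\ref{LemmaRadonBounded}.
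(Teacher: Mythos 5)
Your proposal is correct and takes essentially the same route as the paper: reduce via \eref{eqn:boundedT} to the semi-norm estimate \eref{eqn:seminormsfanbeam}, differentiate $\mathcal{D}_{\lambda}^{*}\phi$ explicitly, and use that $D-x_1$ is bounded away from zero on compacts of $X_2$ while the inner argument maps such compacts into a compact of $\mathbb{R}$; the paper merely makes your Leibniz/chain-rule bookkeeping explicit through an induction showing each $\partial^{\bbeta}$ is a finite sum of terms $\phi^{(\gammaBd)}(\cdot)\,C(x_2-\lambda)^{p_2}/(D-x_1)^{p_1}$ with controlled exponents. The only (easily fixed) slip is notational: since several derivative orders $\gammaBd\le|\bbeta|$ appear, the intermediate bound should be against a finite sum $\sum_{\gammaBd}\tilde{P}_{1,j(\bbeta),\gammaBd}(\phi)$, as in \eref{eqn:seminormstoprovefanbeam}, or the full $\PoneJ(\phi)$, rather than a single $\tilde{P}_{1,j(\bbeta),\gamma(\bbeta)}$, which does not affect your final conclusion.
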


\begin{proof}
By analogy with the previous proof, we define the set of compacts $\Konej$, where $\cup_{j \in \bN} \Konej=\mathbb{R}$, $\Konej$ is in the interior of the set $\Konejpo$, each semi-norm $\Ponej$ of $\mathscr{E}_1$ is a finite sum over $\gamma$ of $\Ponetjgamma$ as in~\eref{eqn:Ponetjgamma}. 
We consider $\Ktwoi=[d^i_1,d_2^i]\times[-b^i,b^i]$ with $\cup_{i \in \bN} \Ktwoi=X_2$, where $X_2=(D_1,D_2) \times \mathbb{R}$, $0<D_1<D_2<D$ with $0<d^i_1<d^i_2<D$, $b^i >0$, $b^i \in \mathbb{R}$, 
$0<D_1<d^{i+1}_1 < d^{i}_1 < d^{i}_2 < d^{i+1}_2<D_2<D$, $0<b^{i} < b^{i+1}$, each semi-norm $\Ptwoi$ of $\mathscr{E}(X_2)$ is a finite sum over $\bbeta$ of $\Ptwotibbeta$, as in~\eref{eqn:Ptwotibbeta}.

We have seen that $\forall \phi\in \mathscr{E}_1$, $\mathcal{D}_{\lambda}^{*}(\phi) \in \mathscr{E}(X_2)$. Then since $f$ is bounded, according to~\eref{eqn:boundedT}, $\exists I{\color{black} \in \bN}$, $\exists C_1 {\color{black} > 0}$, $\forall \phi\in \mathscr{E}_1$   
\begin{eqnarray}
|\mathcal{D}_{\lambda}f(\phi)|=\left|f\left( \mathcal{D}_{\lambda}^{*}(\phi)\right)\right| \leq C_1 \PtwoI\left( \mathcal{D}_{\lambda}^{*}(\phi)\right).
\label{eqn:bounded_fDstar}
\end{eqnarray}

To prove that $\mathcal{D}_{\lambda}f$ is bounded, we need to show that 
$\exists J \in \bN$, $\exists C_2 > 0$, $\forall \phi \in \mathscr{E}_1$
\begin{eqnarray}
|\mathcal{D}_{\lambda}f(\phi)|\leq  C_2 \PoneJ(\phi).
\end{eqnarray}
Thus, from~\eref{eqn:bounded_fDstar}, it's sufficient to show that $\forall I\in\bN$, $\exists J \in \bN$, $\exists C_3 > 0$, $\forall \phi \in \mathscr{E}_1$
\begin{eqnarray}
\PtwoI\left( \mathcal{D}_{\lambda}^{*}(\phi)\right) \leq  C_3 \PoneJ(\phi).
\label{eqn:seminormsfanbeam}
\end{eqnarray}

For that, it's sufficient to show that $\forall \bbeta\in\bN^2$,  there is a finite set  $\GammaBB\subset\bN$ (of $\gammaBd$), $\exists j(\bbeta) \in \bN$, $\exists C(\bbeta) > 0$, $\forall \phi \in \mathscr{E}_1$
\begin{eqnarray}
\PtwotIbbeta \left( \mathcal{D}_{\lambda}^{*}(\phi)\right)\leq C(\bbeta) \sum_{\gammaBd \in \GammaBB} \PonetJgamma(\phi),
\label{eqn:seminormstoprovefanbeam}
\end{eqnarray}
where the sum over $\gammaBd$ in the expression is finite.  
 Similarly with the previous proof, if we have \eref{eqn:seminormstoprovefanbeam}, we can define one common constant $C_3$ as the maximum of all $C(\bbeta)$, one common $J$ such that it corresponds to sufficiently large compact $\KoneJ$ containing all compacts $\Konejbbeta$ and fulfilling $\gammaBd \leq J$ for all sequences of $\gammaBd$, then \eref{eqn:seminormstoprovefanbeam} gives \eref{eqn:seminormsfanbeam}, because $\PtwoI$ is a finite sum of $\PtwotIbbeta$.

From now, we define $\psi(\vec{x})\eqdef \mathcal{D}_{\lambda}^{*}(\phi)(\vec{x}) = \frac{1}{D-x_1}\phi \left( \frac{x_2D-x_1\lambda}{D-x_1} \right)=\frac{1}{D-x_1}\phi \left( \lambda+ \frac{D\left(x_2-\lambda\right)}{D-x_1}  \right)$.

Let us start with the simple case $\bbeta=(0,0)$: 
\begin{eqnarray}
\fl \tilde{P}_{2,I,(0,0)}(\psi)=\sup_{\vec{x} \in \KtwoI} |\psi (\vec{x})|=\sup_{\vec{x} \in \KtwoI} \abs{\frac{1}{D-x_1}  \phi \left( \frac{x_2D-x_1\lambda}{D-x_1} \right) } \nonumber \\ \leq \frac{1}{D-D_2} \sup_{\vec{x} \in \KtwoI} \abs{  \phi \left( \frac{x_2D-x_1\lambda}{D-x_1} \right) } .
\end{eqnarray}
Now
\begin{eqnarray}
\fl \abs{ \frac{x_2D-x_1\lambda}{D-x_1} } = \abs{ \lambda+ D \frac{x_2-\lambda}{D-x_1} } \leq \abs{ \lambda}+D \abs{  \frac{x_2-\lambda}{D-x_1} } \leq \abs{ \lambda} + D \frac{|x_2|+|\lambda|}{D-D_2}  \nonumber \\ \leq |\lambda| + D \frac{b^I+|\lambda|}{D-D_2},
\end{eqnarray}
thus, we can find $j(\bbeta)$ (for $\bbeta=(0,0)$, depending on $I$) and $\Konejbbeta$ such that $\left\{s\in\bR, |s|\leq |\lambda| + D \frac{b^I+|\lambda|}{D-D_2}\right\} \subset \Konejbbeta$ ($\lambda$ is fixed) to obtain
\begin{eqnarray}
\tilde{P}_{2,I,(0,0)}(\psi)\leq \frac{1}{D-D_2} \sup_{s \in \Konejbbeta} |\phi(s)|
= C(\bbeta) \tilde{P}_{1,j(\bbeta),0}(\phi).
\end{eqnarray}

We want to show by induction that the function inside each $\PtwotIbbeta(\psi)$ (defined as in~\eref{eqn:Ptwotibbeta}) is a finite sum of expressions 
\begin{eqnarray}
\psi_{\gammaBd,p_1,p_2}(\vec{x})
=\phi^{(\gammaBd)} \left( \lambda+ D \frac{x_2-\lambda}{D-x_1}  \right) 
\frac{C_{\gammaBd,p_1,p_2}(x_2-\lambda)^{p_2}}{(D-x_1)^{p_1}},
    \label{eqn:expressiontypefanbeam}
\end{eqnarray}
where $0\leq\gammaBd\leq\beta_1+\beta_2$, $1\leq p_1\leq 2\beta_1 + \beta_2 + 1$ and $0 \leq p_2 \leq \beta_1$.

It's obvious for $\bbeta=(0,0)$ with $p_1=1$ and $p_2=0$. Then we consider:
\begin{itemize}
\item The case $\bbeta=(1,0)$:
\begin{eqnarray}
\fl \partial^{1}_1 \psi(\vec{x})
= \phi' \left( \lambda+ D \frac{x_2-\lambda}{D-x_1}  \right) \frac{D(x_2-\lambda)}{(D-x_1)^3}+\phi \left( \lambda+ D \frac{x_2-\lambda}{D-x_1}  \right) \frac{1}{(D-x_1)^2},
\end{eqnarray}
here $\gamma\leq 1(\leq\beta_1+\beta_2 =1)$, $p_1\leq 3 (\leq 2\beta_1+\beta_2+1 =3)$, $p_2 \leq 1 (\leq \beta_1=1)$.

\item The case $\bbeta=(0,1)$:
\begin{eqnarray}
    \partial^{1}_2 \psi(\vec{x})= \phi' \left( \lambda+ D \frac{x_2-\lambda}{D-x_1}  \right) \frac{D}{(D-x_1)^2},
\end{eqnarray}
here $\gamma = 1(\leq\beta_1+\beta_2 =1)$, $p_1=2 (\leq 2\beta_1+\beta_2+1 =2)$, $p_2 = 0(\leq \beta_1=0)$.
\end{itemize}

The induction step:
\begin{itemize}

\item We differentiate \eref{eqn:expressiontypefanbeam} according to the first variable and we again have the finite sum of expressions of the type \eref{eqn:expressiontypefanbeam}:
\begin{eqnarray}
\fl \partial^{1}_1 \psi_{\gammaBd,p_1,p_2}(\vec{x})
=\phi^{(\gammaBd+1)} \left( \lambda+ D \frac{x_2-\lambda}{D-x_1}  \right) 
{D} \frac{C_{\gammaBd,p_1,p_2}(x_2-\lambda)^{p_2+1}}{(D-x_1)^{p_1 + 2}} \nonumber \\
+\phi^{(\gammaBd)} \left( \lambda+ D \frac{x_2-\lambda}{D-x_1}  \right) \frac{C_{\gammaBd,p_1,p_2}p_1(x_2-\lambda)^{p_2}}{(D-x_1)^{p_1+1}}. 
\end{eqnarray}
If $\gammaBd\leq\beta_1+\beta_2$, then $\gammaBd<\gammaBd+1 \leq (\beta_1+1)+\beta_2$; 
if  $1\leq p_1\leq 2\beta_1 + \beta_2 + 1$, then $1\leq p_1+1< p_1+2 \leq 2(\beta_1+1) + \beta_2 + 1$; 
if $0 \leq p_2 \leq \beta_1$, then  $0 \leq p_2 < p_2+1 \leq \beta_1+1$.

\item We differentiate \eref{eqn:expressiontypefanbeam} according to the second variable and we again have the finite sum of expressions of the type \eref{eqn:expressiontypefanbeam}:
\begin{eqnarray}
\fl \partial^{1}_2 \psi_{\gammaBd,p_1,p_2}(\vec{x})= \phi^{(\gammaBd+1)} \left( \lambda+ D \frac{x_2-\lambda}{D-x_1}  \right) \frac{DC_{\gammaBd,p_1,p_2}(x_2-\lambda)^{p_2}}{(D-x_1)^{p_1+1}} \nonumber \\+\phi^{(\gammaBd)} \left( \lambda+ D \frac{x_2-\lambda}{D-x_1}  \right) \frac{C_{\gammaBd,p_1,p_2}p_2(x_2-\lambda)^{p_2-1}}{(D-x_1)^{p_1}}.
\end{eqnarray}
If $\gammaBd\leq\beta_1+\beta_2$, then $\gammaBd<\gammaBd+1 \leq \beta_1+(\beta_2+1)$; 
if  $1\leq p_1\leq 2\beta_1 + \beta_2 + 1$, then $1\leq p_1< p_1+1 \leq 2\beta_1 + (\beta_2+1) + 1$; 
if $0 \leq p_2 \leq \beta_1$, then  $0 \leq p_2-1 < p_2 \leq \beta_1$.

\end{itemize}

For $\KtwoI$ we can find $\Konejbbeta$ with  all $\gammaBd \leq j(\bbeta)$ such that
\begin{eqnarray}
\fl \PtwotIbbeta (\psi) =  \sup_{\vec{x} \in \KtwoI} \abs{ \sum_{\gammaBd,p_1,p_2} \psi_{\gammaBd,p_1,p_2} (\vec{x})  }\nonumber \\
= \sup_{\vec{x} \in \KtwoI} \abs{ \sum_{\gammaBd,p_1,p_2} \phi^{(\gammaBd)} \left( \lambda+ D \frac{x_2-\lambda}{D-x_1}  \right) \frac{C_{\gammaBd,p_1,p_2}(x_2-\lambda)^{p_2}}{(D-x_1)^{p_1}} } \nonumber \\ 
\leq \sum_{\gammaBd,p_1,p_2}  \sup_{\vec{x} \in \KtwoI} \abs{ \phi^{(\gammaBd)} \left( \lambda+ D \frac{x_2-\lambda}{D-x_1}  \right) \frac{C_{\gammaBd,p_1,p_2}(x_2-\lambda)^{p_2}}{(D-x_1)^{p_1}}  } \nonumber \\  
\leq \sum_{\gammaBd,p_1,p_2} \sup_{s \in \Konejbbeta} \abs{\phi^{(\gammaBd)} (s)} \frac{C_{\gammaBd,p_1,p_2}(b^I+|\lambda|)^{p_2}}{(D-D_2)^{p_1}}  \nonumber \\ 
\leq C(\bbeta) \sum_{\gammaBd} \sup_{s \in \Konejbbeta} |\phi^{(\gammaBd)} (s)| = C(\bbeta) \sum_{\gammaBd} \PonetJgamma(\phi). 
\end{eqnarray}
This yields the proof.
\end{proof}

\section*{References}
\bibliography{myref}

\end{document}